\documentclass[11pt]{amsart}
\reversemarginpar
\pagestyle{plain}
\setlength{\textwidth}{16.5truecm}
\setlength{\textheight}{22.25truecm}
\setlength{\topmargin}{-.5truecm}
\setlength{\oddsidemargin}{0cm}
\setlength{\evensidemargin}{\oddsidemargin}

\usepackage{amsmath}
\usepackage{amssymb}
\usepackage{graphicx}
\usepackage{graphicx}
\usepackage{dcolumn}
\usepackage{bm}
\usepackage{amsfonts}
\usepackage{latexsym}
\usepackage{color}

\begin{document}
\newcommand{\commentout}[1]{}

\newcommand{\nwc}{\newcommand}
\newcommand{\bz}{{\mathbf z}}
\newcommand{\sqk}{\sqrt{\ks}}
\newcommand{\sqkone}{\sqrt{|\k\alpha|}}
\newcommand{\sqktwo}{\sqrt{|\k\beta|}}
\newcommand{\invsqkone}{|\k\alpha|^{-1/2}}
\newcommand{\invsqktwo}{|\k\beta|^{-1/2}}
\newcommand{\partz}{\frac{\partial}{\partial z}}
\newcommand{\grady}{\nabla_{\by}}
\newcommand{\gradp}{\nabla_{\bp}}
\newcommand{\gradx}{\nabla_{\bx}}
\newcommand{\invf}{\cF^{-1}_2}
\newcommand{\myphi}{\hat\br_{(\eta,\rho)}}
\newcommand{\minrg}{|\min{(\rho,\phi^{-1})}|}
\newcommand{\al}{\alpha}
\newcommand{\xvec}{\vec{\mathbf x}}
\newcommand{\kvec}{{\vec{\mathbf k}}}
\newcommand{\lt}{\left}
\newcommand{\ksq}{\sqrt{\ks}}
\newcommand{\rt}{\right}
\newcommand{\ga}{\phi}
\newcommand{\vas}{\varepsilon}
\newcommand{\lan}{\left\langle}
\newcommand{\ran}{\right\rangle}
\newcommand{\tvas}{{W_z^\vas}}
\newcommand{\psiep}{{W_z^\vas}}
\newcommand{\wep}{{W^\vas}}
\newcommand{\weptil}{{\tilde{W}^\vas}}
\newcommand{\wepz}{{W_z^\vas}}
\newcommand{\weps}{{W_s^\ep}}
\newcommand{\wepsp}{{W_s^{\ep'}}}
\newcommand{\wepzp}{{W_z^{\vas'}}}
\newcommand{\wepztil}{{\tilde{W}_z^\vas}}
\newcommand{\vvas}{{\tilde{\ml L}_z^\vas}}
\newcommand{\veptil}{{\tilde{\ml L}_z^\vas}}
\newcommand{\vep}{{{ V}_z^\vas}}
\newcommand{\cvc}{{{\ml L}^{\ep*}_z}}
\newcommand{\cvcp}{{{\ml L}^{\ep*'}_z}}
\newcommand{\cvp}{{{\ml L}^{\ep*'}_z}}
\newcommand{\cvtil}{{\tilde{\ml L}^{\ep*}_z}}
\newcommand{\cvtilp}{{\tilde{\ml L}^{\ep*'}_z}}
\newcommand{\vtil}{{\tilde{V}^\ep_z}}
\newcommand{\ktil}{\tilde{K}}
\newcommand{\n}{\nabla}
\newcommand{\tkappa}{\tilde\kappa}
\newcommand{\Om}{{\Omega}}
\newcommand{\bx}{\mb x}
\nwc{\bv}{\mb v}
\newcommand{\br}{\mb r}
\nwc{\bH}{{\mb H}}
\newcommand{\bu}{\mathbf u}
\nwc{\bxp}{{{\mathbf x}}}
\nwc{\byp}{{{\mathbf y}}}
\newcommand{\bD}{\mathbf D}
\nwc{\bS}{\mathbf S}
\newcommand{\bA}{\mathbf \Phi}
\nwc{\bPhi}{\mathbf \Phi}
\nwc{\bd}{{\mathbf d}}
\nwc{\cO}{\mathcal{O}}
\nwc{\co}{\mathcal{o}}
\nwc{\bG}{{\mathbf G}}
\nwc{\bF}{{\mathbf F}}
\nwc{\bR}{{\mathbf R}}
\nwc{\bh}{\mathbf h}
\newcommand{\bB}{\mathbf B}
\newcommand{\bC}{\mathbf C}
\newcommand{\bp}{\mathbf p}
\newcommand{\bq}{\mathbf q}
\newcommand{\by}{\mathbf y}
\nwc{\bI}{\mathbf I}
\nwc{\bP}{\mathbf P}
\nwc{\bs}{\mathbf s}
\nwc{\bX}{\mathbf X}
\newcommand{\pdg}{\bp\cdot\nabla}
\newcommand{\pdgx}{\bp\cdot\nabla_\bx}
\newcommand{\one}{1\hspace{-4.4pt}1}
\newcommand{\corr}{r_{\eta,\rho}}
\newcommand{\rinf}{r_{\eta,\infty}}
\newcommand{\rzero}{r_{0,\rho}}
\newcommand{\rzeroinf}{r_{0,\infty}}
\nwc{\om}{\omega}
\nwc{\thetatil}{{\tilde\theta}}

\nwc{\nwt}{\newtheorem}
\nwc{\xp}{{x^{\perp}}}
\nwc{\yp}{{y^{\perp}}}
\nwt{remark}{Remark}
\nwt{corollary} {Corollary}
\nwt{definition}{Definition} 

\nwc{\ba}{{\mb a}}
\nwc{\bal}{\begin{align}}
\nwc{\be}{\begin{equation}}
\nwc{\ben}{\begin{equation*}}
\nwc{\bea}{\begin{eqnarray}}
\nwc{\beq}{\begin{eqnarray}}
\nwc{\bean}{\begin{eqnarray*}}
\nwc{\beqn}{\begin{eqnarray*}}
\nwc{\beqast}{\begin{eqnarray*}}

\nwc{\eal}{\end{align}}
\nwc{\ee}{\end{equation}}
\nwc{\een}{\end{equation*}}
\nwc{\eea}{\end{eqnarray}}
\nwc{\eeq}{\end{eqnarray}}
\nwc{\eean}{\end{eqnarray*}}
\nwc{\eeqn}{\end{eqnarray*}}
\nwc{\eeqast}{\end{eqnarray*}}

\nwc{\ep}{\varepsilon}
\nwc{\eps}{\varepsilon}
\nwc{\ept}{\ep }
\nwc{\vrho}{\varrho}
\nwc{\orho}{\bar\varrho}
\nwc{\ou}{\bar u}
\nwc{\vpsi}{\varpsi}
\nwc{\lamb}{\lambda}
\nwc{\Var}{{\rm Var}}

\nwt{proposition}{Proposition}
\nwt{theorem}{Theorem}
\nwt{summary}{Summary}
\nwc{\nn}{\nonumber}
\nwc{\mf}{\mathbf}
\nwc{\mb}{\mathbf}
\nwc{\ml}{\mathcal}

\nwc{\IA}{\mathbb{A}} 
\nwc{\bi}{\mathbf i}
\nwc{\bo}{\mathbf o}
\nwc{\IB}{\mathbb{B}}
\nwc{\IC}{\mathbb{C}} 
\nwc{\ID}{\mathbb{D}} 
\nwc{\IM}{\mathbb{M}} 
\nwc{\IP}{\mathbb{P}} 
\nwc{\II}{\mathbb{I}} 
\nwc{\IE}{\mathbb{E}} 
\nwc{\IF}{\mathbb{F}} 
\nwc{\IG}{\mathbb{G}} 
\nwc{\IN}{\mathbb{N}} 
\nwc{\IQ}{\mathbb{Q}} 
\nwc{\IR}{\mathbb{R}} 
\nwc{\IT}{\mathbb{T}} 
\nwc{\IZ}{\mathbb{Z}} 
\nwc{\pdfi}{{f^{\rm i}}}
\nwc{\pdfs}{{f^{\rm s}}}
\nwc{\pdfii}{{f_1^{\rm i}}}
\nwc{\pdfsi}{{f_1^{\rm s}}}
\nwc{\chis}{{\chi^{\rm s}}}
\nwc{\chii}{{\chi^{\rm i}}}
\nwc{\cE}{{\ml E}}
\nwc{\cP}{{\ml P}}
\nwc{\cQ}{{\ml Q}}
\nwc{\cL}{{\ml L}}
\nwc{\cX}{{\ml X}}
\nwc{\cW}{{\ml W}}
\nwc{\cZ}{{\ml Z}}
\nwc{\cR}{{\ml R}}
\nwc{\cV}{{\ml V}}
\nwc{\cT}{{\ml T}}
\nwc{\crV}{{\ml L}_{(\delta,\rho)}}
\nwc{\cC}{{\ml C}}
\nwc{\cA}{{\ml A}}
\nwc{\cK}{{\ml K}}
\nwc{\cB}{{\ml B}}
\nwc{\cD}{{\ml D}}
\nwc{\cF}{{\ml F}}
\nwc{\cS}{{\ml S}}
\nwc{\cM}{{\ml M}}
\nwc{\cG}{{\ml G}}
\nwc{\cH}{{\ml H}}
\nwc{\bk}{{\mb k}}
\nwc{\cbz}{\overline{\cB}_z}
\nwc{\supp}{{\hbox{\rm supp}(\theta)}}
\nwc{\fR}{\mathfrak{R}}
\nwc{\bY}{\mathbf Y}
\newcommand{\mbr}{\mb r}
\nwc{\pft}{\cF^{-1}_2}
\nwc{\bU}{{\mb U}}

\title{Compressive Inverse Scattering I. High Frequency SIMO/MIMO Measurements}
\author{Albert C.  Fannjiang}
\email{
fannjiang@math.ucdavis.edu}

       \address{
   Department of Mathematics,
    University of California, Davis, CA 95616-8633}
   
       \begin{abstract}
Inverse scattering from discrete  targets
with the single-input-multiple-output (SIMO), multiple-input-single-output (MISO) or 
multiple-input-multiple-output (MIMO) measurements is analyzed
 by
compressed sensing theory with and without the Born 
approximation. 

High frequency analysis of  (probabilistic) recoverability by the $L^1$-based minimization/regularization principles  is presented. In the absence of noise, it is shown
that  the $L^1$-based solution  can recover exactly the target of sparsity up to the dimension of the data  either with 
the MIMO measurement
for the Born scattering  or  with the SIMO/MISO measurement for the exact scattering. 
The stability with respect to noisy data is proved
for weak or widely separated scatterers. Reciprocity between
the SIMO and MISO measurements is analyzed.  Finally  a
coherence bound (and the resulting  recoverability)  is proved
for 
diffraction tomography with high-frequency, few-view and limited-angle  SIMO/MISO measurements.

       \end{abstract}
       
       \maketitle
       
     \section{Introduction}
          
      
  \begin{figure}[t]
\begin{center}
\includegraphics[width=0.5\textwidth]{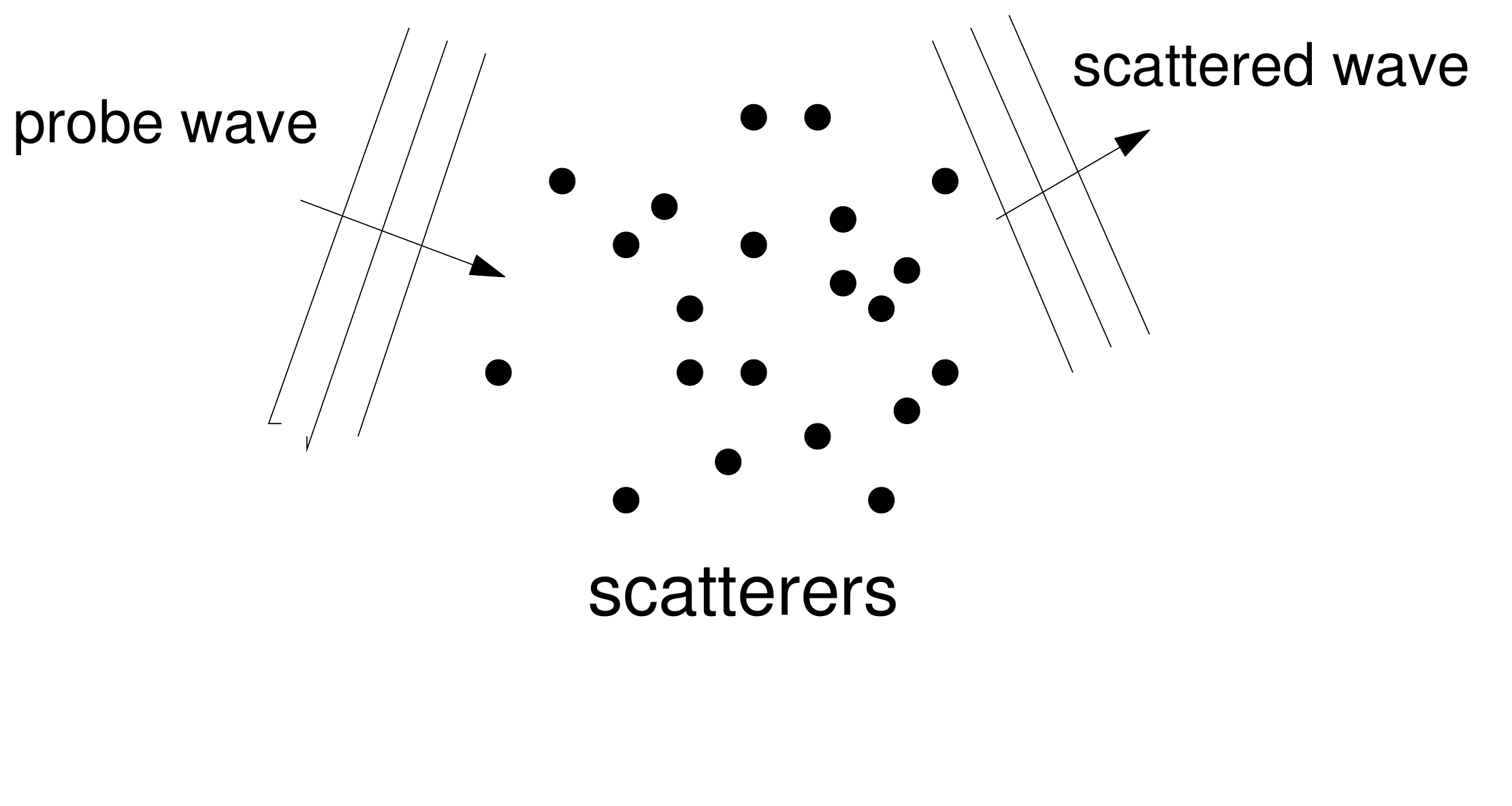}
\end{center}
\caption{Far-field  imaging geometry}
\end{figure}
\commentout{
\begin{figure}[t]
\begin{center}
\includegraphics[width=0.45\textwidth]{1f-figures/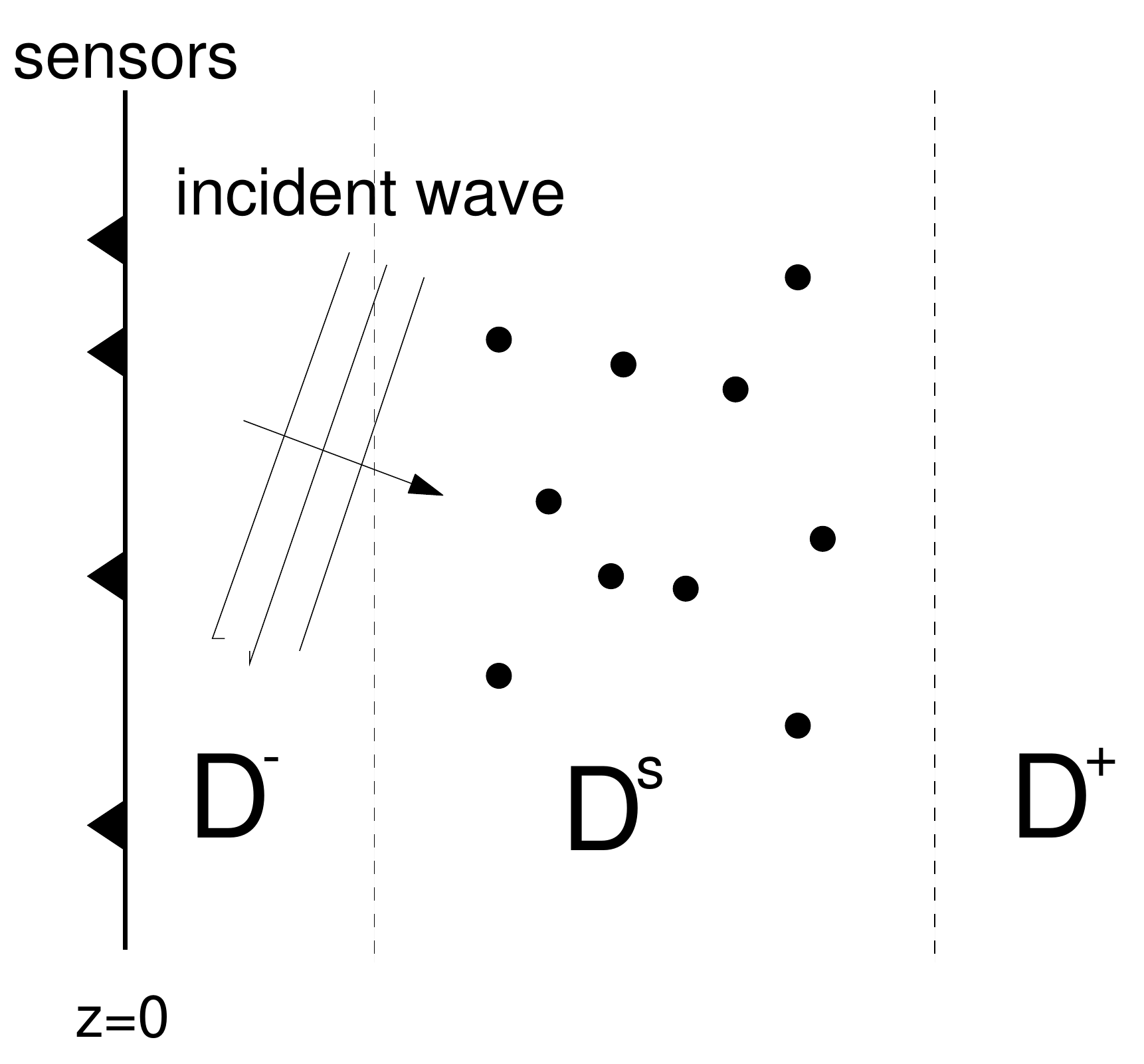}
\end{center}
\caption{Near-field imaging geometry}
\label{fig-dt}
\end{figure}
}
     
A monochromatic  wave $u$
propagating in a heterogeneous medium characterized by
a variable refractive index $n=\sqrt{1+\nu}$ is governed by the  Helmholtz
  equation 
  \beq
  \label{helm}
  \Delta u(\br)+\om^2(1+\nu(\br)) u(\br)=0
  \eeq 
  where $\nu\in \IC$ describes the medium inhomogeneities. 
 For simplicity,   the wave velocity is assumed to be unity and 
 hence the wavenumber $\om$ equals the frequency.

 Consider the plane wave incidence
 \beq
 \label{inc}
 u^{\rm i}(\br)=e^{i\om\br\cdot \bd}
 \eeq
 where $\bd\in S^{d-1}, d=2,3, $ is the incident direction. 
The scattered field $u^{\rm s}=u-u^{\rm i}$
then satisfies 
\beq
\label{scattered}
\Delta u^{\rm s}+\om^2 u^{\rm s}=-\om^2\nu u 
\eeq
which can be written as the Lippmann-Schwinger equation:
\beq
\label{exact'}
u^{\rm s}(\br)&=&\om^2\int_{\IR^d} \nu(\br') 
\lt(u^{\rm i}(\br')+u^{\rm s}(\br')\rt) G(\br, \br')d\br' 
\eeq
where $G$ is  the Green function for the operator $-(\Delta+\om^2)$ (see Appendix \ref{recipro} for the reciprocal
formulation).

The scattered field has
the far-field asymptotic \cite{Mel}
\beq
u^{\rm s}(\br)={e^{i\om |\br|}\over |\br|^{(d-1)/2}}\lt(A
(\hat\br,\bd)+\cO\lt({1\over |\br| }\rt)\rt),\quad\hat\br=\br/|\br|,\quad
d=2,3
\eeq
where the scattering amplitude $A$ is  determined by the formula  \cite{RS}
\beq
\label{sa}
A(\hat\br,\bd)&=&{\om^2\over 4\pi}
\int_{\IR^d}  \nu(\br') u(\br') e^{-i\om\br'\cdot\hat\br}d\br'. 
\eeq
In the inverse scattering theory, the scattering amplitude is the observable data and 
the main objective  then is to reconstruct
 $\nu$ from the knowledge
of the scattering amplitude.  In this paper,  we  use the $L^1$-minimization principle  called the {\em Basis Pursuit} to
study the inverse scattering from 
 point  scatterers. Note that
since $u$ in (\ref{sa}) is part of the unknown, 
the inverse scattering problem  is nonlinear. Physically speaking,
the nonlinearity is the consequence of multiple scattering
among the different scatterers.

The  standard   
theory of  inverse scattering asserts   the injectivity  of
the mapping from  $\nu\in C^1_c$ with a nonnegative imaginary part    to 
the corresponding scattering amplitude  
for a fixed frequency in {\em three}  (or higher) dimensions (Theorem 5.5 of
\cite{CCM}. See also \cite{HN, Mel, Nov2, Nov3, Ram} for similar results, \cite{KV, Nac, Nac2, SU} for inverse boundary-value problem and
 \cite{CK,  LP, Maj, Mel} for inverse obstacle scattering
). 
 In this case, the refractive index  can be determined uniquely by 
the full knowledge of $A(\hat\br,\bd), \forall \bd,\hat\br$, for a fixed $\om$.    Indeed, as 
$A$ is analytic in both $\bd$ and $\hat\br$, it suffices
to know $A$ for a countably many incident  and sampling directions 
in order to determine $\nu$ uniquely. As far as we know, the
uniqueness in two dimensions with a fixed frequency is still an open question. What is known for two dimensions is that
the uniqueness  holds if the scattering amplitude is given  for an interval of frequencies \cite{CCM}.

Of obvious theoretical interest, the uniqueness result by itself 
is of limited practical interest. All existing methods for determining the refractive index without linearizing 
the problem are based on the constrained nonlinear optimization in the $L^2$-norm for which
the exact recoverability is usually difficult to establish, especially
in the case of undersampling \cite{CCM, CK}. In this paper we show that a target is the unique, global
minimizer of an optimization principle
based on the $L^1$-norm if the target satisfies certain sparsity
constraint. Moreover, this $L^1$-minimization problem
can be effectively solved by linear programming as well
as various low-complexity greedy algorithms. 

\commentout{
such as the following one 
\beq
\min\sum_{j=1}^p\|A(\cdot, \bd_j)-
\cF u_j\|_2^2+\|u_j-u^{\rm i}_j-\om^2 \cL_\nu u_j\|^2_{L^2(D)}
\eeq
where $\cF$ is the far-field operator and
\[
\cL_\nu u_j(\br)=\int \nu(\br') u_j(\br')G(\br,\br')d\br'
\]
}

\commentout{
\begin{proposition} 
In three (or higher)  dimensions, the map from
real-valued function $\nu\in C^\infty_c$ to the scattering matrix is injective
\end{proposition}
}

 \commentout{
  A classical result on uniqueness  of  inverse scattering due to Schiffer \cite{LP} reads like this.
  
  \begin{proposition} 
  Assume that $D_1$ and $D_2$ are two sound-soft scatterers
  such that their far field patterns coincide for an infinite number
  of incident plane waves with distinct directions
  and one fixed frequency. Then $D_1=D_2$.
  \end{proposition}
  If the obstacle is contained in a ball, then finitely many incident
  plane waves and the resulting far-field patterns suffice to
  determine the support of the scatterer \cite{CS, Gin}.
   A long standing problem is if the far-field pattern for one single incident direction with one fixed frequency determines
  a sound-soft scatterer without any additional {\em a priori}
  information \cite{CK}. 
  Some progress has been obtained 
  regarding uniqueness with one or two  incident plane waves
  for polyhedral scatterers  \cite{AR, CY, EY, LZ}. 
}

 In this paper we focus on  the 
 two dimensional setting ($\br=(x,z)\in \IR^2$)
 for the aforementioned reason as well as 
 the notational simplicity. Although the details
 of the results are dimension-dependent,
our approach is not limited to higher dimensions.  
We discuss the three dimensional case  briefly  in Sections \ref{sec:3} and \ref{sec:near2}. 

Consider the medium with point scatterers located in 
 a square lattice 
$
\cL=\lt\{\br_i=(x_i,z_i): i=1,...,m\rt\}
$
of spacing  $\ell$. The total number $m$ of grid points
in $\cL$  is  a perfect square. 
Without loss of generality, assume $x_j=j_1\ell, z_j=j_2\ell $
where $j=(j_1-1)\sqrt{m}+j_2$ and $j_1, j_2=1,...,\sqrt{m}$.
Let  $\nu_{j}, j=1,...,m$ be the strength of
the scatterers.  Let  $
\cS=\lt\{\br_{i_j}=(x_{i_j}, z_{i_j}): j=1,...,s\rt\}$
be the
locations of the scatterers. Hence $\nu_j=0, \forall \br_j\not \in \cS$. When there is no risk of confusion,  we shall write
$\nu=(\nu_j)$ in the sequel. 

The scattering amplitude
for this medium is 
 a finite sum 
\beq
\label{10.11}
A(\hat\br,\bd)&=&{\om^2\over 4\pi}
\sum_{j=1}^m \nu_{j} u(\br_{j})
 e^{-i\om \br_j\cdot\hat\br}. 
\eeq
Moreover,  in analogy to (\ref{exact'}), the exciting field $u(\br_{i_j})$ satisfies
the Foldy-Lax equation \cite{TKDA, Mis} 
\beq
\label{fl}
u(\br_{i_l})&=&u^{\rm i}(\br_{i_l})+\om^2\sum_{l\neq j}G(\br_{i_l},\br_{i_j}) \nu_{i_j}
u(\br_{i_j}),\quad l=1,...,s
 \eeq
where all the  multiple scattering effects are included but the  self field is excluded to  avoid blow-up. 
\commentout{
In this paper we consider the far-field measurement
geometry as in Figure \ref{fig-dt} where the recorded data consist of the amplitudes
of the scattered plane waves in various  directions.
In the near-field measurement, the recorded data 
consist of the scattered field at the randomly selected
positions (see Figure \ref{fig-dt}).  
In inverse scattering the measured data 
are  the homogeneous  wave amplitudes
in the direction of $\tilde\theta_l, l=1,...,n$
\beq
\label{10.11}
u^{\rm s}(\tilde\theta_l)=\sum_{j=1}^s \nu_{i_j} u(\br_{i_j})
e^{i\om(\beta_l z_{i_j}-\alpha_l x_{i_j})},\quad \alpha_l=\cos{\tilde\theta_l},\quad\beta_l=\sin{\tilde\theta_l}. 
\eeq
}

\section{Methods and  results}
\subsection{MIMO  Born scattering}
First consider the Born approximation to (\ref{10.11}).  In   the Born approximation (also known as
   Rayleigh-Gans scattering in optics), the exciting
   field $u(\br_{i_j})$  is replaced 
   by the incident field $u^{\rm i}(\br_{i_j})$. 
   This  approximation  linearizes the relation between the 
  the target strength and the scattering amplitude and
  is valid for  sufficiently weak or
   widely separated scatterers. 
  
  For  the Born scattering, we define the target vector  $X=\nu \in \IC^m$ and use 
    multiple incident  waves 
  \beq
  \label{120}
 u^{\rm i}_k(\br)=e^{i\om (z\sin{\theta_k}+x\cos{\theta_k})},
 \quad k=1,...,p 
 \eeq
 where  $\theta_k$ is the incident  angle of the $k$-th probe wave. 
Throughout the paper we consider the single-input-multiple-output (SIMO), multiple-input-single-output (MISO) and  the multiple-input-multiple-output (MIMO) measurements  in which 
for each incident angle $\theta_k$  the resulting scattering amplitude is
measured at the multiple  sampling angles $\tilde\theta_l, l=1,...,n$.
After normalization  by $\om^2/(4\pi)$, the totality of the collected data forms 
the   measurement vector $Y\in \IC^{pn}$.  The corresponding sensing matrix in the linear relationship 
  $Y=\bPhi X$ has  the $(n(k-1)+l,j)$-entry \beq
\label{14.3}
e^{-i\om(z_j\sin{\tilde\theta_l}+x_j\cos{\tilde\theta_l})}u^{\rm i}_k (\br_j).
\eeq
where $\tilde\theta_l$ is the sampling angle of the $l$-th sensor.

Recent breakthrough in compressed
sensing has established  the insight  that the target can be recovered exactly with nearly minimum 
sensing resources by 
the $L^1$-minimization principle, called  basis pursuit (BP)\begin{equation}
\min \|X\|_1 \qquad \text{s.t.} \,\, \bPhi X=Y  
\label{L1}
\end{equation}
if the target is sufficiently sparse and
 the matrix $\bPhi$ satisfies either  the incoherence property
or the restricted isometry property \cite{BDE, CRT1, CRT2, CT, CT2,  CDS}. 
The $L^1$-minimization problem (\ref{L1})  can be solved by linear  programming \cite{BV, CT, CDS} or by  various  greedy
algorithms \cite{DM, NTV, Tro}.

In this paper we adopt  the incoherence approach 
to analyzing the SIMO/MISO and MIMO inverse scattering problems. 
Previously we have shown in \cite{subwave-cs}
that  suitably designed SIMO and MIMO measurements 
with {\em  planar} domains  (in three dimensions)
produce  random partial  Fourier matrix as the
sensing matrix which possesses a nearly optimal
 restricted isometry constant (RIC) with respect to the
sparsity of the target. Current  compressed sensing theory
predicts that  (\ref{L1}) yields a  superior performance \cite{Can, CT2}. However, with  non-planar domains 
the SIMO/MIMO measurements seem  to produce a rather poor  RIC.
Hence we adopt the alternative approach of incoherence in this paper. 
The restricted isometry approach will be taken
up in Part II for multi-shot SISO (single-input-single-output)
measurements  with non-planar domains for which
the framework of  random partial Fourier matrix can
be restored by special sampling schemes. 

To the best of our knowledge the present paper is the first rigorous study of  inverse scattering, including multiple scattering, in the framework of compressed sensing. Earlier studies \cite{Car2, Her,  Mar3, Mar2, PL, YL} of related problems
largely  take the compressed sensing theory for granted
and assume (explicitly or implicitly) either the incoherence or
 the restricted isometry property without proof.

Let us state the perhaps simplest criterion for exact  recoverability of the incoherence approach. 
\begin{proposition} \cite{DE,GN}
\label{thm:spark}
BP reconstructs perfectly any target $X$ of sparsity 
\beq
\label{spark3}
s\leq  {1\over 2} \lt({1\over \mu(\bPhi)}+1\rt)
\eeq
where the sparsity  $s=\|X\|_0$ is
the number of nonzero components in $X$ and
the coherence parameter $\mu(\bPhi)$ is defined as
\[
\mu(\bPhi)=\max_{i\neq j} {\lt|\sum_{l} \Phi_{li}\Phi^*_{lj}\rt|\over
\sqrt{\sum_{l}|\Phi_{li}|^2\sum_{l} |\Phi_{lj}|^2}} . 
\]
\end{proposition}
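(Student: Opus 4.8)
The plan is to recast exact recovery as a \emph{null space property} and then derive that property directly from the coherence bound. Let $S=\mathrm{supp}(X)$ with $|S|=s$, and write an arbitrary feasible point of (\ref{L1}) as $X+h$ with $\bPhi h=0$. Since $X$ is supported on $S$, the triangle inequality splits as
\[
\|X+h\|_1=\|X_S+h_S\|_1+\|h_{S^c}\|_1\geq \|X\|_1-\|h_S\|_1+\|h_{S^c}\|_1,
\]
so it suffices to prove $\|h_S\|_1<\|h_{S^c}\|_1$ for every nonzero $h\in\ker\bPhi$; this forces $\|X+h\|_1>\|X\|_1$ and hence makes $X$ the unique minimizer of (\ref{L1}).

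The heart of the argument is a pointwise bound on null vectors. I would first normalize the columns to unit $\ell_2$-norm: in the present setting this is automatic up to a single global scalar, because every entry (\ref{14.3}) has modulus one and hence every column of $\bPhi$ has norm $\sqrt{pn}$, leaving (\ref{L1}) unchanged. With unit-norm columns the Gram matrix $\bPhi^*\bPhi$ has unit diagonal and off-diagonal entries bounded in modulus by $\mu(\bPhi)$. For $h\in\ker\bPhi$ one has $\bPhi^*\bPhi\,h=0$; reading off the $i$-th coordinate gives $h_i=-\sum_{j\neq i}(\bPhi^*\bPhi)_{ij}h_j$, whence
\[
|h_i|\leq \mu(\bPhi)\sum_{j\neq i}|h_j|=\mu(\bPhi)\bigl(\|h\|_1-|h_i|\bigr),
\qquad\text{so}\qquad
|h_i|\leq \frac{\mu(\bPhi)}{1+\mu(\bPhi)}\,\|h\|_1 .
\]

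Summing this estimate over the $s$ indices of $S$ yields $\|h_S\|_1\leq s\,\mu(\bPhi)\,\|h\|_1/(1+\mu(\bPhi))$. The sparsity hypothesis (\ref{spark3}) is exactly $s\leq \tfrac12\bigl(1+\mu(\bPhi)\bigr)/\mu(\bPhi)$, which gives $s\,\mu(\bPhi)/(1+\mu(\bPhi))\leq \tfrac12$ and therefore $\|h_S\|_1\leq\tfrac12\|h\|_1=\tfrac12(\|h_S\|_1+\|h_{S^c}\|_1)$, i.e. $\|h_S\|_1\leq\|h_{S^c}\|_1$. The one delicate point, and the main obstacle, is upgrading this to the \emph{strict} inequality needed for uniqueness at the extreme of (\ref{spark3}): equality throughout would require $|(\bPhi^*\bPhi)_{ij}|=\mu(\bPhi)$ together with phase alignment of all the $h_j$ at every coordinate of $S$ simultaneously, a rigidity that a nonzero null vector cannot meet once the number of columns exceeds the spark. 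Equivalently, one may route through the Gershgorin estimate $\mathrm{spark}(\bPhi)\geq 1+1/\mu(\bPhi)$, which already makes the $s$-sparse solution the unique sparsest one, and then invoke the exact recovery condition $\|(\bPhi_S^*\bPhi_S)^{-1}\bPhi_S^*\phi_j\|_1<1$ for $j\notin S$; I nonetheless prefer the null-space computation above, since it reproduces the constant in (\ref{spark3}) with no loss.
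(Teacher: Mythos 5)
The paper offers no proof of this proposition; it is quoted verbatim from \cite{DE,GN}, so the only comparison available is with the arguments in those references. Your null-space-property route is the standard modern one and is arguably cleaner than the originals (Donoho--Elad go through the spark and an exact-recovery coefficient; Gribonval--Nielsen run a direct $\ell_1$ comparison over competing representations). The core computation --- $h_i=-\sum_{j\neq i}(\bPhi^*\bPhi)_{ij}h_j$ for $h\in\ker\bPhi$ with normalized columns, hence $|h_i|\le\mu(\bPhi)\bigl(\|h\|_1-|h_i|\bigr)$ and $\|h_S\|_1\le \frac{s\,\mu(\bPhi)}{1+\mu(\bPhi)}\|h\|_1$ --- is correct, and for any nonzero null vector it yields the \emph{strict} null space property, hence unique recovery, whenever $s<\frac12\lt(1+1/\mu(\bPhi)\rt)$. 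That is exactly the form in which the cited theorems are actually stated.

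The gap is precisely where you flag it: the boundary case of the non-strict inequality (\ref{spark3}). Your claim that equality throughout is ``a rigidity that a nonzero null vector cannot meet'' is false. Take the simplex frame in $\IR^3$: four unit vectors $v_1,\dots,v_4$ with $\langle v_i,v_j\rangle=-1/3$ for $i\neq j$ and $\sum_i v_i=0$. Then $\mu(\bPhi)=1/3$, so $\frac12\lt(1+1/\mu(\bPhi)\rt)=2$, while $h=(1,1,1,1)$ spans the kernel and attains every inequality in your chain with equality, giving $\|h_S\|_1=\|h_{S^c}\|_1$ for any $S$ with $|S|=2$. For $X=(-1,-1,0,0)$ the entire segment $X+t(1,1,1,1)$, $t\in[0,1]$, consists of $\ell_1$ minimizers (and $(0,0,1,1)$ is a second $2$-sparse solution), so BP does not reconstruct $X$. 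Hence the statement with ``$\le$'' is false as written and no argument can close it; what your computation does prove is the correct version with strict inequality, which is also all the paper ever uses in practice. Your fallback through $\mathrm{spark}(\bPhi)\ge 1+1/\mu(\bPhi)$ hits the same wall, since at equality $2s$ can equal the spark, as in this example. A second, minor point: for a general $\bPhi$, normalizing columns turns the $\ell_1$ objective into a weighted one, so the argument as written needs equal column norms; this is harmless here because the entries of (\ref{14.3}) and (\ref{10-11}) are unimodular, as you note.
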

Proposition \ref{thm:spark} implies that the lower  the
coherence of the sensing matrix is  the more massive the exactly recoverable
target can be.  Under the condition (\ref{spark3}), a simple
greedy algorithm called Orthogonal Matching Pursuit (OMP)
can provably find the minimizer of (\ref{L1}) in at most
$s$ iterations \cite{Tro}. 

To construct sensing matrices of  low coherence
let us define the MIMO-sensor ensemble as follows.  Let the incident  angles $\theta_k, k=1,...,p,$ be  independently and identically  distributed according to
the probability density function  $\pdfi\in C^h([-\pi,\pi])$;
for every  incident angle 
 let the sampling  angles $\thetatil_l, l=1,...,n,$ be  independently and identically  distributed according to the probability density function $\pdfs\in C^h([-\pi,\pi])$ where $h>0$ is the degree of smoothness. 
Define $\hbox{\rm supp}(\pdfi)=\{\theta: \pdfi(\theta)\neq 0\}$
and $\hbox{\rm supp}(\pdfs)=\{\theta: \pdfs(\theta)\neq 0\}$. 
\commentout{
The main assumption on  $\pdfi$ and $\pdfs$ is 
that $\hbox{\rm supp}(\pdfi)$ and $\hbox{\rm supp}(\pdfs)$
are a finite union of disjoint open intervals such that
they vanish at the end points in each open interval $(a,b)$ 
faster than power $d$, i.e. 
\beq
\lim_{\theta\to a^+} {\pdfi (\theta)\over (\theta-a)^h}
=\lim_{\theta\to a^+} {\pdfs (\theta)\over (\theta-a)^h}=0,&&
\lim_{\theta\to b^-} {\pdfi (\theta)\over (\theta-b)^h}
=\lim_{\theta\to b^-} {\pdfs (\theta)\over (\theta-b)^h}=0.
\eeq
}
We call $\theta_*\in [-\pi, \pi]$
a {\em Blind Spot}  
  if there exists a pair
$\br,\br'\in\cL$ such that 
\beq
\label{bs}
\lt|{(\br-\br' )}\cdot (\cos\theta_*,\sin{\theta_*})\rt|=
|\br-\br'|.\eeq
In other words, the set of Blind Spots consists
of all the angles between the $x-$axis and $\br-\br', \forall
\br,\br'\in\cL$. 

In Section \ref{sec:born}, we prove that the following coherence
bound for the sensing matrix with entries (\ref{14.3}).

 \begin{theorem}\label{thm4}
 Let the sensing matrix $\bPhi$ be given according to the sensor ensemble. 
Suppose
\beq
\label{m-2}
m\leq {\delta\over 8} e^{K^2/2},\quad \delta, K>0.
\eeq
Then   the sensing matrix (\ref{14.3}) satisfies the coherence bound
\beq
\label{mut}
\mu(\bPhi) < \lt(
\chi^{\rm i}+{\sqrt{2} K\over \sqrt{p}}\rt)
\lt(\chi^{\rm s}+{\sqrt{2}K\over \sqrt{n}}\rt)
\eeq
 with probability greater than $(1-\delta)^2$
 where in general $\chii$ (resp. $\chis$) satisfies the bound
 \beq
 \label{21-3}
&\chi^{\rm i}\leq {c_t}{ {(1+\om \ell)}^{-1/2}} \|\pdfi\|_{t,\infty},&\\
\hbox{resp.}  &\chi^{\rm s}\leq {c_t}{(1+\om \ell)^{-1/2}} \|\pdfs\|_{t,\infty},&\label{21-4}
 \eeq
 where $\|\cdot\|_{t,\infty}$ is the H\"older norm
 of order $t>1/2$ and the constant $c_t$  depends only on $t$.
 If, however,  $\hbox{\rm supp}(\pdfi)$ (resp. $\hbox{\rm supp}(\pdfs)$) does not
 contains any Blind Spot, then $\chii$ (resp. $\chis$)  satisfies the bound
\beq
\label{21-1}
&\chi^{\rm i}\leq {c_h} (1+\om \ell)^{-h} \|\pdfi\|_{h,\infty},& 
\\
\hbox{resp.}  &\chi^{\rm s}\leq c_h(1+\om \ell)^{-h} \|\pdfs\|_{h,\infty},&  \|\pdfi\|_{h,\infty}=\sum_{|k|\leq h} \lt\|{d^k\over d\theta^k}\pdfi\rt\|_\infty \label{21-2}
 \eeq
 where the constant $c_h$ depends only on  $h$. 

\commentout{
 If, in addition to (\ref{m-1}), the scatterers lie in  a line, then 
\beq
\mu(\bPhi)< {2K^2 \over\sqrt{np}}+{c\over {(\om \ell)}^{3/2}}
\lt({\sqrt{2} K\over \sqrt{p}}+{\sqrt{2}K\over \sqrt{n}}\rt)+{c^2\over( \om \ell)^{3/2}} 
\eeq
for some constant $c>0$ with probability greater than $(1-\delta)^2$. 
\label{thm3'-1}
}
\label{thm3-1}
\end{theorem}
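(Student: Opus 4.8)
The plan is to exploit the product structure that the MIMO sensing matrix (\ref{14.3}) forces on the coherence, thereby splitting the problem into a concentration estimate for two independent empirical averages and a pair of oscillatory-integral bounds that supply the decay in $\om\ell$. First I would note that every entry of $\bPhi$ has modulus one, so $\sum_l|\Phi_{li}|^2=pn$ and the denominator in $\mu(\bPhi)$ equals $pn$. Writing the row index as a pair $(k,l)$ and setting $\br=\br_i-\br_j$, the product $\Phi_{(k,l),i}\Phi^*_{(k,l),j}$ collapses to $\exp(i\om\,\br\cdot(\cos\theta_k,\sin\theta_k))\exp(-i\om\,\br\cdot(\cos\thetatil_l,\sin\thetatil_l))$, in which the incident index $k$ and the sampling index $l$ separate. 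Summing over $(k,l)$ therefore factorizes, and with $\psi=\arg\br$ the $(i,j)$ coherence term becomes the product of $|\frac1p\sum_k e^{i\om|\br|\cos(\theta_k-\psi)}|$ and $|\frac1n\sum_l e^{i\om|\br|\cos(\thetatil_l-\psi)}|$, each depending on the pair only through the difference vector $\br$.

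Next I would set $\chii=\max_\br|\IE\, e^{i\om|\br|\cos(\theta-\psi)}|$ and define $\chis$ analogously with $\pdfs$, the maxima taken over the nonzero lattice differences $\br$. For fixed $\br$ the incident factor is an average of i.i.d.\ unit-modulus variables; splitting into real and imaginary parts (each of range at most $2$) and applying Hoeffding's inequality gives $\IP(|\frac1p\sum_k e^{i\om|\br|\cos(\theta_k-\psi)}-\IE(\cdots)|>\sqrt2 K/\sqrt p)\le 4e^{-K^2/2}$, and likewise for the sampling factor with $p$ replaced by $n$. The key combinatorial saving is that on a $\sqrt m\times\sqrt m$ lattice the differences $\br_i-\br_j$ take at most $4m$ distinct values, and since $\br$ and $-\br$ give the same modulus there are fewer than $2m$ distinct events; a union bound bounds the incident failure probability by $8m\,e^{-K^2/2}$, which is exactly $\le\delta$ under hypothesis (\ref{m-2}), and similarly on the sampling side. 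Because the incident angles $\{\theta_k\}$ and sampling angles $\{\thetatil_l\}$ are independent, the two good events are independent, and on their intersection every coherence term is at most $(\chii+\sqrt2K/\sqrt p)(\chis+\sqrt2 K/\sqrt n)$, giving (\ref{mut}) with probability $>(1-\delta)^2$.

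It then remains to bound $|\IE\, e^{i\om|\br|\cos(\theta-\psi)}|=|\int_{-\pi}^\pi e^{i\om|\br|\cos(\theta-\psi)}\pdfi(\theta)\,d\theta|$ uniformly in $\br$, with large parameter $\lambda=\om|\br|\ge\om\ell$. The phase $\theta\mapsto\cos(\theta-\psi)$ has derivative $-\sin(\theta-\psi)$, whose zeros $\theta=\psi,\psi+\pi$ are precisely the Blind Spots (\ref{bs}) for the direction of $\br$; since the Blind Spots are defined by quantifying over all lattice pairs, the support hypothesis controls every $\br$ at once. When the support of $\pdfi$ contains no Blind Spot the phase is nonstationary there, so $h$-fold integration by parts against the amplitude $\pdfi\in C^h$ yields the decay $(1+\om\ell)^{-h}\|\pdfi\|_{h,\infty}$ of (\ref{21-2}), the $1+$ merely merging this with the trivial bound $\le1$ for small $\om\ell$. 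When a Blind Spot does lie in the support the critical point is nondegenerate (the phase has second derivative of modulus one there), so a van der Corput / stationary-phase estimate gives the slower rate $(1+\om\ell)^{-1/2}$, valid already under Hölder regularity of order $t>1/2$, which is (\ref{21-3})--(\ref{21-4}). Maximizing over the finitely many lattice directions and magnitudes, with the worst case at the nearest-neighbour distance $|\br|=\ell$, absorbs the geometric constants into $c_h$ and $c_t$.

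The routine part is the factorization and the Hoeffding/union-bound bookkeeping. The main obstacle is the oscillatory-integral analysis of the preceding paragraph: obtaining the stationary and nonstationary bounds \emph{uniformly} over all admissible difference vectors $\br$, controlling the boundary contributions in the repeated integration by parts (which is where the smoothness and endpoint behavior of $\pdfi$ on its support enter), and pushing the stationary-phase estimate down to Hölder regularity $t>1/2$ rather than the classical $C^\infty$ setting.
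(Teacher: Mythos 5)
Your proposal is correct and follows essentially the same route as the paper: factorize the pairwise coherence into the incident and sampling empirical averages, control each deviation from its mean by Hoeffding plus a union bound over the $O(m)$ distinct lattice differences (which is exactly where hypothesis (\ref{m-2}) enters), and bound the means $\chi^{\rm i},\chi^{\rm s}$ as Herglotz-type oscillatory integrals whose stationary points of the phase $\cos(\theta-\psi)$ are precisely the Blind Spots, yielding $(1+\om\ell)^{-1/2}$ by stationary phase or $(1+\om\ell)^{-h}$ by nonstationary integration by parts. The only cosmetic difference is that you assemble the product bound directly from the triangle inequality on each factor, whereas the paper expands $UV$ via the identity $UV=(U-\bar U)(V-\bar V)+\bar U(V-\bar V)+\bar V(U-\bar U)+\bar U\bar V$; these are equivalent.
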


\begin{remark}
Theorem \ref{thm4} along with  Proposition \ref{thm:spark} then imply that any target of
sparsity up to 
\beq
\label{spark4}
s\leq {1\over 2}+{1\over 2}\lt(
\chi^{\rm i}+{\sqrt{2} K\over \sqrt{p}}\rt)^{-1}
\lt(\chi^{\rm s}+{\sqrt{2}K\over \sqrt{n}}\rt)^{-1}
\eeq
 can be exactly
recovered by BP. 

If $\om\ell$ is sufficiently large
(which is  the {high frequency} limit
referred to in the title), the dominant term on the right hand side
of (\ref{spark4}) is 
\beq
\label{hf}
{\sqrt{np}\over 4K^2}
\eeq
in view of (\ref{21-3})-(\ref{21-2}). 

Note that the high frequency limit for the Helmholtz equation
is different from that for the Schr\"odigner equation. The high frequency quantum  scattering is essentially linear (without
the Born approximation)
and can be solved by the Radon transform \cite{Mel}. 
\end{remark}

\commentout{
\begin{remark}
\label{rmk:three}
In three dimensions, $\chi^{\rm i}$ and $\chi^{\rm s}$ satisfy
different bounds. Indeed, the extra dimension results in
a faster decay of coherence. This is briefly discussed
in Section \ref{sec:three}.  For brevity, we state here the result for arbitrary distributions $\pdfi, \pdfs\in C^1$: Instead of
(\ref{21-3})-(\ref{21-4}) we have
\beq
&\chi^{\rm i}\leq {c}(1+\om \ell)^{-1}\|\pdfi\|_{1,\infty}&\\
\hbox{resp.}  &\chi^{\rm s}\leq {c}{(1+\om \ell)^{-1}}\|\pdfs\|_{1,\infty}.&
\eeq
Consequently, the asymptotic behavior (\ref{hf})
sets in faster in three dimensions than in two dimension in general. 
\end{remark}
}

To improve the sparsity constraint (\ref{spark3}),
Tropp \cite{Tro2} develops an approach
 in which the recoverability is only 
probabilistic in the following ensemble of targets. 
Let  the {\em target ensemble} consist of target vectors  with at most 
 $s$ non-zero entries  whose
 phases are 
 independently uniformly distributed in $[0,2\pi]$  and
 whose support  indices  are independently and  randomly selected from the index set $\{1,2,...,m\}$. 

The following theorem is a reformulation of results due to Tropp \cite{Tro2}. 
We refer the reader to \cite{cs-par} for the derivation of Proposition \ref{tropp}. 

 \begin{proposition}
 \label{tropp} Assume the matrix $\bPhi$ has all unit columns.
 Let $X$ be drawn from the target ensemble. 
 Assume that  
 \beq
 \label{M}
 \mu^2(\bPhi) s\leq \lt( 8\ln{{m\over \tau }}\rt)^{-1},\quad
 \tau\in (0,1) 
 \eeq
 and that for $q\geq 1$
 \beq
 \label{Op}
 3\lt({q\ln{s}\over 2\ln{{m\over \tau}}}\rt)^{1/2}+{s\over m}\|\bPhi\|_2^2\leq {1\over 4 e^{1/4}}.
 \eeq
 Then  $X$ is the unique solution of BP  with probability $1-2\tau -s^{-q}$. 
 Here $\|\bPhi\|_2$ denotes the spectral norm of $\bPhi$. 
 \end{proposition}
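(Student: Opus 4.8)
The plan is to follow Tropp's dual-certificate argument \cite{Tro2}, decoupling the two independent sources of randomness in the target ensemble --- the random support $T$ and the random phases carried on it --- and controlling each by a separate concentration estimate. First I would invoke the standard sufficient condition for exact BP recovery (the Fuchs/Tropp optimality criterion): writing $\bPhi_T$ for the submatrix of columns indexed by the support $T$ of $X$, $\bPhi_j$ for its $j$-th column, and $\mathbf{z}=\mathrm{sgn}(X_T)$ for the vector of phases of the nonzero entries, $X$ is the \emph{unique} minimizer of BP provided $\bPhi_T$ has full column rank and the dual-certificate inequality
\be
\max_{j\notin T}\lt|\bPhi_j^*\bPhi_T(\bPhi_T^*\bPhi_T)^{-1}\mathbf{z}\rt|<1
\ee
holds. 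The whole proof then reduces to verifying this inequality with the claimed probability.

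Second comes the conditioning step. Conditioning on the random support $T$, I would show that hypothesis (\ref{Op}) forces $\|\bPhi_T^*\bPhi_T-\bI\|_2<1/2$ except on an event of probability at most $s^{-q}$; in particular $\bPhi_T$ is then invertible with $\|(\bPhi_T^*\bPhi_T)^{-1}\|_2\leq 2$. Since the off-diagonal entries of the Gram submatrix are bounded in modulus by $\mu(\bPhi)$, the deviation $\|\bPhi_T^*\bPhi_T-\bI\|_2$ of this random column submatrix from the identity is not controlled by the coherence alone; it requires an operator-norm concentration bound for random submatrices (Rudelson's selector lemma / noncommutative Khintchine), and it is precisely the spectral-norm term $(s/m)\|\bPhi\|_2^2$ in (\ref{Op}), together with the $\sqrt{q\ln s}$ deviation term, that produces the bound $1/2$ and the failure probability $s^{-q}$.

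Third is the phase step. I would fix a \emph{good} support $T$ on which $\bPhi_T$ is well conditioned and set $\mathbf{v}_j=(\bPhi_T^*\bPhi_T)^{-1}\bPhi_T^*\bPhi_j$, so that the quantity inside the dual certificate equals $\mathbf{v}_j^*\mathbf{z}$ --- a sum of independent, mean-zero, bounded terms in the independent uniform phases $\mathbf{z}$. A Hoeffding-type tail bound for such Steinhaus sums gives
\be
\IP\lt(\lt|\mathbf{v}_j^*\mathbf{z}\rt|\geq 1\rt)\leq 2\exp\lt(-{1\over 2\|\mathbf{v}_j\|_2^2}\rt),
\ee
while the conditioning together with the coherence estimate yields $\|\mathbf{v}_j\|_2^2\leq\|(\bPhi_T^*\bPhi_T)^{-1}\|_2^2\,\|\bPhi_T^*\bPhi_j\|_2^2\leq 4\mu^2(\bPhi)\,s$. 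Hypothesis (\ref{M}), namely $\mu^2(\bPhi)s\leq(8\ln(m/\tau))^{-1}$, makes each tail at most $\tau/m$, and a union bound over the at most $m$ indices $j\notin T$ bounds the total by $2\tau$.

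Finally I would add the two failure probabilities --- $s^{-q}$ from the conditioning event and $2\tau$ from the phase step --- to conclude that the dual certificate holds, and hence that $X$ is the unique BP solution, with probability at least $1-2\tau-s^{-q}$. I expect the conditioning step to be the main obstacle: bounding $\|\bPhi_T^*\bPhi_T-\bI\|_2$ for a random column submatrix genuinely needs a matrix concentration argument rather than the pointwise coherence estimate of Theorem \ref{thm4}, and the constants must be tracked with care so that the thresholds in (\ref{M}) and (\ref{Op}) emerge exactly as stated. The phase step, by comparison, is a routine scalar concentration once the submatrix is known to be well conditioned.
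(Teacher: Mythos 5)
The paper does not actually prove this proposition: it states it as a reformulation of Tropp's results \cite{Tro2} and explicitly refers the reader to \cite{cs-par} for the derivation. Your outline is a faithful reconstruction of exactly that argument --- Fuchs/Tropp dual certificate, Tropp's random-subdictionary conditioning theorem (which supplies the $(s/m)\|\bPhi\|_2^2$ term and the $s^{-q}$ failure probability), and a Steinhaus/Hoeffding tail for the random phases combined with (\ref{M}) and a union bound to get $2\tau$ --- so it matches the intended proof, with only the bookkeeping of constants (note that the first term of (\ref{Op}) is already the form obtained after substituting the coherence bound (\ref{M}) into Tropp's $\mu\sqrt{qs\ln s}$ term) left to verify against \cite{Tro2}.
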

 \begin{remark}
 When the matrix $\bPhi$ has all unit { elements}, as in (\ref{14.3}), 
 the condition (\ref{Op}) becomes
 \beq
 \label{Op'}
  3\lt({q\ln{s}\over 2\ln{{m\over \tau }}}\rt)^{1/2}+{s\over m\rho}\|\bPhi\|_2^2\leq {1\over 4 e^{1/4}}
 \eeq
 where $\rho=\#\hbox{rows in}\,\, \bPhi$. 
 \end{remark}
Proposition \ref{tropp}  calls for 
the control of the spectral norm of $\bPhi$,
in addition to $\mu(\bPhi)$,  
in order to relax the sparsity constraint from (\ref{spark3})
to (\ref{M}). 

In Section \ref{sec:spec} we prove the following
spectral norm bound. 
\begin{theorem}
\label{thm5-2}
Under the  assumptions of Theorem \ref{thm4} the matrix $\bPhi$ has full rank
and its spectral norm satisfies the bound
\beq
\label{norm}
\|\bPhi\|_2^2\leq {2m}
\eeq
with probability greater than
\beq
\label{23-3}
\lt(1-c_1 \sqrt{np-1\over m}\rt)^{n(n-1)p(p-1)},
\quad n, p\geq 2
\eeq
for some constant $c_1>0$. 

In the SIMO  case $p=1$
we have \beq
\label{norm2}
\|\bPhi\|_2^2\leq {2m}
\eeq
with probability larger than 
\beq
\label{23-9}
\lt(1-c_1\sqrt{n-1\over m}\rt)^{n(n-1)},
\quad n\geq 2. 
\eeq

\end{theorem}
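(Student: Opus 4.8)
The plan is to pass to the Gram matrix $\bPhi\bPhi^*$, an $(np)\times(np)$ Hermitian matrix whose entries are the correlations between the rows of $\bPhi$. Since every entry of $\bPhi$ has modulus one, each diagonal entry of $\bPhi\bPhi^*$ equals $m$, so I would write $\bPhi\bPhi^*=m\,\bI+E$ with $E$ collecting the off-diagonal correlations. Both assertions then reduce to the single bound $\|E\|_2\le m$: Weyl's perturbation inequality gives $\lambda_{\min}(\bPhi\bPhi^*)\ge m-\|E\|_2$ and $\lambda_{\max}(\bPhi\bPhi^*)\le m+\|E\|_2$, so a strict bound yields nonsingularity of $\bPhi\bPhi^*$ and hence full rank of $\bPhi$, while $\|\bPhi\|_2^2=\lambda_{\max}(\bPhi\bPhi^*)\le 2m$ is precisely (\ref{norm}).

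To control $\|E\|_2$ I would invoke the Schur/Gershgorin row-sum bound $\|E\|_2\le\max_{(k,l)}\sum_{(k',l')\neq(k,l)}|E_{(k,l),(k',l')}|$ and estimate each correlation. Writing the $(k,l)$-row, $j$-column entry of $\bPhi$ as $\exp\!\big(i\om[(\cos\theta_k-\cos\thetatil_l)x_j+(\sin\theta_k-\sin\thetatil_l)z_j]\big)$ and using $\br_j=(j_1\ell,j_2\ell)$ with $j_1,j_2\in\{1,\dots,\sqrt m\}$, the product-lattice structure factors each correlation into two one-dimensional Dirichlet kernels, $E_{(k,l),(k',l')}=F(q_x)\,F(q_z)$, where $\bq=(q_x,q_z)$ is the difference of the two row wave vectors and $F(\alpha)=\sum_{t=1}^{\sqrt m}e^{i\alpha\ell t}$ obeys $|F(\alpha)|=|\sin(\sqrt m\,\alpha\ell/2)/\sin(\alpha\ell/2)|\le\min\big(\sqrt m,\,|\sin(\alpha\ell/2)|^{-1}\big)$. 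Hence a correlation is large only when a component of $\bq$ sits close to the reciprocal lattice $(2\pi/\ell)\IZ$.

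The probabilistic core is to impose the threshold $\sqrt{m/(np-1)}$ on each kernel: if $|F(q_x)|\le\sqrt{m/(np-1)}$ and $|F(q_z)|\le\sqrt{m/(np-1)}$, then $|E_{(k,l),(k',l')}|\le m/(np-1)$, so summing the $np-1$ off-diagonal entries of any row gives at most $m$, i.e. $\|E\|_2\le m$. For a pair with $k\neq k'$ and $l\neq l'$ — the only pairs in which the four angles $\theta_k,\theta_{k'},\thetatil_l,\thetatil_{l'}$ are jointly free — the event $|F(q_x)|>\sqrt{m/(np-1)}$ forces $q_x\ell$ to lie within distance $\sim\sqrt{(np-1)/m}$ of $2\pi\IZ$; pulling this back through $q_x=\om(\cos\theta_k-\cdots)$ and using the boundedness of $\pdfi$ (and $\pdfs$), the offending angle set has probability $\le c_1\sqrt{(np-1)/m}$, and likewise for $q_z$ — note that the $\om\ell$ factors from the number of reciprocal points and from the Jacobian cancel, consistent with the $\om$-free bound (\ref{23-3}). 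The SIMO case $p=1$ is the specialization in which only the pairs $k=k'=1,\ l\neq l'$ survive, $u^{\rm i}_1(\br_j)$ cancels from $E_{(1,l),(1,l')}$, the correlation depends solely on the two sampling angles, and the same estimate with $np-1$ replaced by $n-1$ produces (\ref{norm2})--(\ref{23-9}).

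The step I expect to be the real obstacle is upgrading these per-pair estimates to the \emph{product} bound $(1-c_1\sqrt{(np-1)/m})^{n(n-1)p(p-1)}$: distinct generic pairs share incident and sampling angles, so their failure events are dependent and the product cannot be read off from independence (a plain union bound yields only the weaker $1-n(n-1)p(p-1)\,c_1\sqrt{(np-1)/m}$). I would attempt it by ordering the $n(n-1)p(p-1)$ pairs and expanding $\IP(\text{all good})$ as a product of conditional survival probabilities, arguing that each conditional failure probability stays $\le c_1\sqrt{(np-1)/m}$ because the bad set of the one ``last-revealed'' free angle of that pair has measure $\le c\sqrt{(np-1)/m}$ uniformly in the remaining angles, with the bounded conditional density surviving the conditioning on the previously imposed good events. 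The delicate bookkeeping is to choose an ordering under which each pair has a genuinely distinguished free angle whose conditional law remains controlled, and to absorb the Jacobian degeneracy of $\theta\mapsto\cos\theta$ near the blind-spot directions $\theta=0,\pi$ into the smoothness of $\pdfi,\pdfs$, exactly as in the proof of Theorem \ref{thm4}.
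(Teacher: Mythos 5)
Your proposal follows essentially the same route as the paper: reduce to the Gram matrix $\bPhi\bPhi^*=m\,\bI+E$, bound $\|E\|_2$ by the maximal off-diagonal row sum (Gershgorin), factor each row correlation into two one-dimensional Dirichlet kernels over the square lattice, threshold each kernel at $\sqrt{m/(np-1)}$, and estimate the per-pair failure probability by $c_1\sqrt{(np-1)/m}$ using the $O(1/(\om\ell))$ bound on the density of the phase variables against the $O(\om\ell)$ reciprocal-lattice points so that the $\om\ell$ factors cancel. The one step you single out as the real obstacle --- upgrading the per-pair estimates to the product form $(1-c_1\sqrt{(np-1)/m})^{n(n-1)p(p-1)}$ despite the dependence between pairs sharing angles --- is exactly the step the paper itself does not justify (it simply asserts that the exponent ``accounts for the number of different pairs of random variables involved''), so your conditional-revelation sketch is, if anything, more careful than the published argument.
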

\begin{remark}
\label{rmk2}
The probability bounds (\ref{23-3}) and (\ref{23-9})  are  probably far from being  optimal. For $np\ll m$, a lower bound 
for (\ref{23-3}) would be
\[
1-c_1n(n-1)p(p-1)\sqrt{np-1\over m}
\]
which requires $m\gg (np)^5$ to be close to unity. 

When this is violated, we have to rely on Theorem \ref{thm4}
and Proposition \ref{thm:spark} which together 
guarantees recovery with probability greater than $(1-\delta)^2$ but with a higher sensor-to-target ratio. 

\end{remark}

Now we are ready to prove the main result for inverse Born
scattering. 
\begin{theorem}
\label{thm1}
Let the sensors and the target  be drawn randomly from the sensor and target ensembles, respectively,
and consider the sensing matrix $\bPhi$ 
of  the entries defined by  (\ref{120})-(\ref{14.3}).
If  (\ref{m-2}) holds, 
then  the targets of sparsity up to 
\beq
\label{spark}
s<\lt(8\ln{m\over \ep} \rt)^{-1}\lt(
\chi^{\rm i}+{\sqrt{2} K\over \sqrt{p}}\rt)^{-2}
\lt(\chi^{\rm s}+{\sqrt{2}K\over \sqrt{n}}\rt)^{-2}
\eeq
can be recovered by BP  with probability greater than 
\beq
\label{180}
\lt(\lt(1-c_1\sqrt{np-1\over m}\rt)^{n(n-1)p(p-1)}-2\delta\rt)  \lt(1-2\tau-s^{-q}\rt),\quad n\geq 2,\quad p\geq 2
\eeq
for some constant $c_1>0$ where, for $pn\gg s$,  $q$ can be chosen as
\[
 q={\ln {m}-\ln{\ep}\over 72 e^{1/2} \ln s}. 
 \]

In the SIMO case 
the probability bound (\ref{180}) becomes
\beq
\label{180'}
\lt(\lt(1 -c_1\sqrt{n-1\over m}\rt)^{n(n-1)}-2\delta\rt) \lt(1-2\tau-s^{-q}\rt),\quad n\geq 2. 
\eeq

\end{theorem}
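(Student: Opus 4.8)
The plan is to assemble the theorem from three ingredients already established: the coherence bound of Theorem~\ref{thm4}, the spectral-norm bound of Theorem~\ref{thm5-2}, and the probabilistic recovery criterion of Proposition~\ref{tropp} in the unit-element form (\ref{Op'}) of the following remark (the entries (\ref{14.3}) are products of the exponentials in (\ref{120}), hence all of modulus one). The conceptual point that makes the probabilities factorize is that the sensor ensemble and the target ensemble are drawn independently: I would first control the random matrix $\bPhi$, and then, conditionally on $\bPhi$ being good, run Tropp's argument over the random target.

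First I would verify that a sensing matrix drawn from the sensor ensemble meets the hypotheses of Proposition~\ref{tropp} with overwhelming probability. Setting $\tau=\ep$, I combine the coherence bound (\ref{mut}) with the sparsity assumption (\ref{spark}): on the event of Theorem~\ref{thm4} one has $\mu^2(\bPhi)<(\chii+\sqrt 2 K/\sqrt p)^2(\chis+\sqrt 2 K/\sqrt n)^2$, so (\ref{spark}) forces $\mu^2(\bPhi)\,s<(8\ln(m/\ep))^{-1}$, which is exactly condition (\ref{M}). By Theorem~\ref{thm4} and the standing hypothesis (\ref{m-2}) this event has probability greater than $(1-\delta)^2>1-2\delta$. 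Next, on the event of Theorem~\ref{thm5-2} one has $\|\bPhi\|_2^2\le 2m$, and with $\rho=np$ rows the second term of (\ref{Op'}) obeys $\tfrac{s}{m\rho}\|\bPhi\|_2^2\le 2s/(np)$, which is negligible in the regime $pn\gg s$. The first term of (\ref{Op'}) is then made to satisfy the inequality by the stated choice $q=(\ln m-\ln\ep)/(72 e^{1/2}\ln s)$, obtained by solving $3\bigl(q\ln s/(2\ln(m/\ep))\bigr)^{1/2}\le (4e^{1/4})^{-1}$ after discarding the negligible term.

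I would then combine probabilities. A union bound over the two matrix events (coherence and spectral) shows that $\bPhi$ satisfies both (\ref{M}) and (\ref{Op'}) with probability at least the spectral bound (\ref{23-3}) minus $2\delta$, since the coherence event fails with probability less than $2\delta$. Conditioning on this good event and invoking Proposition~\ref{tropp} over the independent target ensemble yields recovery with conditional probability at least $1-2\tau-s^{-q}$; because the target draw is independent of $\bPhi$, the two probabilities multiply to give exactly (\ref{180}). The SIMO specialization $p=1$ is identical, with the spectral bound (\ref{23-3}) replaced by its $p=1$ counterpart (\ref{23-9}), producing (\ref{180'}).

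The step requiring the most care, and the main obstacle, is the verification of (\ref{Op'}) and the calibration of $q$: one must confirm that the regime $pn\gg s$ indeed renders $\tfrac{s}{m\rho}\|\bPhi\|_2^2$ small enough that essentially the entire budget $(4e^{1/4})^{-1}$ can be allocated to the first term, and then back-solve for the largest admissible $q$. A secondary bookkeeping subtlety is the identification $\tau=\ep$ across (\ref{M}), (\ref{spark}), the formula for $q$, and the probability (\ref{180}), together with the union-bound argument that correctly \emph{subtracts} $2\delta$ from the spectral probability rather than treating the coherence and spectral events as independent.
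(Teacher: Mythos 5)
Your proposal is correct and follows essentially the same route as the paper's proof: deduce (\ref{M}) from the coherence bound plus (\ref{spark}), verify (\ref{Op'}) via the spectral norm bound with the stated choice of $q$ in the regime $np\gg s$, and combine the two matrix events by a union bound (subtracting $2\delta$ from the spectral probability) before multiplying by the Tropp probability using the independence of the target and sensor ensembles. The identification $\tau=\ep$ and the back-solving for $q$ that you flag as the delicate steps are exactly the calculations the paper performs.
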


\commentout{\begin{remark}
Typically  the total number of grid points $m$ is much
larger than the target sparsity $s$ and the dimension $np$ of
the measurement vector. In this case the probability bound (\ref{180}) 
is close to one. 

\end{remark}
}
\begin{remark}
Our results can be extended to
the case that the  sampling angles are not independent of
the incident angles by adjusting the probability 
(\ref{23-3}) (and hence (\ref{180}) and (\ref{180'})) in the  spectral norm 
bound. An important example is the multistatic data matrix with $\tilde\theta_j=-\theta_j, j=1,...,n$. Such a  setting is employed in the  well known and widely used MIMO imaging scheme called  MUSIC (standing for
 {\rm MU}ltiple-{\rm SI}gnal-{\rm C}lassification)  with
 $p=n$ \cite{Che,The}
 (see the Conclusion). 
 
Previously in \cite{cs-par}, we have applied
the compressed sensing methodology  to  imaging
with the multi-static data matrix under the paraxial approximation and the assumption that
 the point scatterers lie on a transverse plane. 
 
\end{remark}
\begin{proof}
First, the coherence estimate (\ref{mut}) and the sparsity constraint (\ref{spark}) implies (\ref{M}). 

Now  the norm bound (\ref{norm}) 
implies (\ref{Op})  if 
\beq
 \label{Op2}
 3\lt({q\ln{s}\over 2\ln{{m\over \tau}}}\rt)^{1/2}+{2s\over np}\leq {1\over 4 e^{1/4}},\quad q>1. 
 \eeq
 Hence for $np\gg s$ we can choose $q$ in (\ref{Op})  to be 
 \[ q={\ln{m} -\ln{ \tau} \over  72\sqrt{e}\ln s}. 
  \]

Since Theorems \ref{thm4} and \ref{thm5-2} hold simultaneously 
with probability
greater than 
\[
\lt(1-c_1\sqrt{np-1\over m}\rt)^{n(n-1)p(p-1)}-2\delta,\quad p, n\geq 2
\]
and since the target ensemble is independent
of the sensor ensemble we have the
 bound (\ref{180})  for the probability of
exact  recovery. The proof for the SIMO case
 is the same. 

This completes the proof of Theorem \ref{thm1}. 

\end{proof}

\subsection{SIMO/MISO exact inverse scattering}
\label{simo-miso}

Next we turn to the exact inverse scattering (\ref{10.11})-(\ref{fl}) which takes into account all  the multiple scattering
effects. As the previous simulation shows  \cite{cs-par},
multiple scattering can severely degrade  the 
performance of the imaging method based on
the Born approximation. 

Consider both the SIMO measurement with $p=1$
and  the MISO measurement with $n=1$. Note that
in the MISO measurement, various plane waves
are incident upon the scatterers  {\em one at a time}
and the corresponding  scattering amplitudes are sampled at
a fixed direction. 
By the reciprocity of wave propagation in a time-invariant
medium, reversing  the incident and scattered  waves
and interchanging their roles leave the scattering amplitude
unchanged, 
\[
A(\hat\br, \bd)=A(-\bd,-\hat\br)
\]
 (see Appendix \ref{recipro} for a proof). 
Therefore, the SIMO and MISO cases are equivalent
to each other. 
 
Hereafter we will  restrict our attention to the SIMO case.
To this end, we will work with  the alternative definition of the  target vector $X=(\nu_j u(\br_j))\in \IC^m$.
The reason for this is that the problem then has
the {\em appearance}  of  linear system
\beq
\label{10-10}
Y=\bPhi X
\eeq
where the sensing matrix $\bPhi$ has the  entries  
\beq
\label{10-11}
\Phi_{lj}=e^{-i\om(z_j\sin{\tilde\theta_l}+x_j\cos{\tilde\theta_l})}
\eeq
and is {\em independent} of the incident field. 
We then  apply  Theorem \ref{thm1} with $p=1$ and sufficiently
large $n$  to
recover $X$ with high probability. To recover $\nu$ from
$X$, we observe that 
 as long as $u(\br_{i_j})\neq 0, \forall j$, the support of $\nu$ is the same as 
 that of $X$. 
Indeed, 
the target strength  $\nu=(\nu_j)\in \IC^m$
can be recovered exactly by solving the system of nonlinear
equations on the target support as follows.

Define the illumination and full field vectors
at the locations of the scatterers:
\beqn
 U^{\rm i}&=&(u^{\rm i}(\br_{i_1}),...,u^{\rm i}(\br_{i_s}))^T\in \IC^s\\
U&=&(u(\br_{i_1}),...,u(\br_{i_s}))^T\in \IC^s.
\eeqn

Let $\bG$ be the $s\times s$ matrix 
\[
\bG=[(1-\delta_{jl})G(\br_{i_j},\br_{i_l})]
\]
 and $\cV$ the diagonal matrix 
 \[
 \cV={\rm diag}(\nu_{i_1},...,\nu_{i_s}). 
 \]
The Foldy-Lax equation  (\ref{fl}) can be written as
\beq
\label{fl2}
U=U^{\rm i}+\om^2 \bG\cV U
\eeq
from which  we obtain 
\beq
U&=&
\lt(\bI-\om^2\bG\cV\rt)^{-1}
U^{\rm i} \label{ill}
\eeq
and 
\beq
\label{ls3}
X=\cV U=\cV\lt(\bI-\om^2 \bG\cV\rt)^{-1}U^{\rm i}
\eeq 
provided that $\om^{-2}$ is not an eigenvalue of 
$\bG\cV$.
\commentout{
By Gershgorin's circle theorem the matrix $\bI-\bG\cV$
is invertible if
\beq
\label{weak}
 \max_{l} \sum_{j\neq l}|G(\br_{i_l},\br_{i_j})\nu_{i_j}|<1.
\eeq 
}
The exciting field then  determines
the scattering amplitude by (\ref{10.11})
which yields the (nonlinear) system
\beq
\label{27}
Y=\bPhi X=\bPhi \cV \lt(\bI-\om^2\bG \cV\rt)^{-1} U^{\rm i}
\eeq
where the sensing matrix entries are given by (\ref{10-11}). 

\commentout{
and $\cV$ can be solved from
this  system of  nonlinear equation by
the formula
\beq
\label{exact}
\cV=\mbox{\rm diag} \lt[{X\over \bG X+U^{\rm i}}\rt]
\eeq
where the division  is carried out entry-wise 
as stated in the following.
}

 \begin{proposition}
 \label{prop4}
Suppose
 \beq
 \label{281}
  \om^{-2} \,\,\hbox{ is not an eigenvalue of
 the matrix}\,\, \bG\cV
 \eeq
 and   
  \beq
 \label{282'}
U^{\rm i} \quad\hbox{is not orthogonal to any row vector  of }
\lt(\bI-\om^2\bG\cV\rt)^{-1}.
\eeq
 \commentout{
 \beq
 \label{282}
U^{\rm i} \quad\hbox{is not orthogonal to any column of }
\lt(\bI-\om^2\bG\cV\rt)^{-1}
\eeq
}
 Then the solution $\cV$
 of (\ref{ls3})  is given by 
 \beq
\label{exact}
\cV=\mbox{\rm diag} \lt[{ X\over \om^2 \bG X+U^{\rm i}}\rt]
\eeq
where the division  is in the  entry-wise sense  (Hadamard product).
  In this case, $\mbox{\rm supp} (\cV)=\mbox{\rm supp} ( X)$.    \end{proposition}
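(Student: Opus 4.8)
The plan is to solve the defining relation (\ref{ls3}) for $\cV$ by turning it into a componentwise division, the key being to recognize that the denominator $\om^2\bG X + U^{\rm i}$ in (\ref{exact}) is exactly the full field vector $U$. First I would take the Foldy-Lax equation in the form (\ref{fl2}), namely $U = U^{\rm i} + \om^2\bG\cV U$, and substitute the definition $X = \cV U$ to obtain the linear identity
\begin{equation*}
U = U^{\rm i} + \om^2 \bG X.
\end{equation*}
This single substitution is what removes the apparent nonlinearity: it expresses $U$ linearly in terms of the already-recovered $X$, so that the right-hand side of (\ref{exact}) is well defined once $U$ has no vanishing entry.

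Next I would dispose of the two hypotheses in turn. Assumption (\ref{281}) is exactly what is needed to invert $\bI - \om^2\bG\cV$ and hence to write $U = (\bI - \om^2\bG\cV)^{-1}U^{\rm i}$ as in (\ref{ill}). Assumption (\ref{282'}) then guarantees that no component of $U$ vanishes: the $j$-th entry $U_j$ is precisely the pairing of the $j$-th row of $(\bI - \om^2\bG\cV)^{-1}$ with $U^{\rm i}$, and the non-orthogonality hypothesis says exactly that every such pairing is nonzero. With $U_j \neq 0$ for all $j$, the relation $X = \cV U$ reads componentwise as $X_j = \nu_{i_j}U_j$, which I would solve for each strength as
\begin{equation*}
\nu_{i_j} = \frac{X_j}{U_j} = \frac{X_j}{\om^2(\bG X)_j + U^{\rm i}_j},
\end{equation*}
and collecting these into a diagonal matrix gives (\ref{exact}) with the division read in the Hadamard sense.

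The support identity would then be immediate from the same relation $X_j = \nu_{i_j}U_j$: since $U_j \neq 0$, one has $X_j = 0$ if and only if $\nu_{i_j} = 0$, so $\mbox{\rm supp}(\cV) = \mbox{\rm supp}(X)$. This is the step that justifies, in the exact-scattering setting, recovering the scatterer locations directly from the support of the basis-pursuit solution $X$.

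I expect the only subtle point to be the correct interpretation of hypothesis (\ref{282'}): one must check that ``not orthogonal to any row vector'' of $(\bI - \om^2\bG\cV)^{-1}$ is precisely the statement that the matrix-vector product $(\bI - \om^2\bG\cV)^{-1}U^{\rm i}$ has no zero coordinate, which is what keeps the denominator in (\ref{exact}) from vanishing. Once that is pinned down the remainder is pure algebraic substitution, with no estimates, limits, or genericity arguments required.
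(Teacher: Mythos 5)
Your proposal is correct and follows essentially the same route as the paper: both reduce (\ref{ls3}) to the componentwise relation $X_j=\nu_{i_j}\lt(U^{\rm i}+\om^2\bG X\rt)_j$ and then observe that the denominator equals $U=\lt(\bI-\om^2\bG\cV\rt)^{-1}U^{\rm i}$, whose entries are nonzero precisely by hypothesis (\ref{282'}). The only cosmetic difference is that the paper reaches $\lt(\bI-\om^2\cV\bG\rt)X=\cV U^{\rm i}$ via the commutation identity $\cV\lt(\bI-\om^2\bG\cV\rt)^{-1}=\lt(\bI-\om^2\cV\bG\rt)^{-1}\cV$, whereas you substitute $X=\cV U$ directly into the Foldy--Lax equation; these are equivalent one-line manipulations.
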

 \begin{proof}
 Note that
 \[
 \cV \lt(\bI- \om^2\bG \cV\rt)^{-1}
 =\lt(\bI- \om^2\cV\bG\rt)^{-1}\cV.
 \]
 Hence  eq. (\ref{ls3}) can be written as
 \[
 X=\lt(\bI- \om^2\cV\bG\rt)^{-1}\cV U^{\rm i}
 \]
 or equivalently 
 \beq
 \label{76}
\lt(\bI-\om^2 \cV\bG\rt)  X=\cV U^{\rm i}.
 \eeq
   Solving (\ref{76}) for  the diagonal matrix $ \cV$ entry-by-entry, we obtain
(\ref{exact}) which is well-defined if
 \beq
 \label{283}
 \om^2\bG X+U^{\rm i} \quad\hbox{contains no zero component.}
 \eeq 

Since
\[
\om^2\bG X+U^{\rm i}= \lt(\bI-\om^2\bG\cV\rt)^{-1} U^{\rm i}=U
\]
 (\ref{283}) follows from (\ref{282'}). 
 \commentout{condition  (\ref{282}) implies 
 \beqn
\lt(\bI+ \bG\cV \lt(\bI-\bG\cV\rt)^{-1}\rt) {\mathbf e}_j\cdot U^{\rm i}
&=& \lt(\bI+ \bG\cV \lt(\bI-\bG\cV\rt)^{-1}\rt) U^{\rm i}\cdot {\mathbf e}_j\neq 0,\quad j=1,...,s, 
\eeqn
}

\end{proof}
\begin{corollary}
\label{cor:zero}
Condition  (\ref{283}) holds and hence (\ref{exact}) is well-defined if
\beq
\label{38'}
\om^2\|\bG\cV\|<1/2
\eeq 
where  $\|\cdot\|$ 
equals the maximum of the absolute row sums 
of the matrix corresponding to the operator norm
on $L^\infty$.

\end{corollary}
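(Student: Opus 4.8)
The plan is to reduce condition (\ref{283}) to a statement about the full field vector $U$ and then bound $U$ componentwise from below by a Neumann-type estimate. The proof of Proposition \ref{prop4} already records the identity $\om^2\bG X+U^{\rm i}=U$, so (\ref{283}) is equivalent to the assertion that the vector $U=(\bI-\om^2\bG\cV)^{-1}U^{\rm i}$ in (\ref{ill}) has no vanishing component. The entire task is therefore to show that the hypothesis $\om^2\|\bG\cV\|<1/2$ forces $|U_j|>0$ for every $j$.

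First I would note that $\om^2\|\bG\cV\|<1/2<1$ guarantees, via the convergent Neumann series, that $\bI-\om^2\bG\cV$ is invertible; in particular $\om^{-2}$ is not an eigenvalue of $\bG\cV$, so (\ref{ls3}) and (\ref{exact}) are meaningful. Next I would use that the norm $\|\cdot\|$ in (\ref{38'}) is the $L^\infty$ operator norm (maximum absolute row sum), which is submultiplicative and satisfies $\|\bG\cV V\|_\infty\le\|\bG\cV\|\,\|V\|_\infty$ for any vector $V$. Since each incident plane wave (\ref{120}) has unit modulus, $\|U^{\rm i}\|_\infty=1$ and $|U^{\rm i}_j|=1$ for every $j$.

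The key step is a two-sided use of the Foldy--Lax relation (\ref{fl2}), $U=U^{\rm i}+\om^2\bG\cV U$. Taking $\|\cdot\|_\infty$ of both sides gives $\|U\|_\infty\le 1+\om^2\|\bG\cV\|\,\|U\|_\infty$, hence $\|U\|_\infty\le(1-\om^2\|\bG\cV\|)^{-1}$. Reading (\ref{fl2}) componentwise and applying the reverse triangle inequality then yields, with $a:=\om^2\|\bG\cV\|$,
\[
|U_j|\ge |U^{\rm i}_j|-\om^2|(\bG\cV U)_j|\ge 1-a\,\|U\|_\infty\ge 1-\frac{a}{1-a}=\frac{1-2a}{1-a},
\]
which is strictly positive precisely when $a<1/2$. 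Thus $U$ has no zero component, so (\ref{283}) holds and the entrywise division in (\ref{exact}) is well defined.

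I expect the only delicate point to be the bookkeeping: one must use the same $L^\infty$ operator norm throughout, so that the resolvent bound on $\|U\|_\infty$ and the componentwise lower bound both rest on the compatible inequality $\|\bG\cV U\|_\infty\le\|\bG\cV\|\,\|U\|_\infty$, and one must check that the constants combine to give exactly the threshold $1/2$ rather than a weaker bound. Everything else is routine.
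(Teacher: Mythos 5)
Your proof is correct and follows essentially the same route as the paper: both arguments rest on the facts that every component of $U^{\rm i}$ has unit modulus and that $\om^2\|\bG X\|_\infty=\om^2\|\bG\cV U\|_\infty\leq \om^2\|\bG\cV\|/(1-\om^2\|\bG\cV\|)<1$ exactly when $\om^2\|\bG\cV\|<1/2$, the paper writing this bound via the resolvent $\bG\cV(\bI-\om^2\bG\cV)^{-1}$ where you use the Foldy--Lax fixed-point form plus the reverse triangle inequality. Your explicit componentwise lower bound $(1-2a)/(1-a)$ is precisely the quantity $b_0$ the paper derives separately in Proposition \ref{prop1}, so nothing is missing.
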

\begin{proof}

Clearly $U^{\rm i}+\om^2\bG X$ contains no zero entry if
\beq
\label{41}
\om^2\|\bG X\|<1
\eeq
since every component of $U^{\rm i}$ has modulus one. 
As 
 \[
 \|\bG X\|=\|\bG\cV \lt(\bI-\om^2\bG\cV\rt)^{-1}\|
 \]
(\ref{41})  follows from (\ref{38'}). 
 \end{proof}

\begin{theorem}
\label{thm:bi} Suppose $p=1$, (\ref{m-2}),  (\ref{281}) and  (\ref{282'}) hold.  
Let $X$ be  a BP solution for the system (\ref{10-10}) with the matrix entries (\ref{10-11}) according to Theorem \ref{thm1}. 
The formula
(\ref{exact})  recovers exactly the target of sparsity
\beq
\label{spark2}
s<\lt(8\ln{m\over \ep} \rt)^{-1}
\lt(\chi^{\rm s}+{\sqrt{2}K\over \sqrt{n}}\rt)^{-2}
\eeq
with probability at least as in  (\ref{180'})
  and  $\chi^{\rm s}$ satisfies the bound (\ref{21-4})
or (\ref{21-2}) depending on whether $\hbox{\rm supp}(\pdfs)$
contains a Blind Spot or not. 

\commentout{
On the other hand, if the target sparsity satisfies
\beq
s\leq {1\over 2} +{1\over 2} \lt(\chi^{\rm s}+{\sqrt{2}K\over \sqrt{n}}\rt)^{-1}
\eeq
then the target can be recovered exactly  by BP with probability greater than $(1-\delta)^2$.
}

\end{theorem}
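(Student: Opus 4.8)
The plan is to split the argument into a linear compressed-sensing stage followed by a deterministic de-mixing stage, exploiting the fact that with the lumped unknown $X=(\nu_j u(\br_j))$ the exact-scattering relation (\ref{10-10})--(\ref{10-11}) has the \emph{appearance} of a linear system whose sensing matrix $\bPhi$ does not depend on the unknown. First I would invoke Theorem \ref{thm1} with $p=1$ to recover $X$ exactly by BP. The sparsity of $X$ is at most that of $\nu$, since $X_j=\nu_j u(\br_j)=0$ whenever $\nu_j=0$, so the sparsity threshold applies to $X$ directly. Once $X$ is in hand, the nonlinearity hidden inside $u(\br_j)$ is unwound separately, without any further probabilistic cost.

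The one point that needs care before Theorem \ref{thm1} can be quoted is the coherence bound. The SIMO exact sensing matrix (\ref{10-11}) carries no incident-field factor $u^{\rm i}_k(\br_j)$, so the factor $(\chi^{\rm i}+\sqrt{2}K/\sqrt{p})$ appearing in (\ref{mut}) and in the sparsity threshold (\ref{spark}) is replaced by unity: the columns differ only through the sampling-angle phases $e^{-i\om(z_j\sin\tilde\theta_l+x_j\cos\tilde\theta_l)}$. Specializing the proof of Theorem \ref{thm4} to the sampling-angle ensemble alone gives $\mu(\bPhi)<\chi^{\rm s}+\sqrt{2}K/\sqrt{n}$ with $\chi^{\rm s}$ bounded by (\ref{21-4}) or (\ref{21-2}) according to the presence of a Blind Spot, so that (\ref{spark}) collapses precisely to (\ref{spark2}). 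The SIMO spectral-norm bound of Theorem \ref{thm5-2} and the SIMO recovery probability of Theorem \ref{thm1} then deliver the probability (\ref{180'}) for exact recovery of $X$.

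For the second stage I would read off the support $\cS=\mbox{\rm supp}(X)$ from the exactly recovered $X$. Condition (\ref{282'}) guarantees that $u(\br_{i_j})=[(\bI-\om^2\bG\cV)^{-1}U^{\rm i}]_j\neq 0$ for every $j$, so that $\mbox{\rm supp}(\nu)=\mbox{\rm supp}(X)=\cS$ and no scatterer is lost in the lumping. Restricting $\bG$, $U^{\rm i}$ and $X$ to $\cS$ and applying Proposition \ref{prop4}, whose hypotheses are exactly (\ref{281}) and (\ref{282'}), shows that (\ref{exact}) is well defined and returns $\cV=\mbox{\rm diag}(\nu_{i_1},\dots,\nu_{i_s})$ exactly; zero-filling off $\cS$ reconstructs $\nu$. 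Because this stage is purely algebraic once $X$ is known, the probability of success is unchanged from the first stage, namely (\ref{180'}).

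The hard part, I expect, is justifying that $X=(\nu_j u(\br_j))$ meets the hypotheses of the probabilistic recovery result behind Theorem \ref{thm1}, i.e. Tropp's target ensemble (Proposition \ref{tropp}). The support of $X$ is inherited from $\nu$ and hence uniformly random, but the phase of $X_j$ is $\arg\nu_j+\arg u(\br_j)$, and $u(\br_j)$ depends on the entire configuration of $\nu$ through the Foldy-Lax system (\ref{fl2}); the phases of $X$ are therefore not manifestly independent and uniform. I would address this by conditioning on $\cS$ and on the magnitudes and arguing that the multiple-scattering map applies only a configuration-dependent rotation that does not aid BP, or—should that not push through cleanly—by falling back on the deterministic guarantee of Proposition \ref{thm:spark}, which requires no phase randomness and yields exact recovery of $X$, and hence of $\nu$, under the stronger sparsity bound $s\leq \tfrac12+\tfrac12(\chi^{\rm s}+\sqrt{2}K/\sqrt{n})^{-1}$ with probability greater than $(1-\delta)^2$.
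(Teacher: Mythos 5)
Your proposal follows the paper's own route exactly: the paper recovers the lumped target $X=(\nu_j u(\br_j))$ by applying Theorem \ref{thm1} with $p=1$ --- where, as you observe, the sensing matrix (\ref{10-11}) carries no incident-field factor, so the coherence bound of Theorem \ref{thm4} reduces to $\chi^{\rm s}+\sqrt{2}K/\sqrt{n}$ and (\ref{spark}) collapses to (\ref{spark2}) --- and then inverts the lumping deterministically via Proposition \ref{prop4} under (\ref{281}) and (\ref{282'}), which is precisely your two-stage argument. The subtlety you flag in the last paragraph, namely that the phases of $X_j=\nu_j u(\br_j)$ are not manifestly independent and uniform as Tropp's target ensemble requires, is a genuine gap that the paper itself passes over silently, so your treatment (including the deterministic fallback via Proposition \ref{thm:spark}) is if anything more careful than the original.
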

   \begin{remark} \label{rmk:zero}

The resonance frequency violating (\ref{281}) 
is related to the transmission eigenvalue
for continuous  media where an analogous
 non-resonance
condition  is also needed to ensure
the existence and uniqueness of the solution to the inverse
scattering problem \cite{CCM, Nac}. 

If
$1$ is an eigenvalue of $\om^2\bG\cV$, 
the existence of solution for (\ref{fl2}) 
requires that  $U^{\rm i}$ be orthogonal to
the eigenspace of $\om^2\bG\cV$ corresponding to $1$.
Then other physical constraints  (such as the minimum energy solution) need to be taken 
into account in order to obtain a unique solution. 
   
The simplest example  for resonance is this: Two point scatterers  
have the strengths  $\nu_{i_1}, \nu_{i_2}$ such that
$\hbox{\rm sign}{(\nu_{i_1})}=\hbox{\rm sign} {(\nu_{i_2})}=\hbox{\rm sign}{(G^*(\br_{i_1}, \br_{i_2}))}$. Then 
\[
\bG\cV=\lt[\begin{matrix}
0& |\nu_{i_1}G(\br_{i_1}, \br_{i_2})|\\
|\nu_{i_2}G(\br_{i_1}, \br_{i_2})|&0
\end{matrix}\rt]
\]
is real and symmetric and  has the positive  eigenvalue $ \sqrt{|\nu_{i_1}\nu_{i_2}|}
|G(\br_{i_1}, \br_{i_2})|$. The resonance frequency is
$ {|\nu_{i_1}\nu_{i_2}|^{-1/4}}
|G(\br_{i_1}, \br_{i_2})|^{-1/2}$ in this case. 


   \end{remark}
   
   \begin{remark}
  In view of (\ref{ill}),  (\ref{282'}) means that
  $U$ has no zero component. In other words, 
  the target is not shadowed by itself  in any way. 
  
Since the negation of  (\ref{282'}) is an algebraic constraint, 
 (\ref{282'}) are  satisfied almost surely  
in the target ensemble under (\ref{281}).  
\end{remark}

\subsection{Stability w.r.t.  errors}
Here we  consider the situation  
 where the measurement  or model errors are present.
 In the former case, the data may be contaminated by noise. 
In the latter case, the errors may be due to, for instance, the 
fact that the targets are slightly off the grid. In this case the target vector $X$ does not represent the true targets exactly due to
 model mismatch and is only the best approximation given
 the model.  In either case,  the data vector can be written
 as 
 \beq
\label{8}
Y=\bPhi X+E
\eeq
where $E$ represents the errors. In the case of measurement noise, $E$ is independent
of the targets while in the case of model mismatch, $E$ depends
explicitly on the targets. 

Since $A(\hat\br,\bd)$ is an  analytic function
of both $\hat\br$ and $\bd$  and hence
for the given  noisy data, in general no solution
exists to the inverse scattering problem. Even if
a solution does exist, it does not depend continuously
on the measured data in any reasonable norm. 

To deal with the problem of ill-posedness 
we consider, instead of the Tikhonov regularization,  the $L^1$-regularization
\beq
\label{relax}
\min_{ Z} {1\over 2} \|Y-\bPhi Z\|_2^2+\lambda\|Z\|_1
\eeq
where $\lambda$ is the Lagrangian multiplier to be given below. 
This  is the Lagrangian form of the basis pursuit denoising (BPDN)
 \cite{BDE, CDS} and the  {\em lasso} in the statistics literature 
\cite{Tib, EHJT}. 
Let $\hat X$ be the  minimizer of (\ref{relax}). 

The starting point of our analysis 
is  the following result,  due to Tropp \cite{Tro3}, 
concerning the error bound and
the recoverability of the target support in the presence of noise. 
\begin{proposition}\cite{Tro3}
\label{prop2} Assume that $\bPhi$ has all unit columns and
 that  $\|E\|_2\leq\ep $. 

Suppose $\mu(\bPhi) s\leq 1/3$.
Then the minimizer $\hat X$ of (\ref{relax}) with $\lambda=2\ep$ is unique and its support is contained in  ${\rm supp}(X)$.
Moreover, 
\beq
\label{548}
\|\hat X-X\|_\infty\leq \lt(3+\sqrt{3/ 2}\rt)\ep . 
\eeq
\end{proposition}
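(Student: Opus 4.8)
The plan is to follow the standard convex-duality (dual-certificate) route for the lasso/BPDN, using the coherence hypothesis $\mu(\bPhi)s\le 1/3$ to control the Gram matrix on the true support. Write $T={\rm supp}(X)$, $|T|=s$, let $\bPhi_T$ denote the column submatrix, and form the Gram matrix $\bPhi_T^*\bPhi_T$. Since the columns are unit vectors, $\bPhi_T^*\bPhi_T$ has unit diagonal and off-diagonal entries of modulus at most $\mu(\bPhi)$, so by Gershgorin its eigenvalues lie in $[1-(s-1)\mu,\,1+(s-1)\mu]$. The hypothesis forces $(s-1)\mu<1/3$, hence $\bPhi_T^*\bPhi_T$ is invertible and $1-(s-1)\mu>2/3$; this single inequality drives every subsequent estimate, giving in particular $\|(\bPhi_T^*\bPhi_T)^{-1}\|_{1\to1}$, $\|(\bPhi_T^*\bPhi_T)^{-1}\|_{\infty\to\infty}\le(1-(s-1)\mu)^{-1}<3/2$ by a Neumann series, and $\sigma_{\min}(\bPhi_T)\ge\sqrt{1-(s-1)\mu}$.

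First I would solve the oracle problem: minimize the objective (\ref{relax}) over vectors supported on $T$. Full column rank of $\bPhi_T$ makes this restricted problem strictly convex, so it has a unique minimizer $\hat X$ supported on $T$, characterized by the subgradient optimality condition $\bPhi_T^*(Y-\bPhi_T\hat X)=\lambda g_T$, where $g_T$ is a valid complex $\ell_1$ subgradient (unit phases on $\mbox{\rm supp}(\hat X)$, modulus $\le 1$ off it).

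The crux, and what I expect to be the main obstacle, is to certify that this oracle solution is in fact the unique \emph{global} minimizer; this is exactly where both support containment and uniqueness originate. I would verify the strict correlation (dual-feasibility) condition $|\bPhi_j^*r|<\lambda$ for every $j\notin T$, where $r=Y-\bPhi_T\hat X$. Writing $Y=\bPhi_T X_T+E$ and using the oracle optimality, the residual decomposes as $r=\lambda\,\bPhi_T(\bPhi_T^*\bPhi_T)^{-1}g_T+P_T^\perp E$, with $P_T^\perp$ the orthogonal projection off the range of $\bPhi_T$. The two contributions are bounded in different norms: the first by $\lambda\,\|(\bPhi_T^*\bPhi_T)^{-1}\|_{1\to1}\,\|\bPhi_T^*\bPhi_j\|_1\le \lambda\, s\mu/(1-(s-1)\mu)<\lambda/2=\ep$, and the second by $\|\bPhi_j\|_2\,\|E\|_2\le\ep$. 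With $\lambda=2\ep$ these sum to strictly less than $\lambda$, which is precisely the margin the calibration $\lambda=2\ep$ is designed to produce. A standard convex-analysis argument then promotes the strict certificate to the full conclusion: every global minimizer must vanish off $T$, and strict convexity on $T$ forces uniqueness, so $\hat X$ is the unique minimizer and $\mbox{\rm supp}(\hat X)\subseteq\mbox{\rm supp}(X)$.

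Finally, for the $\ell_\infty$ error I would start from the oracle equation, substitute $Y=\bPhi_T X_T+E$, and invert to obtain $X_T-\hat X=(\bPhi_T^*\bPhi_T)^{-1}(\lambda g_T-\bPhi_T^*E)$. Obtaining the exact constant requires estimating the two terms in deliberately matched norms rather than a single crude bound: the deterministic term gives $\lambda\,\|(\bPhi_T^*\bPhi_T)^{-1}\|_{\infty\to\infty}\|g_T\|_\infty\le 2\ep\cdot\tfrac32=3\ep$, while the noise term is best handled spectrally, $\|(\bPhi_T^*\bPhi_T)^{-1}\bPhi_T^*E\|_\infty\le\|(\bPhi_T^*\bPhi_T)^{-1}\bPhi_T^*\|_2\,\|E\|_2\le\ep/\sqrt{1-(s-1)\mu}<\sqrt{3/2}\,\ep$. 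Adding the two yields $\|\hat X-X\|_\infty<(3+\sqrt{3/2})\,\ep$, as claimed. This last step is essentially mechanical once the certification is in hand; the genuine work lies in the residual decomposition and the duality argument of the previous paragraph.
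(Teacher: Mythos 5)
Your proof is correct, and it is essentially the argument of the cited source: the paper itself offers no proof of Proposition \ref{prop2} (it is quoted from Tropp \cite{Tro3}), and your route — the oracle solution on the true support, the strict dual-feasibility certificate obtained from the residual decomposition $r=\lambda\,\bPhi_T(\bPhi_T^*\bPhi_T)^{-1}g_T+P_T^\perp E$, and the matched-norm estimates $\|(\bPhi_T^*\bPhi_T)^{-1}\|_{\infty\to\infty}<3/2$ and $\sigma_{\min}(\bPhi_T)^{-2}<3/2$ under $\mu(\bPhi)s\le 1/3$ — is exactly how Tropp's correlation-condition theorem specializes to the coherence setting, with all constants ($\lambda/2=\ep$ margin, $3\ep+\sqrt{3/2}\,\ep$) matching. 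No gaps.
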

\begin{remark}
\label{rmk07}
When the matrix $\bPhi$ has all unit entries, as in (\ref{10-11}), 
and the error term satisfies $\|E\|_2\leq n^{1/2} \ep$, 
then (\ref{548})  holds with $\lambda=2n\ep$.
\end{remark}

Define the reconstruction of $\cV$ to be 
 \beq
 \label{70}
 \hat \cV =\mbox{\rm diag}\lt[{\hat X\over U^{\rm i}+\om^2 \bG \hat X}\rt]
 \eeq
 in analogy to (\ref{exact}). 

Using Proposition \ref{prop2}  we 
derive an error bound and  a sufficient condition under which
the support of the reconstruction is exactly the same
as the original.

\commentout{
We state two results that we shall use.

\begin{theorem} \cite{DE,GN}
\label{thm:spark}
Given $Y\in \IC^N$, BP produces 
 the sparsest possible solution if
\[
\|X\|_0\leq  {1\over 2} \lt({1\over \mu(\bPhi)}+1\rt). 
\]
\end{theorem}
}

\commentout{
Analogous to Proposition \ref{prop4} and Corollary \ref{cor:zero}  we have the following. 
 \begin{proposition}
 \label{prop4'} Suppose
 \beq
 \label{exact2}
 U^{\rm i}+\bG \hat X \quad
 \hbox{ contains no zero entry (cf. (\ref{283})) }
 \eeq
 and define \beq
 \label{70}
 \hat \cV =\mbox{\rm diag}\lt[{\hat X\over U^{\rm i}+\bG \hat X}\rt]. 
 \eeq
 Then $\mbox{\rm supp} (\hat\cV)=\mbox{\rm supp} ( \hat X)$ and $\hat\cV$ satisfies 
 \beq
 \label{69}
\hat X= \hat\cV{ \lt(\bI- \bG \hat \cV\rt)^{-1}U^{\rm i}}.
 \eeq

 \end{proposition}
}
\commentout{
\begin{theorem}\label{thm5}
Assume $\|\bG\cV\|<1/2, \mu s\leq 1/3$ and
 \beq
 \label{71-2}
b_0\equiv {1-2\|\bG\cV\|\over 1-\|\bG\cV\|}
>
 (3+\sqrt{3/2})\ep\|\bG \|
 \eeq
 and define the reconstruction 
 by
 \beq
 \label{70}
 \hat \cV =\mbox{\rm diag}\lt[{\hat X\over U^{\rm i}+\bG \hat X}\rt]. 
 \eeq
Then (\ref{70})  is well-defined and 
satisfies the error bound:
\beq
\label{90}
\|\cV-\hat\cV\|&\leq&
{\lt(1+ 2\|\bG\| \rt) (3+\sqrt{3/2})\ep \over 
b_0(b_0-(3+\sqrt{3/2})\ep\|\bG\|)}.
\eeq
If, in addition, 
\beq
b_0> (3+\sqrt{3/2})\ep \|\cV^{-1}\|, 
\eeq
then $\mbox{\rm supp}(\hat 
\cV)=\mbox{\rm supp}(\cV)$. 
\end{theorem}
}

\begin{theorem}\label{thm5}
Suppose $\|E\|_2 \leq \ep n^{1/2}$ and  let
$\hat X$ be the solution to (\ref{relax}) with $\lambda=2\ep n$. 
Assume 
\beq
\label{0.3}
\mu(\bPhi) s\leq 1/3
\eeq
 and 
\beq
 \label{71-2}
\om^2\|\bG\cV\|< {1- (3+\sqrt{3/2})\ep\|\bG \|\over
2- (3+\sqrt{3/2})\ep\|\bG \|}\,\,\lt( <{1\over 2}\rt). 
 \eeq
Then (\ref{70})  is well-defined and 
satisfies the error bound:
\beq
\label{90}
\|\cV-\hat\cV\|\leq
 {2\lt(1 +{\om^2\|\bG\|\|\cV\|}\rt) (3+\sqrt{3/2})\ep \over b_0( b_0-\om^2 (3+\sqrt{3/2})\ep\|\bG\|)},\quad b_0\equiv {1-2\om^2\|\bG\cV\|\over 1-\om^2\|\bG\cV\|}. 
\eeq
Moreover, $\hbox{\rm supp} (\hat \cV)=\hbox{\rm supp}(\hat X)\subset \hbox{\rm supp}(X)$. 
On the other hand, if
\beq
 \label{71-5}
\om^2\|\bG\cV\|&<& {1- (3+\sqrt{3/2})\ep\|\cV^{-1} \|\over
2- (3+\sqrt{3/2})\ep\|\cV^{-1}\|}
\eeq
then $\hbox{\rm supp} (\hat X)=\hbox{\rm supp} (X)$.
Therefore under (\ref{71-2}) and (\ref{71-5}),
 $\mbox{\rm supp}(\hat 
\cV)=\mbox{\rm supp}(\cV)$, i.e. the support 
of the target is perfectly recovered. 
\end{theorem}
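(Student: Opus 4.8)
The plan is to treat the reconstruction map $X\mapsto\cV=\mbox{\rm diag}[X/(U^{\rm i}+\om^2\bG X)]$ as a Lipschitz perturbation and to push the linear $\ell^\infty$ error supplied by Tropp's denoising bound through it. First I would invoke Proposition \ref{prop2} together with Remark \ref{rmk07}: since $\bPhi$ in (\ref{10-11}) has all unit entries and $\|E\|_2\leq n^{1/2}\ep$, the hypothesis $\mu(\bPhi)s\leq 1/3$ of (\ref{0.3}) makes the lasso minimizer $\hat X$ (with $\lambda=2\ep n$) unique, forces $\mbox{\rm supp}(\hat X)\subset\mbox{\rm supp}(X)$, and yields the pointwise bound $\|\hat X-X\|_\infty\leq(3+\sqrt{3/2})\ep$. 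This single estimate is the seed for everything that follows; write $d=(3+\sqrt{3/2})\ep$.

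Next I would produce a uniform lower bound on the exciting field. Setting $U=U^{\rm i}+\om^2\bG X$ and $\hat U=U^{\rm i}+\om^2\bG\hat X$, the identity $U=(\bI-\om^2\bG\cV)^{-1}U^{\rm i}$ from (\ref{ill}) and a Neumann-series bound give $\|U\|_\infty\leq(1-\om^2\|\bG\cV\|)^{-1}$, whence reading $U=U^{\rm i}+\om^2\bG\cV U$ component-wise and using $|U^{\rm i}_j|=1$ produces $|U_j|\geq 1-\om^2\|\bG\cV\|\,\|U\|_\infty\geq b_0$ for every $j$, with $b_0$ exactly as in (\ref{90}). Since $\hat U-U=\om^2\bG(\hat X-X)$, the seed bound gives $\|\hat U-U\|_\infty\leq\om^2\|\bG\|d$, hence $|\hat U_j|\geq b_0-\om^2\|\bG\|d$. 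The content of hypothesis (\ref{71-2}) is precisely the elementary rearrangement $a<(1-c)/(2-c)\Leftrightarrow b_0>c$, with $a=\om^2\|\bG\cV\|$ and $c=(3+\sqrt{3/2})\ep\|\bG\|$, which makes this lower bound strictly positive; hence $\hat U$ has no vanishing entry and (\ref{70}) is well-defined. Because $\hat\cV_{jj}=\hat X_j/\hat U_j$ with $\hat U_j\neq 0$, the support of $\hat\cV$ coincides with that of $\hat X$, giving $\mbox{\rm supp}(\hat\cV)=\mbox{\rm supp}(\hat X)\subset\mbox{\rm supp}(X)$.

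For the error bound (\ref{90}) I would expand the difference of the two ratios, $\cV_{jj}-\hat\cV_{jj}=(X_j\hat U_j-\hat X_j U_j)/(U_j\hat U_j)$, and split the numerator as $(X_j-\hat X_j)U_j+X_j(\hat U_j-U_j)$. Bounding $|X_j-\hat X_j|\leq d$ and $|\hat U_j-U_j|\leq\om^2\|\bG\|d$, using $\om^2\|\bG\cV\|<1/2$ to get the crude upper bounds $|U_j|\leq\|U\|_\infty<2$ and $|X_j|=|\nu_{i_j}||U_j|<2\|\cV\|$, and inserting $|U_j|\geq b_0$, $|\hat U_j|\geq b_0-\om^2\|\bG\|d$ in the denominator, assembles exactly (\ref{90}); the factor $2$ in the numerator is precisely this $\om^2\|\bG\cV\|<1/2$ slack.

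Finally, for the exact support recovery I would establish the reverse inclusion $\mbox{\rm supp}(X)\subset\mbox{\rm supp}(\hat X)$ by ruling out cancellation of any genuine component. For $j\in\mbox{\rm supp}(\cV)=\mbox{\rm supp}(X)$ one has $|X_j|=|\nu_{i_j}||U_j|\geq b_0/\|\cV^{-1}\|$, and the seed bound then forces $\hat X_j\neq 0$ as soon as $b_0>d\,\|\cV^{-1}\|$; by the same $(1-c)/(2-c)$ manipulation this is exactly condition (\ref{71-5}). Combining with the Tropp inclusion gives $\mbox{\rm supp}(\hat X)=\mbox{\rm supp}(X)$, and together with $\mbox{\rm supp}(\hat\cV)=\mbox{\rm supp}(\hat X)$ and $\mbox{\rm supp}(X)=\mbox{\rm supp}(\cV)$ this yields $\mbox{\rm supp}(\hat\cV)=\mbox{\rm supp}(\cV)$. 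I expect the main obstacle to be neither the inclusions nor the resolvent bound but the bookkeeping in the error estimate: propagating a sup-norm error through the nonlinear, entrywise map $X\mapsto\cV$ while keeping the constants sharp enough that (\ref{71-2}) and (\ref{71-5}) emerge as the exact thresholds $b_0>c$ rather than loose sufficient conditions, and in particular tracking the $\om^2$ factors consistently between the well-definedness threshold and the denominator of (\ref{90}).
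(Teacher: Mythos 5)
Your proposal is correct and follows essentially the same route as the paper's proof: Tropp's bound supplies the seed estimate $\|\hat X-X\|_\infty\leq(3+\sqrt{3/2})\ep$, a Neumann-series argument gives the componentwise lower bound $|U_j|\geq b_0$ (Proposition \ref{prop1}), perturbing the denominator yields well-definedness of (\ref{70}) (Proposition \ref{prop5}), and your numerator split $(X_j-\hat X_j)U_j+X_j(\hat U_j-U_j)$ is, after substituting $U=U^{\rm i}+\om^2\bG X$, identical to the paper's three-term decomposition, with the support argument matching Corollary \ref{cor1}. The $\om^2$ bookkeeping discrepancy you flag between the rearranged form of (\ref{71-2}) and the threshold $b_0>\om^2(3+\sqrt{3/2})\ep\|\bG\|$ is present in the paper itself (in the claimed equivalence of (\ref{71-2}) and (\ref{71})), so it is not a defect of your argument.
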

The proof of Theorem \ref{thm5} is given in Section \ref{sec:noise}. 

\begin{remark}
Condition (\ref{0.3}) is slightly stronger than (\ref{spark3})
and hence the OMP algorithm can be used to solve
(\ref{relax}) \cite{Tro3}. 

Condition (\ref{71-2}), (\ref{71-5}) and (\ref{38'}) all say 
in various ways  that the scatterers are
either weak or far apart. 
\end{remark}

\commentout{
If one is concerned only with recovering the support
of the target, then the conditions (\ref{0.3}), (\ref{71-2})
and (\ref{71-5}) can be relaxed by using
a result of Cand\'es and Plan  (Theorem 1.3 of \cite{CP})
which considers, instead of (\ref{relax}), a different
relaxation parameter 
\beq
\label{relax2}
\min_{ Z} {1\over 2} \|Y-\bPhi Z\|_2^2+{\ep\over \sqrt{n}} \cdot 2\sqrt{2\log{m}} \|Z\|_1.
\eeq
\begin{theorem} Assume that $E=(E_j) \in \IC^n$ and $E_j, j=1,...,n$ are
i.i.d. complex Gaussian  r.v.s with variance $\sigma^2$.
Suppose that
\beq
\label{1002}
\mu(\bPhi)\leq \Phi_0/\log{m}
\eeq
and
\beq
\label{1003}
s\leq {C_0 m\over \|\bPhi\|^2 \log{m}}.
\eeq
Assume also that  (\ref{38'}) 
and
\beq
\label{1001}
{1-2\om^2\|\bG\cV\|\over 1-\om^2 \|\bG\cV\|}|\nu_j|
> {8\ep\over \sqrt{n}} \cdot \sqrt{2\log{m}},\quad\forall j
\eeq
hold.  Then,   with probability at least
$1-2m^{-1}((2\pi \log{m})^{-1/2}+s m^{-1}) -\cO(m^{-2\log2})$, $\hbox{supp}(\hat\cV)=\hbox{supp}(\cV)$ where
$\hat\cV$ is given by (\ref{70}) with $\hat X$ being
the solution of (\ref{relax2}).
\label{thm6}
\end{theorem}
\begin{remark}
Conditions (\ref{1002}) and (\ref{1003}) can be replaced by
explicit, sufficient conditions such as 
\beq
&&\lt(\chi^{\rm i}+\sqrt{2} K\rt)\cdot\lt(\chi^{\rm s}+\sqrt{2} K\rt)\leq A_0/\log{m}\label{1007}\\
&& s\leq {C_0n}/\log{m}\label{1008}
\eeq
following from (\ref{mut}) and (\ref{norm}), respectively. 
In some sense, these conditions tend to be much weaker than (\ref{0.3}), (\ref{71-2})
and (\ref{71-5}). On the other hand, (\ref{1007})
and (\ref{1008}) are only proved  to hold
probabilistically. 
\end{remark}
}

\commentout{
On the other hand, once the support of the target  is known
then the mean square error of the least-squares estimator can be
estimated as follows. 

\begin{corollary}
 Let $S$ be the support
of the target $\nu$. Let $\bPhi_S$ be the column submatrix of $\bPhi$ restricted to the set $S$. Let $\hat X_S$ be
the least-squares estimator on $S$, i.e. 
\[
\hat X_S= (\bPhi_S^* \bPhi_S)^{-1} \bPhi^*_S Y.
\]
Denote by $\hat X \in \IC^n$ the trivial extension of $\hat X_S$ by filling in zeros. Under the assumptions of Theorem \ref{thm6} we have
\[
\IE\|\hat X-X\|_2^2\leq {s\ep^2\over n (1-\delta_S)}
\]
where $\delta_S$ the RIC. 

\end{corollary}
}

\commentout{
\subsection{Extended targets} 
Some of the above estimates, especially the coherence
bound,  can be easily carried out for extended targets.
Here we demonstrate how this can be done by using 
the Haar wavelet basis. 

The 2-dimensional Haar wavelet is given by
\beq
\label{haar}
\psi(x,z)=\lt\{
\begin{matrix}
1,& x, z\in [0, 1/2]\\
-1, & x\in [0,1/2],\,\,  z\in [1/2, 1]\\
-1,&x\in [1/2, 1],\,\, z\in [0,1/2]\\
1,& x, z\in [1/2, 1]
\end{matrix}
\rt.
\eeq
which has the following Fourier transform
\[
\hat\psi(\xi,\zeta)={-1\over 2\pi \xi\zeta} \lt(1-e^{-i\xi/2}\rt)^2\lt(1-e^{-i\zeta/2}\rt)^2.
\]
The following set of functions 
 \beq
 \psi_{\bp,\bq}(\br)=2^{-(p_1+p_2)/2}\psi(2^{-\bp}\br-\bq),\quad \bp,\bq\in \IZ^2,\quad \br=(x,z)\in \IR^2
 \eeq
 where and below we use the following notation 
 \[
 2^{-\bp}\br=(2^{-p_1}x,2^{-p_2}z)
 \]
 forms an orthonormal basis in $L^2(\IR^2)$ \cite{Dau}. 
 This is the Haar wavelet basis by which we can write
 the square-integrable (assumed) function  $\nu(\br)\cdot 
 u(\br)$ as
 \[
 \nu(\br)\cdot u(\br) =\sum_{\bp,\bq\in \IZ^2} c_{\bp,\bq} \psi_{\bp,\bq}(\br)
 \]
 where the coefficients $
 c_{\bp,\bq}=\lan \psi_{\bp,\bq}, \nu u\ran
 $ are square-summable.  
  
 In the Haar wavelet basis, the data vector has
 the components
 \beq
\label{7}
Y_k={ 2\pi} \sum_{\bp,\bq\in \IZ^2} 2^{(p_1+p_2)/2}c_{\bp,\bq} e^{-i\om 2^\bp\hat\br_k\cdot \bq}\hat\psi(\om 2^\bp \hat\br_k),\quad k=1,...,n. 
\eeq
Suppose now
\[
E=2\pi \sum_{|\bp|_\infty>p_*} \sum_{|\bq|_\infty>m_\bp}
2^{(p_1+p_2)/2}c_{\bp,\bq} e^{-i\om 2^\bp\hat\br_k\cdot \bq}\hat\psi(\om 2^\bp \hat\br_k)
\]
is norm-bounded by $\ep$, i.e. $\|E\|_2\leq \ep$
for some integers $p_*, m_\bp$. 
For $ |\bq|_\infty\leq m_\bp, |\bp|_\infty\leq p_*$ we define the target vector  $X=(X_l)$
to be 
\[
X_l= c_{\bp,\bq} 
\]
where the index $l$ is related to $\bp=(p_1,p_2), \bq=(q_1, q_2) \in \IZ^2$ by 
 \beqn 
l&=&\sum_{j_1=-p_*}^{p_1-1}\sum_{j_2=-p_*}^{p_2-1}(2m_{\mb j}+1)^2+(q_1+m_{\bp})(2m_{\bp}+1)+(q_2+m_\bp+1),\quad {\mb j}=(j_1, j_2)\in \IZ^2. 
\eeqn
In other words, $X\in \IC^m$ with
\[
m=\sum_{j_1=-p_*}^{p_*}\sum_{j_2=-p_*}^{p_*}
(2m_{\mb j}+1)^2. 
\]
With  the sensing matrix elements  defined as
\beq
\Phi_{k,l}&\equiv & 2\pi 2^{(p_1+p_2)/2} e^{-i\om 2^\bp\hat\br_k\cdot \bq}\hat\psi(\om 2^\bp \hat\br_k)\nn\\
&=&2^{-(p_1+p_2)/2} e^{-i\om (2^{p_1}q_1\alpha_k+
2^{p_2}q_2\beta_k )}
{-1\over \om^2 \alpha_k\beta_k} \lt(1-e^{-i \om 2^{p_1-1}\alpha_k}\rt)^2
\lt(1-e^{-i \om 2^{p_2-1} \beta_k}\rt)^2
\eeq
where $\alpha_k=\cos\tilde\theta_k,\beta_k=\sin\tilde\theta_k$ 
we can write (\ref{7}) as (\ref{8}). 

Instead of (\ref{relax}) we consider 
\beq
\label{relax2}
\min_{Z\in \IC^m} \|Z\|_1,\quad \hbox{s.t.}\,\, \|Y-\bPhi Z\|_2\leq \delta,\quad \delta\geq \ep
\eeq
and invoke the following result  by Donoho, Elad and Temlyakov(Theorem 3.1, \cite{DET})
\begin{proposition}
\label{prop6}
Suppose the sparsity $s$  of the target vector $X$  satisfies
\[
s< {1\over 4} (1+{1\over \mu (\bPhi)}). 
\]
Then the minimizer $\hat X$ of (\ref{relax2}) obeys the
error bound
\[
\|\hat X-X\|_2\leq {\ep+\delta\over \sqrt{1-\mu(\bPhi)(4s-1)}}.
\]
\end{proposition}
Note that (\ref{relax}) is just the Lagrangian form of (\ref{relax2}) for an appropriate $\delta$. 
}

\commentout{
\beqn
k&=&\sum_{j_1=-p_*}^{p'_1-1}\sum_{j_2=-p_*}^{p'_2-1}(2n_{\bj}+1)^2+(q'_1+n_{\bp'})(2n_{\bp'}+1)+(q'_2+n_{\bp'}+1),\quad |\bq'|_\infty\leq n_{\bp'},
\quad |\bp'|_\infty\leq p_*, 
\eeqn
}

\commentout{

\subsection{Smooth  extended targets}\label{sec:ext}
In this section we consider  smooth extended targets
and represent them by interpolating point targets on
a regular grid. We then apply Theorems \ref{thm4}  and \ref{thm5} for the approximate reconstruction of
the smooth extended targets.

Suppose the target function $v(\br)=\nu(\br) u(\br)$ is a compactly supported smooth
 function. This is indeed the case when $\nu$ is smooth
 and has a compact support. We write 
\beq
\label{530}
v(\br)=\int \delta(\br-\br') v(\br') d\br'
\eeq
where $\delta(\br)$ is the Dirac-delta function
and  consider the filtered discretization of (\ref{530}) as follows.
Substituting 
\[
g_\eta(\br)={1\over {2\pi}\eta^2} e^{-{|\br|^2\over 2\eta^2}}. 
\]
 for the Dirac-delta function on the right
hand side of (\ref{530}) produces the filtered version $v_\eta$ of  $v$, i.e.
\beq
\label{531}
v_\eta(\br)=\int g_\eta(\br-\br') v(\br') d\br'.
\eeq
Clearly, $v_\eta$ tends to $v$ in the Schwartz space as $\eta\to 0$.
For our purpose, we can use $g_\eta(\br)=
\eta^{-2} g(|\br|/\eta)$ for   any Schwartz function $g$.

Next we discretize (\ref{531}) by replacing the integral by
the Riemann sum of step size $\ell < \eta$. We obtain
\beq
\label{533}
v_{\eta,\ell}(\br)=\ell^2\sum_{\bp\in I}
g_\eta(\br-\ell\bq ) v(\ell\bq),\quad I\subset \IZ^2
\eeq
which is as smooth as the interpolation element $g_\eta$ is. 
\commentout{ the difference in
the Sobolev norm $\|v_{\eta, \ell}-v\|_{k,1}$ can be
made arbitrarily small where
\[
 \|v\|_{k,1}=\max_{j\leq k} \lt\|{\partial^k\over \partial x^j\partial z^{k-j}} v \rt\|_1,\quad k\in \IN. 
 \]
 }

Since $v$ has a  compact support, $I$ is a finite set. 
For simplicity let  $I$ be the square sublattice 
\[
I=\{\bq=(q_1, q_2): q_1, q_2= 1,...,\sqrt{m}\}
\]
of total cardinality $m$. Let $j=(q_1-1)\sqrt{m}+ q_2$. 
Define the target vector $X=(X_j)\in \IC^m$ with $X_j=v(\ell\bq)$.
Now we write
the data vector $Y$ 
in the form (\ref{8}) with  the sensing matrix elements 
\beq
\label{535}
\Phi_{lj}&=&{1\over  \hat g_\eta(\om)}
\int_{\IR^2} g_\eta(\br'-\ell\bq) e^{-i\om \hat\br_l\cdot \br'} d\br',\quad j=(q_1-1)\sqrt{m}+ q_2\\
&=& e^{-i\om \ell\hat\br_l\cdot \bq}\nn
\eeq
and the error term $E$ due to the filtered discretization. 
For sufficiently small $\eta, \ell$ we may assume $\|E\|_2\leq \ep n^{1/2}$ for a given $\ep>0$ (see Remark \ref{rmk07} and below). 
We aim to reconstruct $X$ approximately  by the compressed-sensing methodology. 

The crucial  observation is that the sensing matrix (\ref{535}) is
identical to (\ref{10-11}) with $(x_j, z_j)=\ell\bq$ and
any {\em isotropic} filter function.
Therefore
Theorems \ref{thm4} and \ref{thm5}  hold verbatim  for
 the case of smooth extended targets formulated
above.

It should be noted that while Theorem \ref{thm5-2} holds
for extended targets, it is perhaps not very useful 
in view of Remark \ref{rmk2} since both
$m$ and the sparsity $s$ are $\cO(\ell^{-2})$. 
Indeed, the recovery criterion as formulated in Proposition \ref{tropp} is not suitable for treating extended targets which
by definition are excluded from the assumed target ensemble.

How small must $\eta$ and $\ell$ be in order to ensure
that $\|E\|_2\leq \ep n^{1/2}$? This  
can be answered roughly as follows. First, by the inequality $\|E\|_2\leq \|E\|_\infty \sqrt{n}$
it suffices to have $\|E\|_\infty \leq \ep $. Now
consider the transformation
$\cT$, defined by 
\[
 \cT v(\hat \br)={1\over \hat g_\eta(\om)} \int v(\br') e^{-i\om \hat \br \cdot \br'} d\br', 
 \]
 from  the space of  Schwartz functions on $\IR^2$ to
 the space of Schwartz functions on the circle, cf. (\ref{sa}). By definition
  \[
 E(\hat\br_l)= \cT v(\hat\br_l) -\cT v_{\eta,\ell}(\hat\br_l). 
 \]
 \commentout{
Since $\mu(\bPhi)=\cO(n^{-1/2})$ for sufficiently large $\om$ and
$s=\cO(\ell^{-2})$ for extended targets  as noted above we would like
to have $n^{-1}=\cO(\ell^4)$ in order to apply Theorem \ref{thm5}.  This gives rise to
the rough sufficient
condition $\|E\|_\infty
\leq \ep \ell^2$  for $\|E\|_2\leq \ep$. 

For sufficiently large $\om$
we have
\[
\|\cT (v-v_{\eta,\ell})\|_\infty\leq c_k
\om^{-k} \|v-v_{\eta,\ell}\|_{k,1}
\]
for any $k\in \IN$ and some $c_k>0$.
Consequently 
 \beq
 \label{536}
 \|v-v_{\eta,\ell}\|_{k,1} \leq c_k^{-1} \om^k \ell^2\ep 
 \eeq
  would be
 an alternative sufficient condition for $\|E\|_2\leq \ep$. 
 Considering, for example, $k=2$ in (\ref{536}) and 
 taking into account $\om \ell\gg 1$ we obtain the
 sufficient condition
  \beq
 \label{537}
 \|v-v_{\eta,\ell}\|_{2,1}\leq C\ep
 \eeq
 with the constant $C=\cO(\om \ell)$. In other words,
 the higher the frequency is the less  accurate 
 the discrete approximation needs to be. 
}

}
\subsection{Three dimensions}
\label{sec:3}
Here we consider the extension of the coherence bound,
Theorem \ref{thm4}, to the three dimensional setting. 
The main point here is to demonstrate the
decoherence effect due to the extra dimension. 
For simplicity, we will not consider the  improved
performance as a result of  avoiding
Blind Spots.

Instead of a square lattice, the computational domain
is a cubic lattice of spacing $\ell$. Each side of
the cubic lattice has $m^{1/3}$ grid points.
In three dimensions, the scattering amplitude has the
same expression (\ref{10.11}), except that
 the sampling direction $\hat\br=(\tilde\alpha, \tilde\beta, \tilde\gamma)$  are   parametrized by two polar angles
$\tilde\theta, \tilde\phi$ as 
\beq
\label{17}
\tilde\alpha=\cos{\tilde\theta}\cos{\tilde\phi},\quad \tilde\beta=\cos{\tilde\theta}\sin{\tilde\phi},
\quad \tilde\gamma=\sin{\tilde\theta}
\eeq

 \begin{theorem}\label{thm6}
 Let the sensing matrix $\bPhi$ be given according to the sensor ensemble. 
Suppose (\ref{m-2}) holds for some constants $\delta$ and $K$
and suppose $\pdfi, \pdfs \in C^1$.  
Then   the sensing matrix (\ref{14.3}) satisfies the coherence bound
\beq
\label{mut'}
\mu(\bPhi) < \lt(
\chi^{\rm i}+{\sqrt{2} K\over \sqrt{p}}\rt)
\lt(\chi^{\rm s}+{\sqrt{2}K\over \sqrt{n}}\rt)
\eeq
 with probability greater than $(1-\delta)^2$
 where in general $\chii$ (resp. $\chis$) satisfies the bound
 \beq
 \label{21-3'}
&\chi^{\rm i}\leq {c (1+\om \ell)^{-1}  \|\pdfi\|_{1,\infty}}&\lt(\hbox{resp.}  \quad \chi^{\rm s}\leq {c   ( 1+\om \ell)^{-1}\|\pdfi\|_{1,\infty}}\rt).  
 \eeq
\end{theorem}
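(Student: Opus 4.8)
The plan is to mirror the two-dimensional coherence computation of Theorem~\ref{thm4}, so I would first recall the structure of the coherence parameter. Since all entries of $\bPhi$ have unit modulus, the denominator in $\mu(\bPhi)$ equals $\sqrt{np}$, and the $(i,j)$-th coherence numerator factors as a product of an ``incident'' sum over the $p$ probe angles and a ``sampling'' sum over the $n$ sensor angles. Concretely, writing $\br_i-\br_j=\mbr$, the incident factor is $p^{-1}\sum_{k=1}^p e^{i\om \bd(\theta_k)\cdot\mbr}$ and the sampling factor is $n^{-1}\sum_{l=1}^n e^{-i\om\hat\br(\thetatil_l)\cdot\mbr}$, where in three dimensions the direction vectors are parametrized by the two polar angles as in (\ref{17}). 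The goal is to bound each factor by its mean plus a fluctuation term, and the fluctuation is controlled exactly as in Theorem~\ref{thm4} by a Hoeffding-type concentration inequality for sums of bounded i.i.d.\ complex random variables, yielding the $\sqrt{2}K/\sqrt{p}$ and $\sqrt{2}K/\sqrt{n}$ terms together with the probability $(1-\delta)^2$ under the cardinality assumption (\ref{m-2}). This part is identical in form to two dimensions and requires no new idea.

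The genuinely new content is the bound on the means $\chii$ and $\chis$, which are the expectations
\[
\chi^{\rm i}=\left|\int e^{i\om\bd(\theta,\phi)\cdot\mbr}\,\pdfi(\theta,\phi)\,d\theta\,d\phi\right|,
\]
and similarly for $\chis$, now taken over the two-sphere with the density $\pdfi\in C^1$. First I would reduce to estimating an oscillatory integral over $S^2$ with phase $\om\,\bd(\theta,\phi)\cdot\mbr$ and amplitude $\pdfi$. The key geometric fact is that $|\mbr|\geq \ell$ for distinct lattice points, so the frequency of oscillation is of order $\om\ell$ or larger. In two dimensions the decay rate $(1+\om\ell)^{-1/2}$ comes from stationary phase on the circle, where generically two nondegenerate stationary points each contribute $(\om\ell)^{-1/2}$; the Blind Spot condition (\ref{bs}) is precisely where stationarity degenerates. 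In three dimensions the phase $\bd\cdot\mbr$ restricted to $S^2$ has stationary points at the two poles $\pm\mbr/|\mbr|$, and the additional integration dimension improves each contribution to order $(\om\ell)^{-1}$ rather than $(\om\ell)^{-1/2}$. This is the advertised decoherence effect of the extra dimension, and explains why only $C^1$ smoothness of the density is needed to obtain the full $(1+\om\ell)^{-1}$ rate.

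Concretely I would orient coordinates so that $\mbr$ points along the $\tilde\gamma$-axis, reducing the phase to $\om|\mbr|\cos\tilde\theta$ in the polar variable; the integral then separates, with the azimuthal $\tilde\phi$-integral smoothing the amplitude and the $\tilde\theta$-integral being a one-dimensional oscillatory integral whose endpoints and single stationary point are handled by one integration by parts against the $C^1$ density. The Jacobian factor $\cos\tilde\theta$ from the spherical measure vanishes at the poles, which is exactly what tames the would-be stationary-point singularity and upgrades the decay to $(\om\ell)^{-1}$ uniformly in the direction of $\mbr$. Collecting these gives $\chi^{\rm i}\leq c(1+\om\ell)^{-1}\|\pdfi\|_{1,\infty}$, and the identical argument applied to the sampling density gives the bound on $\chi^{\rm s}$.

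I expect the main obstacle to be making the oscillatory-integral estimate uniform over all admissible displacement directions $\mbr/|\mbr|$ and uniform in how the support of $\pdfi$ sits relative to the poles $\pm\mbr/|\mbr|$. Unlike two dimensions, one cannot simply invoke a Blind-Spot dichotomy here, since the theorem claims the single rate $(1+\om\ell)^{-1}$ for arbitrary $C^1$ densities; I must therefore verify that the vanishing of the spherical Jacobian at the poles genuinely absorbs the stationary point for \emph{every} orientation of $\mbr$, rather than treating the stationary and non-stationary cases separately. The careful bookkeeping of the boundary terms from the integration by parts, controlled uniformly by $\|\pdfi\|_{1,\infty}$ and independent of $\mbr$, is where the real work lies.
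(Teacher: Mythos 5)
Your proposal is correct and follows essentially the same route as the paper's proof: the fluctuation terms and the probability $(1-\delta)^2$ come from Hoeffding exactly as in Theorem \ref{thm4}, and the mean is bounded by rotating so that $\br-\br'$ lies along the polar axis, integrating out the azimuth to produce the marginal density $\pdfii$, and performing a single integration by parts in the polar variable in which the spherical Jacobian is precisely the derivative of the phase, so the ``stationary point'' is absorbed exactly and the $(1+\om\ell)^{-1}\|\pdfi\|_{1,\infty}$ bound is uniform in the direction of $\br-\br'$ with no case analysis. The only quibble is notational: with the latitude parametrization (\ref{17}) the surface measure carries $\cos\tilde\theta$ and the phase is $\om|\br-\br'|\sin\tilde\theta$ (not $\cos\tilde\theta$); the two must be paired consistently for the integration by parts to be exact, which is how the paper writes it in (\ref{55-2})--(\ref{57-2}).
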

Consequently, the asymptotic behavior (\ref{hf})
sets in faster in three dimensions than in two dimension in general. 

\subsection{Diffraction tomography: point sensors or sources }
\label{sec:near2}
\begin{figure}[t]
\begin{center}
\includegraphics[width=0.4\textwidth]{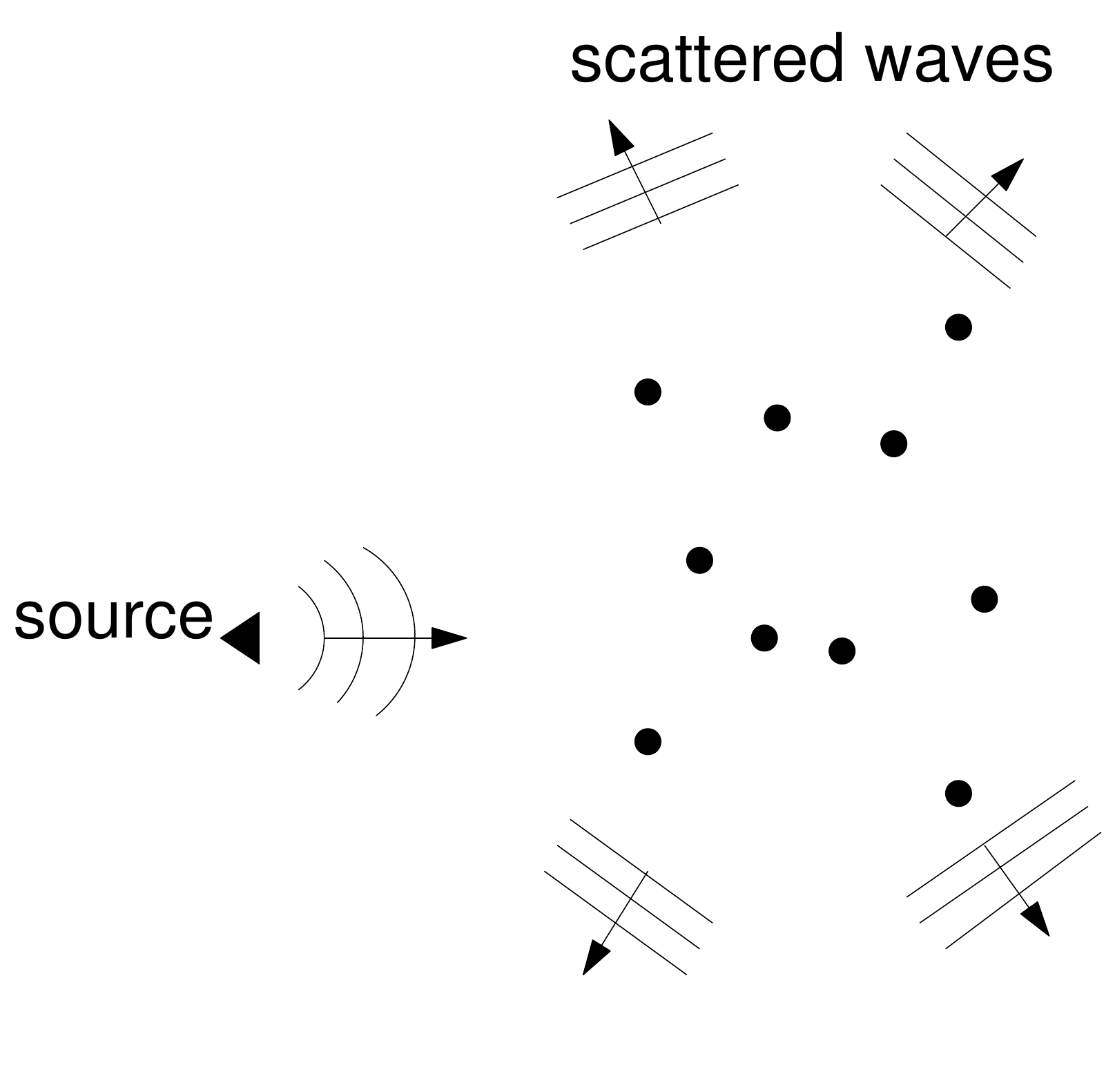}
\hspace{1cm}
\includegraphics[width=0.4\textwidth]{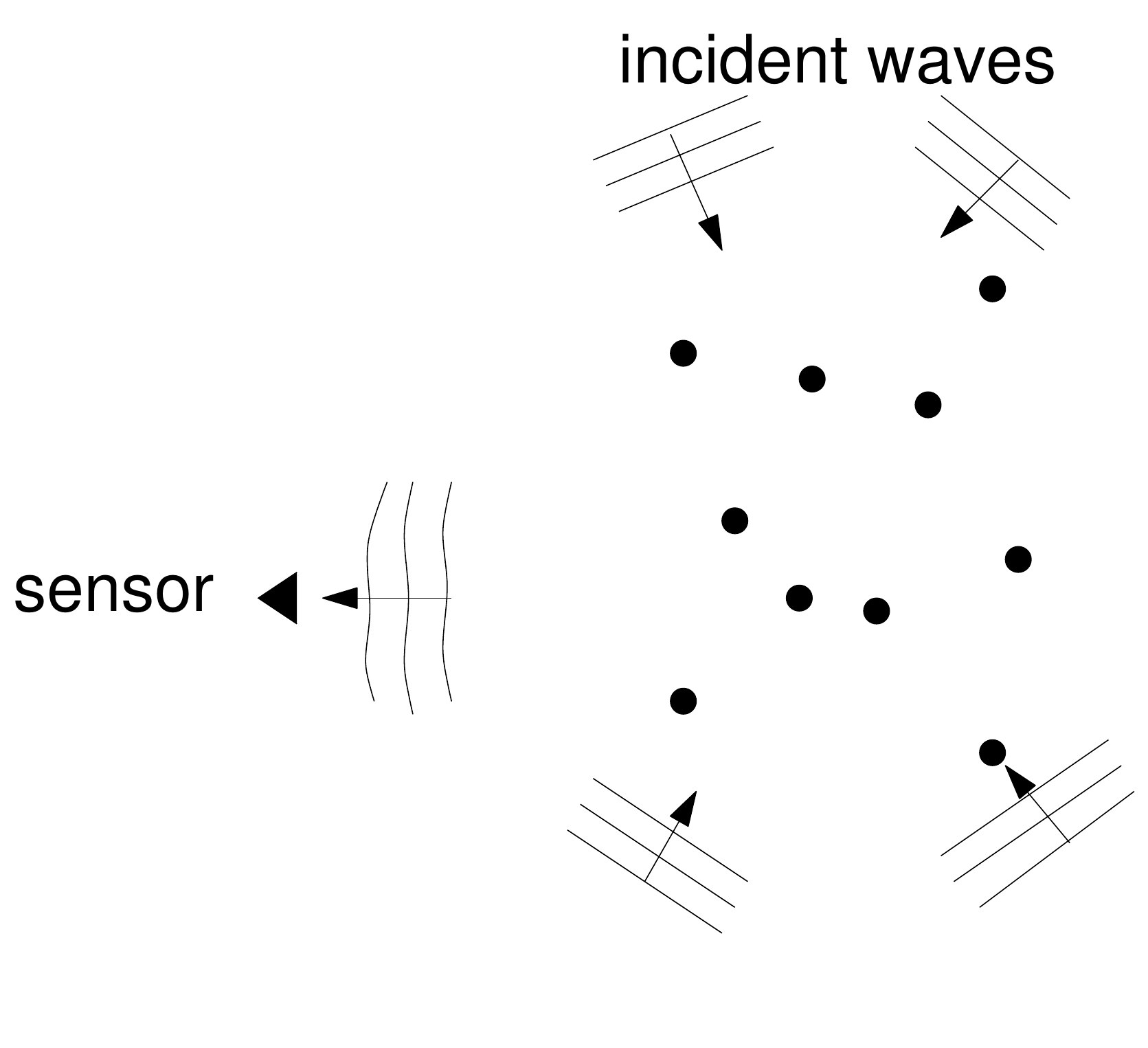}
\end{center}
\caption{SIMO (left) and MISO (right) measurements  with single point source (left) and single point sensor (right), respectively.}
\label{fig-dt'}
\end{figure}

\begin{figure}[t]
\begin{center}
\includegraphics[width=0.4\textwidth]{DT-geo.pdf}
\hspace{1cm}
\includegraphics[width=0.4\textwidth]{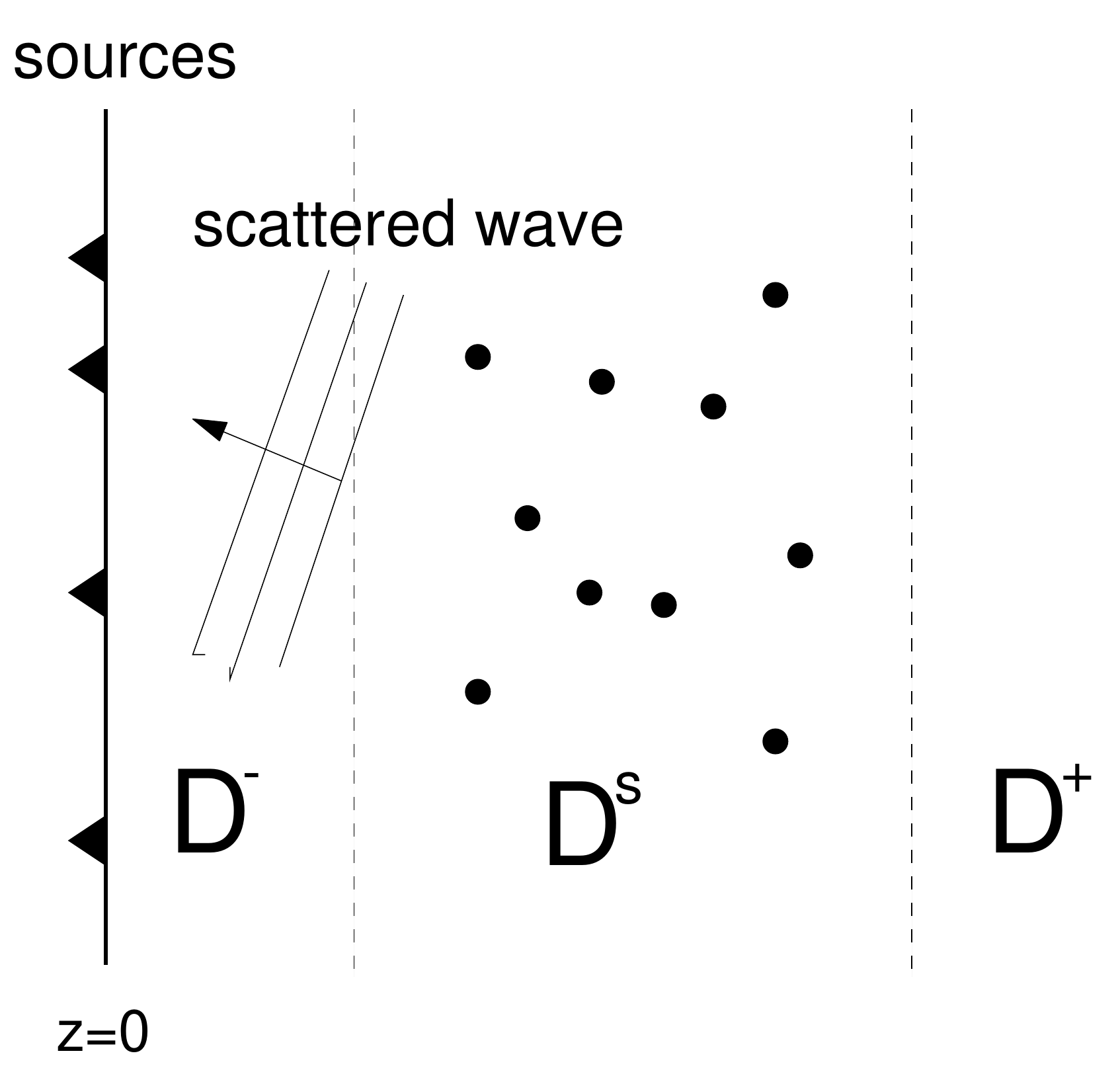}
\end{center}
\caption{SIMO (left) and MISO (right) measurements  with single incident plane wave (left) and single sampling direction  (right), respectively.}
\label{fig-dt}
\end{figure}

Instead of measuring the scattering amplitudes, one could 
measure  the scattered field at a set of  sampling points
and reconstruct  the targets from the measurement  data.
Likewise the incident wave may be a spherical wave emitted from a point source  instead
of a plane wave from far field. 
This is the measurement with point sources or sensors.   We will focus on the SIMO/MISO settings   and discuss both the two and
three dimensional cases.

First consider the simple setting with one point sensor at a fixed
location measuring the scattered fields due to  multiple incident plane waves (Figure \ref{fig-dt'}, right) emitted one at a time. This is
a  MISO measurement with one fixed point sensor.  
By the input-output  reciprocity (Appendix \ref{recipro}),
this is equivalent to the SIMO measurement of
 measuring multiple scattering amplitudes
due to one point source 
in near field (Figure \ref{fig-dt'}, left).  Hence it suffices to analyze the SIMO case.

Analogous to (\ref{10.11}), the scattering amplitude in the
direction $\hat\br$ is given by  
\beq
A(\hat\br, u^{\rm i})&=&{\om^2\over 4\pi}
\sum_{j=1}^m  \cG(\br_{j}, \br_0) \nu_j e^{-i\om\br_j\cdot\hat\br},\quad u^{\rm i}(\br)=G(\br,\br_0)\label{1058}
\eeq
where $\cG$ is the Green function including the multiple scattering effects, i.e.
\beq
\label{fltoo}
\cG(\br,\br_0)
&=&G(\br,\br_0)+\om^2 \sum_{j=1}^m \nu_{j} 
\cG(\br_{j},\br_0)G(\br, \br_{j}),\quad \br\neq \br_k, \,\,k=0,...,s\\
\cG(\br_k,\br_0)&=&G(\br_k,\br_0)+\om^2 \sum_{j\neq k} \nu_{j} 
\cG(\br_{j},\br_0)G(\br_k, \br_{j}),\,\,k=1,...,s\label{fltoo2}
\eeq
analogous to the Foldy-Lax equation (\ref{fl}). 
Define  the target vector $X=(X_j)\in \IC^m$ 
\[
X_j=\cG(\br_{j}, \br_0) \nu_j, 
\]
the data vector $Y=(Y_k)\in \IC^n$
\[
Y_k={4\pi \over \om^2} A(\hat\br_k)
\]
where $\hat\br_k=(\cos\tilde\theta_k, \sin\tilde\theta_k), k=1,...,n$ are the incident directions. 
Then we have $Y=\bPhi X$ where 
the sensing matrix $\bPhi=[\Phi_{kj}]$ is exactly as in (\ref{10-11}). 

We can solve this problem exactly as in Section \ref{simo-miso}
by first finding $X$ and then setting
\beqn
\nu_j&=&{X_j\over \cG(\br_j,\br_0)}\\
&=&{X_j\over G(\br_j,\br_0)+\om^2 \sum_{l=1}^m X_l  G(\br_0, \br_{l})}
\eeqn
where we have used (\ref{fltoo}). In the noisy case (\ref{8}), 
we proceed as before and set
\beqn
\nu_j
&=&{\hat X_j\over G(\br_j,\br_0)+\om^2 \sum_{l=1}^m \hat X_l  G(\br_0, \br_{l})}.
\eeqn
In other words, Theorems \ref{thm:bi} and \ref{thm5}
can be immediately generalized to this setting (one point source or one point sensor).

Next let us turn to  the more complicated measurement with
 multiple point sources and one sampling direction (MISO, Figure \ref{fig-dt}, right) 
or one incident wave and multiple sensors (SIMO, Figure \ref{fig-dt}, left). By the input-output  reciprocity (Appendix \ref{recipro}),
the two settings are  equivalent to each other. So we focus on
 the SIMO case below which requires similar but more
 delicate estimates than before. 

We are primarily interested in   the imaging set-up called {\em diffraction tomography} \cite{BW}. First, assume the two dimensional setting
with  the z-axis as  the imaging  direction (see Figure \ref{fig-dt}). 
To describe the near field measurement  we need  the Green function for  (\ref{helm}) 
  \[
  G(\br)={-i\over 4} H^{(1)}_0(\om|\br|)
  \]
  where $H^{(1)}_0$ is the zeroth order Hankel function of
  the first kind and admits 
  the Sommerfeld integral representation
  \beq
  \label{somm}
  H^{(1)}_0(\om |\br|)={1\over \pi}
  \int e^{i\om(|z|\gamma(\alpha)+x\alpha)} {d\alpha\over \gamma(\alpha)}
  \eeq
 with 
  \beqn
  \gamma(\alpha)=\lt\{\begin{array}{ll}
  \sqrt{1-\alpha^2}, & |\alpha|< 1\\
  i\sqrt{\alpha^2-1},& |\alpha|> 1
  \end{array}
  \rt.
  \eeqn
  \cite{BW}. 
  
  As shown in Figure \ref{fig-dt} the
whole space is divided  into  three domains $D^-\cup D^{\rm s}
\cup D^+$ where the infinite slab $D^{\rm s}$ contains
all the scatterers and $D^-, D^+$ are free half spaces.
The sensors are placed in either $D^-$ (the reflection mode)
or $D^+$ (the transmission mode).

\begin{figure}[t]
\begin{center}
\includegraphics[width=0.45\textwidth]{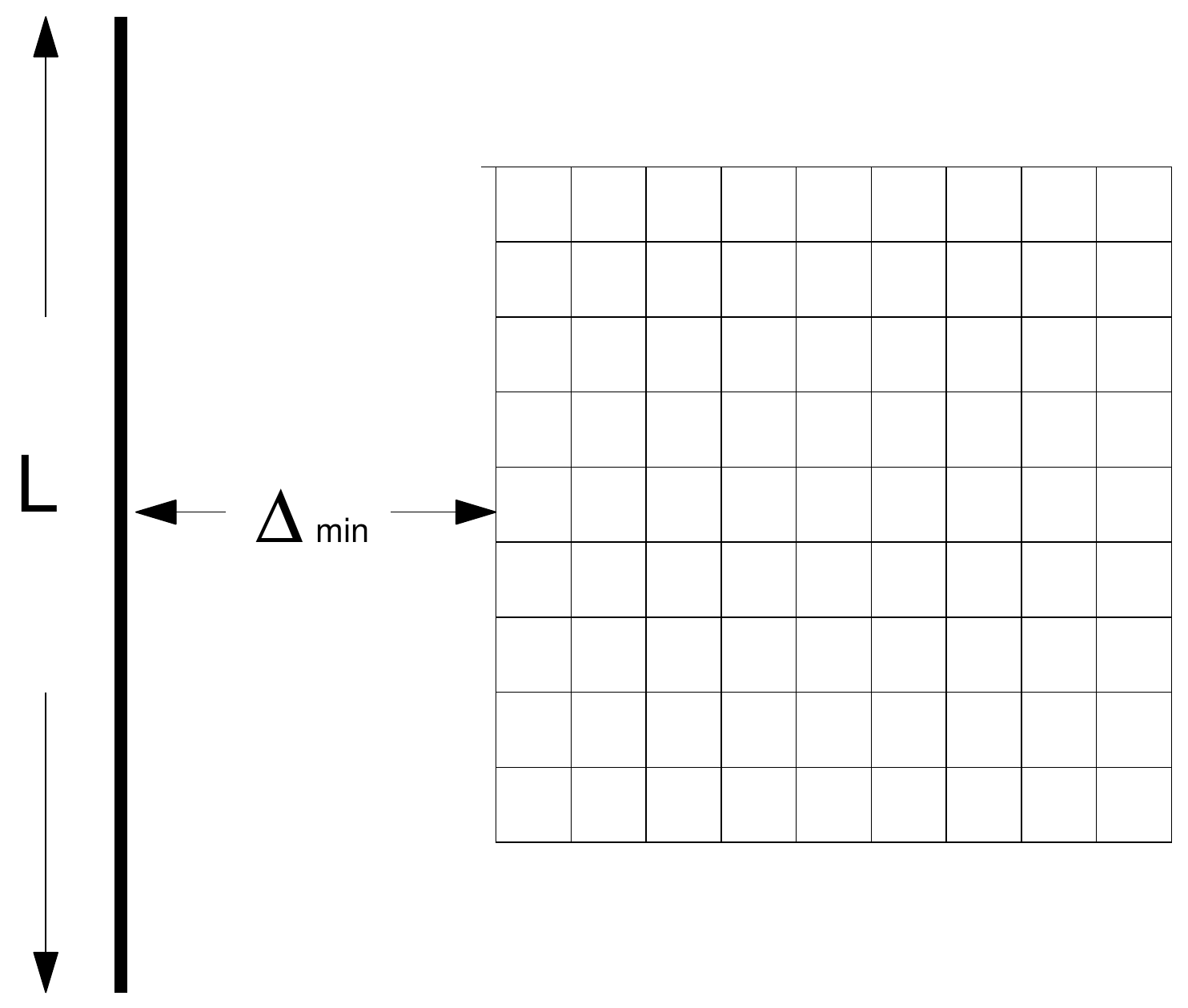}
\end{center}
\caption{Near-field imaging geometry}
\label{fig-near}
\end{figure}

To fixed the idea, suppose the sensors are located in 
the line segment of length $L$ on $\{z=0\} \subset D^-$ symmetrical w.r.t. to
the square lattice of the computational domain, see Figure
\ref{fig-near}. Let $\Delta_{\rm min}$ and $\Delta_{\rm max}
$, respectively,  be
the minimum and maximum distances  between the line segment 
and the lattice. In two dimensions,
\beq
\Delta_{\rm max}=\sqrt{{1\over 4} (L+\ell\sqrt{m})^2+
(\Delta_{\rm min}+\ell\sqrt{m})^2}.\label{1046}
\eeq

Using (\ref{somm}) we can express
the scattered field in $D^-\cup D^+$ 
as
\beq
\label{born}
u^{\rm s}(\br)
&=&{-i\om^2\over 4\pi}
\int_{D^{\rm s}} d\br' \int {d\alpha \over \beta}\nu (\br') u(\br')
e^{i\om(\beta|z-z'| +\alpha (x-x'))},\quad \br\in D^-\cup D^+.
\eeq

Let $\ba_j=(\xi_j, 0), j=1,...,n$ where $\xi_j$ are i.i.d.
uniform r.v.s  in a square  of length  $L$. Let 
$Y=(u^{\rm s}(\ba_j))\in \IC^n$ be  the data vector.
For the SIMO measurement  at near field 
 the sensing matrix elements, after factoring out $\om^2$,  are simply
\beqn
G(\ba_j-\br_l)&=&{-i\over 4\pi} \int {d\alpha \over \gamma}
e^{i\om(\gamma z_l +\alpha (\xi_j-x_l))},\quad j=1,...,n, \,\, l=1,...,m. 
\eeqn

\commentout{
In the special case of a  {\em linear} scatterer, 
then 
\beqn
\mu(\bPhi) \leq {2K^2\over \sqrt{pn}}+ {c_1\over ({\om \ell })^{3/2}}
\lt({\sqrt{2} K\over \sqrt{n}}+{c_2\ln{(\om L)}\over \om L}\rt)
\eeqn
for some constants $c_1, c_2>0$ with probability greater 
than $(1-\delta)^2$. 
}

For the three dimensional case, we use
the Weyl representation formula  
for  the Green function
 \beq
-{e^{i\om |\br|}\over 4\pi  |\br|}={-i\om \over 8\pi^2}
\int {d\alpha d\beta \over \gamma} 
 e^{\lt[i\om (\alpha x+\beta y+
 \gamma |z|)\rt]} \label{weyl}
 \eeq
 where
 \beqn
\gamma&=&\sqrt{1-\alpha^2-\beta^2},\quad \alpha^2+\beta^2\leq 1\\
\gamma&=&i\sqrt{\alpha^2+\beta^2-1},\quad \alpha^2+\beta^2>1,
\eeqn
\cite{BW}. 

 Using (\ref{weyl}) we can express
the scattered field  at $z=0$ as
\beqn
u^{\rm s}(\br)
&=&-{i\om^3\over 8\pi^2}
\int_{D^{\rm s}} d\br' \int {d\alpha d\beta \over \gamma}\nu (\br') u(\br')
e^{i\om(\gamma z' +\alpha (x-x')+\beta(y-y'))},\quad \br\in D^-\cup D^+.
\eeqn

Let $\ba_j=(\xi_j,\eta_j, 0), j=1,...,n$ where $(\xi_j,\eta_j)$ are i.i.d.
uniform r.v.s  in a square of length $L$. Assume again
that the square aperture is symmetrical w.r.t. 
the cubic lattice of the computational domain
as indicated in Figure \ref{fig-near}. For this imaging geometry,
the maximum distance $\Delta_{\rm max}$ between
the aperture and the lattice is 
\beq
\label{1046'}
\Delta_{\rm max}=
\sqrt{{1\over 4} (L+\ell m^{1/3})^2+
{1\over 4} (L+\ell m^{1/3})^2+(\Delta_{\rm min} +\ell m^{1/3})^2}. 
\eeq

 Let 
$Y=(u^{\rm s}(\ba_j))\in \IC^n$ be  the data vector.
The SIMO sensing matrix elements , after factoring out $\om^2$,  become
\beqn
G(\ba_j-\br_l) &=&{-i\om \over 8\pi^2} \int {d\alpha d\beta \over \gamma}
e^{i\om(\gamma z_l +\alpha (\xi_j-x_l)+\beta(\eta_j-y_l))}
\eeqn
for $j=1,...,n,  \,\, l=1,...,m$.

 \begin{theorem}\label{thm7}
Suppose  
\beq
\label{m}
m\leq {\delta\over 2} e^{2K^2/r_0^2},\quad \delta>0
\eeq
where $c_0$ depends on the minimum distance $\Delta_{\rm min} $ between
$\{z=0\}$ and the lattice (For $d=2$, $r_0=
\cO(-\log{\Delta_{\rm min}})$; for $d=3$, $r_0=\cO(\Delta_{\rm min}^{-1})$).

The mutual coherence obeys
\beq
\label{1043}
\mu(\bPhi)& \leq & |G(\Delta_{\rm max})|^{-2} \lt({\sqrt{2}K \over \sqrt{n}}+ {c\over \sqrt{\om L}}\rt),\quad d=2\\
\mu(\bPhi) &\leq  & |G(\Delta_{\rm max})|^{-2} \lt({\sqrt{2}K \over \sqrt{n}}+ {c\over {\om L}}\rt),\quad d=3\label{1044}
\eeq
for some constant $c$ (independent of $\om>0$ for $d=2$  and $\om>1$ for $d=3$),   with probability greater 
than $(1-\delta)^2$,  where $\Delta_{\rm max}$ is given by (\ref{1046}) for
$d=2$ and (\ref{1046'}) for $d=3$. 
\end{theorem}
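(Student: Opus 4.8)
The plan is to mirror the architecture of the proof of Theorem \ref{thm4}, reducing the coherence bound to three ingredients: a deterministic lower bound on the column norms that supplies the prefactor $|G(\Delta_{\rm max})|^{-2}$, a concentration estimate that supplies the $\sqrt{2}K/\sqrt{n}$ term, and an oscillatory-integral estimate for the mean correlation that supplies the $c/\sqrt{\om L}$ (resp. $c/\om L$) term. Writing $\Phi_{kl}=G(\ba_k-\br_l)$ and fixing two columns $i\ne j$, the coherence is the ratio of $\bigl|\sum_k\Phi_{ki}\overline{\Phi_{kj}}\bigr|$ to $\bigl(\sum_k|\Phi_{ki}|^2\sum_k|\Phi_{kj}|^2\bigr)^{1/2}$. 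I would treat the summands $Z_k=G(\ba_k-\br_i)\overline{G(\ba_k-\br_j)}$ as i.i.d.\ random variables in the sensor positions $\ba_k$ and split $\tfrac1n\sum_k Z_k$ into its mean $\IE Z_1$ plus a fluctuation, so that
\[
\mu(\bPhi)\le \frac{\bigl|\sum_k(Z_k-\IE Z_1)\bigr|+n|\IE Z_1|}{n\,|G(\Delta_{\rm max})|^{2}}.
\]

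First I would bound the normalization from below. Every sensor--lattice distance satisfies $|\ba_k-\br_l|\le\Delta_{\rm max}$, and $|G|$ is monotone decreasing in the distance: in $d=2$ this is the monotone decay of the Hankel modulus $|H_0^{(1)}(x)|$ (clear from Nicholson's integral representation), and in $d=3$ it is the explicit $|G(r)|=1/(4\pi r)$. Hence $|\Phi_{kl}|\ge|G(\Delta_{\rm max})|$ for every entry, so $\sum_k|\Phi_{ki}|^2\ge n|G(\Delta_{\rm max})|^2$, which produces the factor $|G(\Delta_{\rm max})|^{-2}$ in (\ref{1043})--(\ref{1044}). The same geometry gives $|Z_k|\le|G(\Delta_{\rm min})|^2$, and $|G(\Delta_{\rm min})|$ is exactly what the parameter $r_0$ records --- the logarithmic blow-up of $H_0^{(1)}$ at the origin giving $r_0=\cO(-\log\Delta_{\rm min})$ in $d=2$, the $1/r$ blow-up giving $r_0=\cO(\Delta_{\rm min}^{-1})$ in $d=3$. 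With the range of the summands controlled by $r_0$, Hoeffding's inequality for sums of i.i.d.\ bounded variables bounds the fluctuation, and a union bound over the $\cO(m^2)$ column pairs, absorbed by hypothesis (\ref{m}), yields the $\sqrt2K/\sqrt n$ term on an event of probability at least $(1-\delta)^2$; this is verbatim the concentration step of Theorem \ref{thm4}.

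The heart of the matter, and the step I expect to be the main obstacle, is the estimate on the mean $\IE Z_1=\tfrac1L\int G(\ba-\br_i)\overline{G(\ba-\br_j)}\,d\ba$ (resp.\ $\tfrac1{L^2}\iint$ in $d=3$), which must be shown to be at most $c\,|G(\Delta_{\rm max})|^{2}/\sqrt{\om L}$ (resp.\ $c\,|G(\Delta_{\rm max})|^{2}/\om L$). Writing $G=\tfrac{-i}{4}H_0^{(1)}(\om|\cdot|)$ and invoking the large-argument asymptotics --- equivalently the Sommerfeld/Weyl representations (\ref{somm})/(\ref{weyl}) --- the integrand is an amplitude of size $\cO(|G(\Delta_{\rm max})|^2)$ times the oscillatory factor $e^{i\om\psi(\ba)}$ with phase $\psi(\ba)=|\ba-\br_i|-|\ba-\br_j|$. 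Rescaling the aperture variable to unit size makes $\om L$ the effective large parameter. Where $\nabla\psi\ne0$ one integration by parts gains a power of $\om L$; at a stationary point, where $\br_i$ and $\br_j$ subtend the same direction from $\ba$, one checks non-degeneracy --- for the line aperture $\psi''=z_i^2/r_i^3-z_j^2/r_j^3$ cannot vanish together with $\psi'$ unless $\br_i=\br_j$ --- so van der Corput with $k=2$ gives the $\cO((\om L)^{-1/2})$ decay that dominates in $d=2$. In $d=3$ the aperture is two-dimensional and the Hessian of $\psi$ is non-degenerate at any stationary point, improving the decay to $\cO((\om L)^{-1})$; the extra integration dimension is the source of the faster rate, consistent with the decoherence theme of Theorem \ref{thm6}.

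The difficulties I anticipate are precisely in making this last step uniform and rigorous: one must verify the non-degeneracy of $\psi$ uniformly over all distinct lattice pairs and over the aperture, control the amplitude and its first (resp.\ second) derivatives --- including the near-aperture singularity of $G$, which is where $\Delta_{\rm min}$ and hence $r_0$ re-enter the amplitude bounds --- and handle the boundary contributions of the finite aperture in the integration by parts. Assembling the three pieces in the displayed inequality gives $\mu(\bPhi)\le|G(\Delta_{\rm max})|^{-2}\bigl(\sqrt2K/\sqrt n+c/\sqrt{\om L}\bigr)$ in $d=2$, and with $c/\om L$ in $d=3$, on an event of probability at least $(1-\delta)^2$, which is the assertion of Theorem \ref{thm7}.
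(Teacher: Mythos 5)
Your three-ingredient decomposition is exactly the paper's: the lower bound (\ref{1041}) on the column norms via the monotone decay of $|G|$ and the farthest pair at distance $\Delta_{\rm max}$, and the Hoeffding/union-bound step (\ref{1045}) with the summand range $r_0$ controlled by $\Delta_{\rm min}$, are both verbatim what the paper does. Where you diverge is the estimate of $\IE\{G^*(\ba_j-\br)G(\ba_j-\br')\}$. You propose a physical-space stationary-phase argument on the aperture integral, using the large-argument asymptotics of the Hankel function and van der Corput on the phase $\psi(\ba)=|\ba-\br|-|\ba-\br'|$. The paper instead stays in the spectral domain: it inserts the Sommerfeld (resp.\ Weyl) representation (\ref{somm}) (resp.\ (\ref{weyl})) for each Green function, takes the expectation over the uniform sensor position \emph{first}, which produces the kernel $2\sin((\alpha'-\alpha)\om L/2)/(\om L(\alpha'-\alpha))$ in (\ref{1900}), and then splits the $(\alpha,\alpha')$ integration into an evanescent region, a near-diagonal strip of width $H/(\om L)$, and its complement where the kernel is bounded by $1/H$; optimizing $H=\sqrt{\om L}$ gives the $(\om L)^{-1/2}$ rate in $d=2$, and the product of two such kernels gives $(\om L)^{-1}$ in $d=3$. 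The spectral route buys precisely what you correctly identify as the obstacles of yours: it requires no uniform non-degeneracy of $\psi$ over all $\cO(m^2)$ lattice pairs, no control of the near-aperture singularity inside an oscillatory amplitude, and no boundary terms from the finite aperture, since the aperture length enters only through the explicit sinc kernel. Your route is plausible in spirit and dimensionally consistent, but as written it is a plan with the hardest step (uniformity of the stationary-phase constants over all pairs, and the dependence of $|\psi''|$ on $\ell$, $\Delta_{\rm min}$, $\Delta_{\rm max}$) left unresolved, whereas the paper's region-splitting argument delivers the stated constants directly, if somewhat crudely.
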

\begin{remark}
In view of the presence of the factor 
$|G(\Delta_{\rm max})|^{-2}$, we see that
Theorem \ref{thm7} is useful primarily in  the high frequency
limit $\om \gg 1$ with $L=\cO(1)$.  
\end{remark}

Analogous to Theorems \ref{thm:bi} and \ref{thm5}, we have 
\begin{corollary}
\label{cor:bi-dt} 
In the absence of noise ($\ep=0$), 
let $X$ be  a BP solution with the sensing matrix of  diffraction tomography. Under the same assumptions of Theorem \ref{thm7} the formula
(\ref{exact})  recovers exactly the target of sparsity
smaller than 
\beq
{1\over 2} \lt(1+ 
 |G(\Delta_{\rm max})|^{2} \lt({\sqrt{2}K \over \sqrt{n}}+ {c\over \sqrt{\om L}}\rt)^{-1}\rt),&&d=2\\
{1\over 2}\lt(1+ |G(\Delta_{\rm max})|^{2} \lt({\sqrt{2}K \over \sqrt{n}}+ {c\over {\om L}}\rt)^{-1}\rt),&&d=3
\eeq
for some constant $c$ (independent of $\om>0$ for $d=2$  and $\om>1$ for $d=3$)
with probability greater than $(1-\delta)^2$. 

In the presence of noise,  Theorem \ref{thm5} holds. 
\end{corollary}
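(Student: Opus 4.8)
The plan is to deduce Corollary \ref{cor:bi-dt} from Theorem \ref{thm7} exactly as Theorem \ref{thm:bi} was deduced from Theorem \ref{thm4}, treating the diffraction-tomography sensing matrix on the same footing as the far-field matrix (\ref{10-11}). First I would observe that in the noiseless case $\ep=0$ the recovery of the intermediate target vector $X=(\cG(\br_j,\br_0)\nu_j)$ is a pure sparse-recovery problem $Y=\bPhi X$, so I only need a coherence bound plus the deterministic spark criterion of Proposition \ref{thm:spark}. Theorem \ref{thm7} supplies, with probability greater than $(1-\delta)^2$, the coherence estimate
\beq
\mu(\bPhi) \leq |G(\Delta_{\rm max})|^{-2} \lt({\sqrt{2}K\over \sqrt{n}}+{c\over \sqrt{\om L}}\rt),\quad d=2,
\eeq
and the analogous bound (\ref{1044}) for $d=3$. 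Feeding this directly into the sufficient condition (\ref{spark3}) of Proposition \ref{thm:spark},
\[
s\leq {1\over 2}\lt({1\over \mu(\bPhi)}+1\rt),
\]
and replacing $1/\mu(\bPhi)$ by the lower bound coming from the upper bound on $\mu(\bPhi)$, yields precisely the stated sparsity thresholds
\[
{1\over 2}\lt(1+|G(\Delta_{\rm max})|^{2}\lt({\sqrt{2}K\over \sqrt{n}}+{c\over \sqrt{\om L}}\rt)^{-1}\rt)
\]
for $d=2$ and the corresponding expression for $d=3$. This is the entire content of the first assertion, and the probability bound $(1-\delta)^2$ is inherited verbatim from Theorem \ref{thm7}.

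Next I would address the step from recovering $X$ to recovering the physical strengths $\nu=(\nu_j)$. Here the key point is that the diffraction-tomography setting has already been reduced, in the text preceding the corollary, to the same algebraic form as Section \ref{simo-miso}: once $X$ is known one sets
\[
\nu_j={X_j\over G(\br_j,\br_0)+\om^2\sum_{l=1}^m X_l G(\br_0,\br_l)},
\]
which is the exact analogue of the entrywise formula (\ref{exact}) with the Foldy-Lax relation (\ref{fltoo}) playing the role of (\ref{fl2}). Thus the recovery of $\nu$ from $X$ is well-defined precisely when the denominator has no zero entry, which is the nonresonance/nonshadowing analogue of conditions (\ref{281})-(\ref{282'}). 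Since the reduction identifies $\bPhi$ with the matrix (\ref{10-11}) already analyzed, I can invoke Theorem \ref{thm:bi} essentially verbatim, substituting the coherence bound (\ref{1043})-(\ref{1044}) for the far-field bound; this gives the exact recovery of $\nu$ on its support and hence the first part of the corollary.

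For the noisy case I would simply point out that the same reduction places us in the hypotheses of Theorem \ref{thm5}: the data take the form (\ref{8}), the reconstruction formula (\ref{70}) is the noisy counterpart of the entrywise division above, and the spectral/coherence conditions (\ref{0.3}), (\ref{71-2}), (\ref{71-5}) are unchanged in meaning, with $\mu(\bPhi)$ now controlled by Theorem \ref{thm7}. Hence Theorem \ref{thm5} applies without modification, giving the stated error bound and support recovery. The only genuine subtlety, and what I expect to be the main obstacle, is verifying that the factor $|G(\Delta_{\rm max})|^{-2}$ in (\ref{1043})-(\ref{1044}) does not destroy the sparsity threshold: because $|G(\Delta_{\rm max})|$ decays as the aperture recedes, the coherence bound is only useful in the high-frequency regime $\om\gg 1$ with $L=\cO(1)$, as flagged in the remark following Theorem \ref{thm7}. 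I would therefore be careful to state the corollary's usefulness under that same regime, so that $|G(\Delta_{\rm max})|^{2}$ stays bounded below and the displayed thresholds remain genuinely large.
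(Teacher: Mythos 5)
Your proposal is correct and follows essentially the route the paper intends: combine the coherence bound (\ref{1043})--(\ref{1044}) of Theorem \ref{thm7} with the deterministic spark criterion of Proposition \ref{thm:spark} to recover $X$ exactly with probability $(1-\delta)^2$, then invert the Foldy--Lax relation (\ref{fltoo}) entrywise (the analogue of (\ref{exact}), subject to the nonresonance/nonshadowing conditions (\ref{281})--(\ref{282'})) to obtain $\nu$, and invoke Theorem \ref{thm5} verbatim in the noisy case. Your closing caveat about the factor $|G(\Delta_{\rm max})|^{-2}$ and the regime $\om\gg 1$, $L=\cO(1)$ matches the remark following Theorem \ref{thm7}.
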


\section{Proof of Theorem \ref{thm4}: Coherence bound}
\label{sec:born}
\begin{proof}

 Denote $\hat\br_j
=(\cos\tilde\theta_j,\sin\tilde\theta_j),  \bd_k=(\cos\theta_k,\sin\theta_k)$.

The pairwise coherence has the form
\beq
\label{1.20-1}
{1\over p n}\lt|\sum_{k=1}^pe^{i\om\bd_k\cdot (\br-\br')}\cdot\sum_{j=1}^n e^{i\om\hat\br_j\cdot(\br-\br')}\rt|
\eeq
where $\br,\br'$ are two distinct points in the lattice $\cL$. 
Note that the two summations in (\ref{1.20-1}) are
of the same type.

  \commentout{
The Berry-Esseen theorem \cite{Fel}  states that the distribution of the sum of independent and 
identically distributed zero-mean random variables 
$\{X_i\}_{i=1}^n$
 normalized by its standard deviation, 
differ from the unit Gaussian distribution by at most 
$C \rho/(\alpha^2\sqrt{n})$, where $\alpha^2$ and $\rho$ are respectively the 
variance and the absolute third moment of the parent distribution, and $C$ is a distribution-independent 
absolute constant which is not greater than 0.7655
\cite{Sen}.
}

Consider the first summation over $k=1,...,p$. Let
\[
P_k=\cos{(\om\bd_k\cdot(\br-\br'))},\quad
Q_k=\sin{(\om\bd_k\cdot(\br-\br'))}
\]
and
\[
S_p=\sum_{k=1}^p P_k,\quad T_p=\sum_{k=1}^p Q_k.
\]
Then  the summation can be
bounded by
\beq
\label{57'}
\lt|\sum_{k=1}^pe^{i\om\bd_k\cdot (\br-\br')}\rt|\leq
\sqrt{|S_p-\IE S_p |^2+|T_p -\IE T_p|^2}+\sqrt{|\IE S_p|^2+|\IE T_p|^2}
\eeq
We recall the Hoeffding inequality
\cite{Hoe}.

\begin{proposition}
Let $P_1, ..., P_p$ be independent random variables. Assume
that $P_l \in [a_l, b_l], l=1,...,p$ almost surely.
Then we have 
\beq
\label{hoeff}
\IP\lt[\lt|S_p -\IE S_p\rt|\geq pt\rt]
\leq 2\exp{\lt[-{2p^2 t^2\over \sum_{l=1}^p (b_l-a_l)^2}\rt]}
\eeq
for all positive values of $t$. 
\end{proposition}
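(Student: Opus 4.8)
The plan is to use the exponential-moment (Chernoff) method together with Hoeffding's lemma on the moment generating function of a bounded random variable. Because the deviation in (\ref{hoeff}) is two-sided, I would bound the upper and lower tails separately and add them, which is what produces the factor $2$ on the right-hand side.

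First I would fix $s>0$ and apply Markov's inequality to the nonnegative variable $e^{s(S_p-\IE S_p)}$. For the upper tail this gives
\[
\IP\lt[S_p-\IE S_p\geq pt\rt]\leq e^{-spt}\,\IE\lt[e^{s(S_p-\IE S_p)}\rt]
=e^{-spt}\prod_{l=1}^p\IE\lt[e^{s(P_l-\IE P_l)}\rt],
\]
where the factorization into a product over $l$ is exactly where the independence of $P_1,\ldots,P_p$ enters. The lower tail $\IP[S_p-\IE S_p\leq -pt]$ is treated identically, since replacing each $P_l$ by $-P_l$ preserves the interval length $b_l-a_l$.

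The heart of the argument --- and the step I expect to be the main obstacle --- is Hoeffding's lemma: for a mean-zero variable $X$ supported in $[a,b]$ one has $\IE[e^{sX}]\leq e^{s^2(b-a)^2/8}$. I would establish this by writing $\psi(s)=\log\IE[e^{sX}]$ and verifying that $\psi(0)=0$, $\psi'(0)=\IE X=0$, and $\psi''(s)\leq(b-a)^2/4$ for all $s$. The bound on $\psi''$ is the crucial point: $\psi''(s)$ equals the variance of $X$ under the exponentially tilted law with density proportional to $e^{sx}$, and the variance of any $[a,b]$-valued random variable never exceeds $(b-a)^2/4$. A second-order Taylor expansion of $\psi$ about the origin then gives $\psi(s)\leq s^2(b-a)^2/8$.

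Applying the lemma to each centered factor $P_l-\IE P_l\in[a_l-\IE P_l,\,b_l-\IE P_l]$ (an interval still of length $b_l-a_l$) yields
\[
\IP\lt[S_p-\IE S_p\geq pt\rt]\leq
\exp\lt[-spt+{s^2\over 8}\sum_{l=1}^p(b_l-a_l)^2\rt].
\]
It remains only to optimize the free parameter: the exponent is a convex quadratic in $s$, minimized at $s=4pt/\sum_{l}(b_l-a_l)^2$, where it takes the value $-2p^2t^2/\sum_{l}(b_l-a_l)^2$. This gives the one-sided estimate; adding the identical contribution from the lower tail restores the factor $2$ and completes (\ref{hoeff}).
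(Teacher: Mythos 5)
Your argument is correct. Note that the paper itself does not prove this proposition: it is the classical Hoeffding inequality, quoted verbatim with a citation to Hoeffding's 1963 paper, so there is no in-paper proof to compare against. What you have written is the standard textbook derivation --- Chernoff's exponential-moment bound, factorization of the moment generating function via independence, Hoeffding's lemma $\IE[e^{sX}]\leq e^{s^2(b-a)^2/8}$ for a centered $[a,b]$-valued variable (proved correctly via the second derivative of the log-MGF being a tilted variance, bounded by $(b-a)^2/4$), optimization at $s=4pt/\sum_l(b_l-a_l)^2$, and a union of the two tails to produce the factor $2$. All steps check out, including the observation that centering preserves the interval length $b_l-a_l$.
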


We apply the Hoeffding inequality to both $S_p$ and $T_p$. To
this end, we have $b_l-a_l=2, \forall l=1,...,p$.
Set
\[
t=K/\sqrt{p},\quad K>0. 
\]
Then we obtain 
\beq
\label{hoeff2}
\IP\lt[p^{-1}\lt|S_p -\IE S_p\rt|\geq K/\sqrt{p}\rt]
&\leq& 2e^{-{K^2/2}}\\
\IP\lt[p^{-1}\lt|T_p -\IE T_p\rt|\geq K/\sqrt{p}\rt]
&\leq& 2e^{-{K^2/2}}\label{hoeff22}.
\eeq

Note that the quantities $S_p, T_p$ depend on $
\br-\br' = (x_i-x_j, z_i-z_j)$ but they possess the symmetry:  $S_p(\br-\br')
=S_p(\br'-\br), T_p(\br-\br')=-T_p(\br'-\br)$. 
Furthermore, a moment of reflection reveals that thanks
to the square symmetry of the lattice 
there
are at most $m-1$ different values $|S_p|$ and $|T_p|$
among the $m(m-1)/2$  pairs  of $(\br, \br')$.

We use (\ref{hoeff2})-(\ref{hoeff22}) and the union bound to obtain
\beqn
&&{\IP\lt[\max_{i\neq j}p^{-1}\lt|S_p-\IE S_p\rt|\geq K/\sqrt{p}\rt]} \leq 2(m-1) \cdot e^{-K^2/2}\\
&&{\IP\lt[\max_{i\neq j}p^{-1}\lt|T_p-\IE T_p\rt|\geq K/\sqrt{p}\rt]}\leq 2(m-1) \cdot e^{-K^2/2}
\eeqn
where the factor $4 m$ is due to the structure of square lattice.

Hence, by (\ref{57'}) 
\beq
&&{\IP\lt[\max_{i\neq j}p^{-1}\lt|\sum_{k=1}^pe^{i\om\bd_k\cdot (\br-\br')}- \IE\lt[\sum_{k=1}^pe^{i\om\bd_k\cdot (\br-\br')}\rt] \rt|< \sqrt{2} K/\sqrt{p}\rt]} \label{10.1}>(1-2(m-1) e^{-K^2/2})^2.
\eeq
Similarly we have for the second summation in (\ref{1.20-1})
\beq
&&{\IP\lt[\max_{i\neq j}n^{-1}\lt|\sum_{j=1}^ne^{i\om\hat\br_j\cdot (\br-\br')}- \IE\lt[\sum_{j=1}^ne^{i\om\hat\br_j\cdot (\br-\br')}\rt] \rt|<\sqrt{2} K/\sqrt{n}\rt]} \label{10.2}>\lt(1-2(m-1)e^{-K^2/2}\rt)^2.
\eeq
By  (\ref{m-2})
 the right hand side of (\ref{10.1})-(\ref{10.2})
is greater than $(1-\delta)^2$. 

\commentout{
Consider  $ \IE\lt[\sum_{j=1}^ne^{i\om\hat\br_j\cdot (\br-\br')}\rt]$. The same analysis below applies equally well
to $\IE\lt[\sum_{k=1}^pe^{i\om\bd_k\cdot (\br-\br')}\rt] $.
If $\pdfs$ is the  uniform distribution over $[-\pi, \pi]$ or
$[-\pi/2, \pi/2]$ then 
\[
\IE\lt[\sum_{j=1}^ne^{i\om\hat\br_j\cdot (\br-\br')}\rt]  =nJ_0(\om |\br-\br'|)
\]
where $J_0$ is the zeroth order Bessel function. 
In general, the exact expression is not available but
we are concerned only with  the asymptotic for $\om|\br-\br'|\gg 1$.

The Bessel function has the asymptotic
 \beq
 \label{bess}
J_0(\om r)=\sqrt{2\over \pi\om r}\lt\{
\cos{(\om r-\pi/4)}+\cO((\om r)^{-1})\rt\},\quad \om r\gg 1. 
\eeq
That is,  for $\om \ell \gg 1$ there exists a constant
$c>0$ such that
\beq
\label{asym}
J_0(\om |\br-\br'|) < {c\over \sqrt{\om \ell}},\quad
\forall \br, \br'\in \cL,\quad\br\neq \br'
\eeq

In general, 
\beq
{1\over n}\IE\lt[\sum_{j=1}^ne^{i\om\hat\br_j \cdot (\br-\br')}\rt]
&=&\int_0^{2\pi} e^{i\om\tilde \bd \cdot (\br-\br')} \pdfs(\tilde \theta) d\tilde \theta, \quad \hat\bd=(\cos{\tilde\theta},
\sin{\tilde\theta})
\label{39-3}
\eeq
which is the Herglotz wave function with kernel
$\pdfs$ in two dimensions.
By assumption on $\pdfs$  (\ref{39-3}) is a finite sum of integrals of the form
\beq
\label{42-5}
\int_a^b e^{i\om\tilde \bd \cdot (\br-\br')} \pdfs(\tilde \theta) d\tilde \theta,\eeq
with  $\pdfs\neq 0$ in $(a,b)$ whose asymptotic for $\om\ell\gg 1$ can be
analyzed by the method
of stationary phase (Theorem XI. 14 and XI. 15 of \cite{RS}).

\begin{proposition} 
\label{prop:sph}
Let $g_{\br, \br'}(\tilde\theta)=\tilde\bd \cdot (\br-\br')/|\br-\br'|$ 
which is in $ C^\infty([-\pi,\pi]),\forall\br,\br'\in\cL$. 

(i) Suppose  ${d\over d\tilde \theta}g_{\br,\br'}(\tilde \theta)\neq 0, \forall
\tilde \theta \in [a,b],\forall \br,\br'\in\cL$.
Then for all $\pdfs\in C^h_0([a,b])$
\beq
\lt|\int e^{i\om|\br-\br'| g_{\br,\br'}(\tilde\theta)} \pdfs(\tilde\theta) d\tilde \theta\rt|
\leq c_h(1+\om|\br-\br'|)^{-h} \|\pdfs\|_{h,\infty}
\eeq
for some constant $c_h$ independent of $\pdfs$.
Moreover, since $\{g_{\br,\br'}:\br,\br'\in \cL\}$ 
is a compact subset of $C^{h+1}([a,b])$, the constant $c_h$
can be chosen uniformly for all $\br,\br'\in \cL$. 

(ii) Suppose ${d\over d\tilde \theta} g_{\br,\br'}(\tilde \theta)$  vanishes at $\theta_*\in (a,b)$. Since $ {d^2\over d\tilde \theta^2} g_{\br,\br'}(\theta_*)\neq 0$,  there exists a constant $c_t, t>1/2$
such that  
\beq
\lt|\int e^{i\om |\br-\br'| g_{\br,\br'}(\tilde \theta)}\pdfs(\tilde\theta)d\tilde \theta
\rt|\leq c_t (1+\om|\br-\br'|)^{-1/2}\|\pdfs\|_{t,\infty} 
\eeq
where the constant $c_t$ is independent of $ \br,\br'\in\cL$. 

 \end{proposition}
}
Consider $\IE\lt[\sum_{k=1}^pe^{i\om\bd_k\cdot (\br-\br')}\rt] $. The same analysis below applies equally well
to  $ \IE\lt[\sum_{j=1}^ne^{i\om\hat\br_j\cdot (\br-\br')}\rt]$.

If $\pdfi$ is the  uniform distribution over $[-\pi, \pi]$ or
$[-\pi/2, \pi/2]$ then 
\[
\IE\lt[\sum_{k=1}^pe^{i\om\bd_k\cdot (\br-\br')}\rt]  =pJ_0(\om |\br-\br'|)
\]
where $J_0$ is the zeroth order Bessel function. 
In general, the exact expression is not available but
we are concerned only with  the asymptotic for $\om|\br-\br'|\gg 1$.

The Bessel function has the asymptotic
 \beq
 \label{bess}
J_0(\om r)=\sqrt{2\over \pi\om r}\lt\{
\cos{(\om r-\pi/4)}+\cO((\om r)^{-1})\rt\},\quad \om r\gg 1. 
\eeq
That is,  for $\om \ell \gg 1$ there exists a constant
$c>0$ such that
\beq
\label{asym}
J_0(\om |\br-\br'|) < {c\over \sqrt{\om \ell}},\quad
\forall \br, \br'\in \cL,\quad\br\neq \br'
\eeq

In general, 
\beq
{1\over p}\IE\lt[\sum_{k=1}^pe^{i\om\bd_k\cdot (\br-\br')}\rt]
&=&\int_0^{2\pi} e^{i\om\bd\cdot (\br-\br')} \pdfi(\theta) d\theta,\quad \bd=(\cos\theta,\sin\theta) \label{39-3}
\eeq
which is the Herglotz wave function with kernel
$\pdfi$ in two dimensions.
By assumption on $\pdfi$  (\ref{39-3}) is a finite sum of integrals of the form
\beq
\label{42-5}
\int_a^b e^{i\om\bd\cdot (\br-\br')} \pdfi(\theta) d\theta
\eeq
with  $\pdfi\neq 0$ in $(a,b)$ whose asymptotic for $\om\ell\gg 1$ can be
analyzed by the method
of stationary phase (Theorem XI. 14 and XI. 15 of \cite{RS}).

\begin{proposition} 
\label{prop:sph}
Let $g_{\br, \br'}(\theta)=\bd\cdot (\br-\br')/|\br-\br'|$
which is in $ C^\infty([-\pi,\pi]),\forall\br,\br'\in\cL$. 

(i) Suppose  ${d\over d\theta}g_{\br,\br'}(\theta)\neq 0, \forall
\theta \in [a,b],\forall \br,\br'\in\cL$.
Then for all $\pdfi\in C^h_0([a,b])$
\beq
\lt|\int e^{i\om|\br-\br'| g_{\br,\br'}(\theta)} \pdfi(\theta) d\theta\rt|
\leq c_h(1+\om|\br-\br'|)^{-h} \|\pdfi\|_{h,\infty}
\eeq
for some constant $c_h$ independent of $\pdfi$.
Moreover, since $\{g_{\br,\br'}:\br,\br'\in \cL\}$ 
is a compact subset of $C^{h+1}([a,b])$, the constant $c_h$
can be chosen uniformly for all $\br,\br'\in \cL$. 

(ii) Suppose ${d\over d\theta} g_{\br,\br'}(\theta)$  vanishes at $\theta_*\in (a,b)$. Since $ {d^2\over d\theta^2} g_{\br,\br'}(\theta_*)\neq 0$,  there exists a constant $c_t, t>1/2$
such that  
\beq
\lt|\int e^{i\om |\br-\br'| g_{\br,\br'}(\theta)}\pdfi(\theta)d\theta
\rt|\leq c_t (1+\om|\br-\br'|)^{-1/2}\|\pdfi\|_{t,\infty} 
\eeq
where the constant $c_t$ is independent of $ \br,\br'\in\cL$. 

 \end{proposition}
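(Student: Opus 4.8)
The plan is to reduce both parts to standard (non-)stationary phase estimates after computing the phase explicitly. Writing $\br-\br'=|\br-\br'|(\cos\psi,\sin\psi)$, where $\psi=\psi_{\br,\br'}$ is the polar angle of $\br-\br'$, one has $\bd\cdot(\br-\br')/|\br-\br'|=\cos(\theta-\psi)$, so
\ben
g_{\br,\br'}(\theta)=\cos(\theta-\psi),\qquad g'_{\br,\br'}(\theta)=-\sin(\theta-\psi),\qquad g''_{\br,\br'}(\theta)=-\cos(\theta-\psi).
\een
In particular the family $\{g_{\br,\br'}:\br,\br'\in\cL,\ \br\neq\br'\}$ is indexed by the single parameter $\psi\in[0,2\pi)$ and is therefore a compact subset of $C^k([-\pi,\pi])$ for every $k$; this compactness is exactly what makes all the constants below uniform in $\br,\br'$. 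Throughout set $\lambda=\om|\br-\br'|$.

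For part (i), the hypothesis that $g'_{\br,\br'}$ never vanishes on $[a,b]$, together with compactness of the family, yields a uniform lower bound $\kappa_0=\inf_{\br,\br'}\min_{[a,b]}|g'_{\br,\br'}|>0$. For $\lambda\le1$ the integral is bounded trivially by $|b-a|\,\|\pdfi\|_\infty$. For $\lambda\ge1$ I would introduce the first-order operator $Lf=(i\lambda g')^{-1}f'$, which fixes $e^{i\lambda g}$, and integrate by parts $h$ times; since $\pdfi\in C^h_0([a,b])$ vanishes at the endpoints there are no boundary terms, and each application of the transpose $L^{*}\pdfi=-\tfrac{d}{d\theta}\big((i\lambda g')^{-1}\pdfi\big)$ produces one factor $\lambda^{-1}$ while involving only the derivatives $\pdfi,\dots,\pdfi^{(h)}$ and $g',\dots,g^{(h)}$, all controlled by $\kappa_0$ and the uniform $C^{h+1}$ bounds on $g$. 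This gives $\lt|\int_a^b e^{i\lambda g_{\br,\br'}}\pdfi\,d\theta\rt|\le c_h\lambda^{-h}\|\pdfi\|_{h,\infty}$, and combining the two regimes yields the stated $(1+\lambda)^{-h}$ bound with $c_h$ uniform over $\cL$.

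For part (ii), the decisive fact is that the critical point is automatically nondegenerate: $g'_{\br,\br'}(\theta_*)=0$ forces $\sin(\theta_*-\psi)=0$, hence $g''_{\br,\br'}(\theta_*)=-\cos(\theta_*-\psi)=\pm1\neq0$ (this is precisely the Blind-Spot configuration). By compactness there is a uniform constant $\kappa_1>0$ with $|g''_{\br,\br'}|\ge\kappa_1$ on a fixed neighborhood of $\theta_*$. I would then apply the stationary-phase estimate (Theorem XI.15 of \cite{RS}) on $[a,b]$: its hypotheses---a single nondegenerate critical point and a H\"older amplitude $\pdfi\in C^t$ with $t>1/2$---are met, and it delivers the factor $\lambda^{-1/2}$ together with $\|\pdfi\|_{t,\infty}$. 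Should $g'_{\br,\br'}$ vanish at a second point of $[a,b]$ (there are at most two, each nondegenerate), I would localize with a smooth partition of unity and treat each critical point separately, the non-stationary pieces being handled by part (i). Adding the trivial $\lambda\le1$ bound again produces the $(1+\lambda)^{-1/2}$ form, with $c_t$ uniform over $\cL$.

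The main obstacle is twofold: obtaining constants $c_h,c_t$ that are uniform over the infinite set of pairs $\br,\br'\in\cL$, and accommodating the fractional smoothness $t>1/2$ in the degenerate regime. Both are dispatched by the single structural observation that $g_{\br,\br'}$ depends on the pair only through the angle $\psi$, so that the phases and amplitudes range over a compact set in $C^{h+1}$; the sharp $t>1/2$ threshold is exactly the content of the cited van der Corput-type stationary-phase theorem with H\"older amplitudes.
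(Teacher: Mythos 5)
Your proposal is correct and follows essentially the same route as the paper, which states the proposition as a direct consequence of the standard non-stationary-phase (repeated integration by parts) and stationary-phase estimates of Reed--Simon (Theorems XI.14 and XI.15) without writing out a proof. Your writeup in fact supplies the details the paper leaves implicit --- the explicit form $g_{\br,\br'}(\theta)=\cos(\theta-\psi)$, the resulting one-parameter compactness giving uniform constants, and the automatic nondegeneracy $g''_{\br,\br'}(\theta_*)=\pm 1$ at any critical point --- all of which are consistent with the paper's intent and with its definition of Blind Spots.
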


\commentout{
Since the derivative of $[\bd\cdot (\br-\br')/|\br-\br'|]$   vanishes in the first order at a Blind Spot (\ref{bs}), 
if $\hbox{\rm supp}(\pdfi)$ contains the Blind Spot, then 
(\ref{42-5}) is $\cO((\om\ell)^{-1/2})$ according to
Theorem 13.1, Chapter 3, of \cite{Olv}. On the other hand, 
if $\hbox{\rm supp}(\pdfi)$ does not contains any  Blind Spot, then 
(\ref{42-5}) vanishes faster than any negative power of
$\om\ell$ by repeated application of Theorem 13.2, Chapter 3, of \cite{Olv}. Hence
\beqn
\chii&=&\lt|\int \lt[e^{i\om\bd\cdot (\br-\br')}\rt]\pdfi (\theta) d\theta\rt|
\eeqn
is $\cO((\om\ell)^{-1/2})$ if $\hbox{\rm supp}(\pdfi)$ contains a Blind Spot and is $o((\om\ell)^{-k}), \forall k\in \IN$ if $\hbox{\rm supp}(\pdfi)$ does not
 contains any  Blind Spot.
The similar result holds for
\beqn
\chis&=&\lt|\int\lt[e^{i\om\tilde \bd\cdot (\br-\br')}\rt]\pdfs d\theta\rt|. 
\eeqn
}
Note that the condition 
\[
{d\over d\theta}g_{\br,\br'}(\theta)\neq 0,\quad\theta\in  [a,b],\quad\forall \br,\br'\in\cL
\]
is the same as saying that $(a,b)$ does not contain
any Blind Spot.
Combining the estimates for 
\beqn
\chii&=\lt|\int \lt[e^{i\om\bd\cdot (\br-\br')}\rt]\pdfi (\theta) d\theta\rt|,\quad 
\chis&=\lt|\int\lt[e^{i\om\tilde \bd\cdot (\br-\br')}\rt]\pdfs d\tilde\theta\rt|, 
\eeqn
using (\ref{10.1})-(\ref{10.2})  and 
the identity
\[
UV=(U-\bar U)(V-\bar V)+\bar U(V-\bar V)
+\bar V(U-\bar U)+\bar U\bar V
\]
we 
 obtain (\ref{mut})
with probability greater than $(1-\delta)^2$. 

\end{proof}

\section{Proof of Theorem \ref{thm5-2}: Spectral norm bound}\label{sec:spec}
\begin{proof}
For the proof, it suffices to show that
the matrix $\bPhi$ satisfies
\beq
\label{gram}
\|{1\over m}\bPhi \bPhi^*-\bI_{np}\|_2<1
\eeq
where $\bI_{np}$ is the $np\times np$ identity matrix
with the corresponding probability bound (\ref{23-3}). 
By the Gershgorin circle theorem, (\ref{gram}) would in turn
follow from 
\beq
\label{19}
\mu \lt(\bPhi^*\rt)< {1\over np-1}
\eeq
since the diagonal elements of $\bPhi \bPhi^*/m$ are
unity.

The pairwise coherence amounts to calculating the expression 
\beq
\nn
{np\over m}\lt|\sum_{l=1}^m \Phi_{jl}\Phi^*_{li}\rt|
&=&{1\over m} \lt|\sum_{l=1}^me^{-i\om (z_l\sin{\theta}+x_l\cos\theta)} e^{i\om(z_l\sin{\theta'}+x_l\cos{\theta'})}
e^{i\om (z_l\sin\theta+x_l\cos\theta)}
e^{-i\om (z_l\sin\theta'+x_l\cos\theta')}\rt|
\eeq
Summing over $\br_l, l=1,...,m$ results in finite geometric series 
in the longitudinal and transverse coordinates since they are equally spaced. We obtain
\beq
\nn {np\over m}\lt|\sum_{l=1}^m \Phi_{jl}\Phi^*_{li}\rt|
&=&{1\over m} \lt| {e^{i\om\ell(\cos\theta'-\cos\theta+\cos\tilde\theta-\cos\tilde\theta')\sqrt{m}/2}-e^{-i\om(\cos\theta'-\cos\theta+\cos\tilde\theta-\cos\tilde\theta')\sqrt{m}/2}\over e^{i\om\ell (\cos\theta'-\cos\theta+\cos\tilde\theta-\cos\tilde\theta')} -1}\rt|\\
&& \times \lt|{e^{i\om\ell(\sin\theta'-\sin\theta+\sin\tilde\theta-\sin\tilde\theta)\sqrt{m}/2}-e^{-i\om(\sin\theta'-\sin\theta+\sin\tilde\theta-\sin\tilde\theta')\sqrt{m}/2}\over e^{i\om\ell (\sin\theta'-\sin\theta+\sin\tilde\theta-\sin\tilde\theta')} -1}\rt|.
\label{1.21}
\eeq
Using the identity $
|1-e^{i\phi}|=2|\sin{(\phi/2)}|
$
we then obtain 
\beq
\label{22}
{np\over m}\lt|\sum_{l=1}^m \Phi_{jl}\Phi^*_{li}\rt|
&=&{1\over m}  {\lt|\sin{\lt[\om\ell(\cos\theta'-\cos\theta+\cos\tilde\theta-\cos\tilde\theta')\sqrt{m}/2\rt]}\rt|\over \lt| \sin{\lt[\om\ell (\cos\theta'-\cos\theta+\cos\tilde\theta-\cos\tilde\theta')/2\rt]}\rt|}\\
&&\times 
  {\lt|\sin{\lt[\om\ell(\sin\theta'-\sin\theta+\sin\tilde\theta-\sin\tilde\theta')\sqrt{m}/2\rt]}\rt|\over \lt| \sin{\lt[\om\ell (\sin\theta'-\sin\theta+\sin\tilde\theta-\sin\tilde\theta')/2\rt]}\rt|}
\label{1.21'}\nn
\eeq
Since  $\theta,\theta'$  are independently and  identically
distributed according to $\pdfi$, the sine and cosine 
of these variables have the density functions 
\[
g_1^{\rm i}(t)={1\over \sqrt{1-t^2}}\pdfi(\arccos t)\quad
\hbox{and}\quad 
g_2^{\rm i}(t)={1\over \sqrt{1-t^2}}\pdfi(\arcsin t),
\]
respectively. 
Similarly the sine and cosine of $\tilde\theta,\tilde\theta'$
have the density functions
\[
g_1^{\rm s}(t)={1\over \sqrt{1-t^2}}\pdfs(\arccos t)\quad
\hbox{and}\quad 
g_2^{\rm s}(t)={1\over \sqrt{1-t^2}}\pdfs(\arcsin t), 
\]
respectively.

Hence  the random variables
\beqn
Z_1&=\om\ell (\cos\theta'-\cos\theta+\cos\tilde\theta-\cos\tilde\theta')/2&\in [-2\om\ell, 2\om\ell]\\
Z_2&=\om\ell (\sin\theta'-\sin\theta+\sin\tilde\theta-\sin\tilde\theta')/2&\in[-2\om\ell, 2\om\ell]
\eeqn
have the density function  
\beq
f_{Z_i}&=&{1\over \om\ell}(g_i^{\rm i}*g_i^{\rm i}*g_i^{\rm s}*g_i^{\rm s})({2z\over \om \ell}),\quad i=1,2.
\eeq

Since  $g*g*g*g$ is bounded in $[-4, 4]$, 
we have 
\[
\quad  \| f_{Z_i} \|_\infty \leq {c_0\over \om \ell},\quad \om\ell\gg 1,\quad i=1,2. 
\]
for some constant $c_0>0$. 

Define 
\beq
\label{view}
\zeta=\min_{\theta, \theta', \tilde\theta,\tilde\theta'}\min_{k\in \IZ}
\lt\{|Z_1-\pi k|, |Z_2-\pi k|\rt\}
\eeq
and note
\[
\sin\zeta>{2\zeta \over \pi},\quad \zeta\in (0, \pi/2).
\]
Hence the probability that $\{\zeta >b\}$ for small $b>0$ is larger than 
\[
(1-c_1b)^{n(n-1)p(p-1)} 
\] 
where the power $n(n-1)p(p-1)$ accounts for  the number of
different pairs of random variables involved in
(\ref{view}). 

By 
the choice
\[
b=\sqrt{np-1\over{ m}}
\]
we deduce that 
\[
\mu\lt(\bPhi^*\rt)< {1\over m b^2}={1\over np-1}
\]
with probability larger than 
\[
\lt(1-c_1\sqrt{np-1\over{ m}}\rt)^{n(n-1)p(p-1)} 
\]

In the SIMO case $p=1$, (\ref{view}) becomes
\beq
\label{view2}
\zeta=\min_{ \tilde\theta,\tilde\theta'}\min_{k\in \IZ}
\lt\{|Z_1-\pi k|, |Z_2-\pi k|\rt\}. 
\eeq
Hence the probability that $\{\zeta >b\}$ for small $b>0$ is larger than 
\[
(1-c_1b)^{n(n-1)} 
\] 
With 
\[
b=\sqrt{n-1\over{ m}}
\]
it follows  that 
\[
\mu\lt(\bPhi^*\rt)< {1\over m b^2}={1\over np-1}
\]
with probability larger than 
\[
\lt(1-c_1\sqrt{n-1\over{ m}}\rt)^{n(n-1)}.  
\]

\end{proof}

\section{Proof of Theorem \ref{thm5}: Stability}\label{sec:noise}

Next, we give an estimate for
the smallest  component of the exciting
field vector $U=(1-\om^2 \bG\cV)^{-1}U^{\rm i}$. 
\begin{proposition} 
If 
\beq
\om^2 \|\bG \cV\|<1/2 \label{322}
\eeq
then
\beq
\label{33}
\lt\|{1\over \lt(\bI-\om^2\bG \cV\rt)^{-1}U^{\rm i}}\rt\|_\infty
\leq { 1-\om^2\|\bG\cV\| \over
1-2\om^2\|\bG\cV\|}\equiv {1\over b_0}.
\eeq
Here  for any vector $V$, $V^{-1}$ denotes the vector whose entries are 
the reciprocal of those of $V$. \label{prop1}
\end{proposition}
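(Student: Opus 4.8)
The plan is to use the Foldy-Lax fixed-point identity (\ref{fl2}) to obtain two-sided control of the vector $U=(\bI-\om^2\bG\cV)^{-1}U^{\rm i}$. Since $\|U^{-1}\|_\infty=1/\min_j|U_j|$, the assertion (\ref{33}) is equivalent to the componentwise lower bound $\min_j|U_j|\geq b_0$, so the entire task reduces to bounding the smallest-modulus entry of $U$ from below. The starting observation is that, by (\ref{fl2}), $U=U^{\rm i}+\om^2\bG\cV U$, and that every entry of $U^{\rm i}$ is a plane wave of the form (\ref{inc}) and hence has modulus one, so $\|U^{\rm i}\|_\infty=1$.

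First I would derive an upper bound on the full sup-norm $\|U\|_\infty$. Applying the $L^\infty$ operator norm $\|\cdot\|$ (the maximum absolute row sum, as in Corollary \ref{cor:zero}) to (\ref{fl2}) and using its submultiplicativity against the sup-norm gives $\|U\|_\infty\leq\|U^{\rm i}\|_\infty+\om^2\|\bG\cV\|\,\|U\|_\infty$. With $\|U^{\rm i}\|_\infty=1$ and $\om^2\|\bG\cV\|<1$ (which is implied by the hypothesis (\ref{322})), rearranging yields
\[
\|U\|_\infty\leq\frac{1}{1-\om^2\|\bG\cV\|}.
\]

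Next I would establish the componentwise lower bound. Writing out the $j$-th row of (\ref{fl2}) and using the reverse triangle inequality, $|U_j|\geq|U^{\rm i}_j|-\om^2|(\bG\cV U)_j|\geq 1-\om^2\|\bG\cV\|\,\|U\|_\infty$. Substituting the Step-1 bound and abbreviating $a=\om^2\|\bG\cV\|$, one gets
\[
|U_j|\geq 1-\frac{a}{1-a}=\frac{1-2a}{1-a}=b_0
\]
uniformly in $j$. Here the hypothesis (\ref{322}), i.e. $a<1/2$, is exactly what guarantees $b_0>0$, so that $\min_j|U_j|\geq b_0>0$ and the reciprocal vector is well defined; taking reciprocals then yields (\ref{33}). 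I do not expect a serious obstacle, since no quantity is hard to estimate; the one point to get right is that the argument is genuinely two-sided---a single application of the triangle inequality to one entry leaves $\|U\|_\infty$ on the right-hand side, so one must first control the whole vector before closing the bound on each individual component.
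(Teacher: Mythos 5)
Your argument is correct and is essentially the paper's own proof in a slightly different dress: the paper expands $U=(\bI-\om^2\bG\cV)^{-1}U^{\rm i}$ as $U^{\rm i}$ plus a relative perturbation $R$ bounded via the Neumann series $\sum_{k\geq 1}(\om^2\|\bG\cV\|)^k=\om^2\|\bG\cV\|/(1-\om^2\|\bG\cV\|)$, whereas you obtain the identical estimate $|U_j|\geq 1-\om^2\|\bG\cV\|\,\|U\|_\infty\geq (1-2a)/(1-a)$ by first closing the fixed-point inequality $\|U\|_\infty\leq 1+a\|U\|_\infty$ (legitimate since $U$ is finite under (\ref{322})) and then applying the reverse triangle inequality componentwise. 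Both proofs rest on the same two facts, $|U^{\rm i}_j|=1$ and the geometric-series control of the multiple-scattering correction, so there is no substantive difference.
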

\begin{proof}
We write
\[
\lt(\bI-\om^2\bG \cV\rt)^{-1}U^{\rm i} = U^{\rm i}\odot (1\oplus R)
\]
with  
\[
R=(U^{\rm i})^{-1}\odot \lt (\om^2\bG\cV U^{\rm i}+(\om^2\bG\cV)^2U^{\rm i}+
...\rt)
\]
which  converges under (\ref{322}). Here $\odot$ and $\oplus$ denote
the entrywise (Hadamard) product and sum, respectively, of
two vectors. 
Hence
\beq
\label{34}
\lt\|{1\over \lt(\bI-\om^2\bG \cV\rt)^{-1}U^{\rm i}}\rt\|_\infty
\leq \lt\|{1\over U^{\rm i}\odot (1\oplus R)}\rt\|_\infty\leq
{1\over 1-\|R\|_\infty}. 
\eeq
We also have
\beq
\label{35}
\|R\|_\infty&\leq 
\lt(\om^2\|\bG\cV\|
+\om^4\|\bG\cV\|^2+...\rt)
&={\om^2\|\bG\cV\|\over
1-\om^2\|\bG\cV\|}.
\eeq
Substituting (\ref{35}) into (\ref{34}) we obtain the claimed bound (\ref{33}). 
\end{proof}

From the Foldy-Lax equation (\ref{ill}) and (\ref{33})  we have the following lower bound on the exciting field vector $U$
\beq
\label{79}
\lt\|U^{-1} \rt\|_\infty
&\leq&  1/b_0
\eeq
and hence $b_0\leq |u(\br_{i_j})|,\forall j.$

\begin{corollary} \label{cor1} Suppose $\mu(\bPhi) s\leq 1/3$ and 
\beq
\label{64}
 {b_0}> \lt(3+\sqrt{3/ 2}\rt)\ep  \|\cV^{-1}\|. 
\eeq 
Then ${\rm supp}(\hat X)={\rm supp} (X)$.
\end{corollary}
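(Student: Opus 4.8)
The plan is to combine the componentwise error bound furnished by Proposition \ref{prop2} with the uniform lower bound on the exciting field recorded in (\ref{79}). Since we already know ${\rm supp}(\hat X)\subset{\rm supp}(X)$, the only thing left to prove is the reverse inclusion ${\rm supp}(X)\subset{\rm supp}(\hat X)$, i.e. that no genuinely nonzero entry of $X$ is driven to zero in the reconstruction.

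First I would invoke Proposition \ref{prop2} together with Remark \ref{rmk07}. The sensing matrix (\ref{10-11}) has all unit entries and the error obeys $\|E\|_2\leq\ep n^{1/2}$, so with $\lambda=2\ep n$ the minimizer $\hat X$ of (\ref{relax}) is unique, satisfies ${\rm supp}(\hat X)\subset{\rm supp}(X)$, and enjoys the bound
\[
\|\hat X-X\|_\infty\leq\lt(3+\sqrt{3/2}\rt)\ep.
\]
This is the step where the $L^2$ noise level $\ep n^{1/2}$ is converted into the per-component estimate, and where the standing hypothesis $\mu(\bPhi)s\leq 1/3$ is consumed.

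Next I would obtain a lower bound on $|X_j|$ for $j\in{\rm supp}(X)$. Writing $X=\cV U$, each entry on the support is $X_j=\nu_{i_j}U_j$, and (\ref{79}) gives $|U_j|\geq b_0$ for every $j$. Because $\cV$ is diagonal, the operator norm on $L^\infty$ reduces to $\|\cV^{-1}\|=\max_j|\nu_{i_j}|^{-1}=(\min_j|\nu_{i_j}|)^{-1}$, so $|\nu_{i_j}|\geq\|\cV^{-1}\|^{-1}$ on the support. Combining the two estimates yields $|X_j|\geq b_0/\|\cV^{-1}\|$ for all $j\in{\rm supp}(X)$. The reverse triangle inequality then gives, for each such $j$,
\[
|\hat X_j|\geq|X_j|-\|\hat X-X\|_\infty\geq{b_0\over\|\cV^{-1}\|}-\lt(3+\sqrt{3/2}\rt)\ep,
\]
which is strictly positive precisely when $b_0>(3+\sqrt{3/2})\ep\|\cV^{-1}\|$, i.e. under hypothesis (\ref{64}). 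Hence $\hat X_j\neq 0$ for every $j\in{\rm supp}(X)$, establishing ${\rm supp}(X)\subset{\rm supp}(\hat X)$ and, with the first inclusion, the claimed equality ${\rm supp}(\hat X)={\rm supp}(X)$.

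This argument is a short chain of inequalities rather than a substantive computation, so I do not expect a genuine obstacle. The only points demanding care are the correct reading of the norm $\|\cV^{-1}\|$ as the maximal modulus of the diagonal entries (valid because $\cV$ is diagonal) and the correct matching of the noise normalization between the $L^2$ hypothesis $\|E\|_2\leq\ep n^{1/2}$ and the componentwise conclusion of Proposition \ref{prop2}, which is exactly what Remark \ref{rmk07} licenses.
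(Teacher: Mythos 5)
Your proposal is correct and follows essentially the same route as the paper: the paper's own proof simply notes that $\min_j|X_j|=\min_j|\nu_{i_j}u(\br_{i_j})|\geq b_0/\|\cV^{-1}\|>(3+\sqrt{3/2})\ep$ by (\ref{79}) and hypothesis (\ref{64}), and then appeals to Proposition \ref{prop2}. You have merely made explicit the reverse-triangle-inequality step and the identification of $\|\cV^{-1}\|$ with $(\min_j|\nu_{i_j}|)^{-1}$, both of which the paper leaves implicit.
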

\begin{proof}
This follows immediately from the fact
\[
\min_{j}|X_j|=\min_{j}  |\nu_{i_j}u(\br_{i_j})|\geq  {b_0 \over \|\cV^{-1}\|}> (3+\sqrt{3/2})\ep
\]
and Proposition \ref{prop2}. 
\end{proof}
  
 \begin{proposition}
 \label{prop5}
The vector $U^{\rm i}+\om^2\bG \hat X$ contains
 no zero entry if $\om^2 \|\bG \hat X\|_\infty <1$. In particular,
this is true for the minimizer $\hat X$ of  Proposition
 \ref{prop2} under the additional assumption 
 \beq
 \label{71}
 b_0> \om^2 (3+\sqrt{3/2})\ep\|\bG \|.
 \eeq
 In this case, $\mbox{\rm supp} (\hat \cV)=\mbox{\rm supp} (\hat X)$. 
 \end{proposition}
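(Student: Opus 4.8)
The plan is to prove the two assertions in turn, obtaining the ``in particular'' claim not by verifying $\om^2\|\bG\hat X\|_\infty<1$ directly, but by controlling the entries of $U^{\rm i}+\om^2\bG\hat X$ through the true exciting field.

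First I would dispose of the general sufficient condition exactly as in Corollary \ref{cor:zero}. Every component of $U^{\rm i}$ is of the form $e^{i\om\br_{i_j}\cdot\bd}$ and hence has modulus one, so the $j$-th entry satisfies
\[
\lt|\lt(U^{\rm i}+\om^2\bG\hat X\rt)_j\rt|\geq 1-\om^2\lt|\lt(\bG\hat X\rt)_j\rt|\geq 1-\om^2\|\bG\hat X\|_\infty,
\]
which is strictly positive whenever $\om^2\|\bG\hat X\|_\infty<1$. This settles the first statement.

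For the second claim the key observation is the identity
\[
U^{\rm i}+\om^2\bG X=\lt(\bI-\om^2\bG\cV\rt)^{-1}U^{\rm i}=U,
\]
already recorded in the proof of Proposition \ref{prop4}: indeed $\bG X=\bG\cV U$ by (\ref{ls3}), while the Foldy-Lax equation (\ref{fl2}) gives $\om^2\bG\cV U=U-U^{\rm i}$. Under the standing hypothesis $\om^2\|\bG\cV\|<1/2$ contained in (\ref{71-2}), Proposition \ref{prop1} together with (\ref{79}) yields the uniform lower bound $|U_j|\geq b_0$ for every $j$. I would then absorb the reconstruction error by writing $U^{\rm i}+\om^2\bG\hat X=U+\om^2\bG(\hat X-X)$ and invoking the $\ell^\infty$ bound $\|\hat X-X\|_\infty\leq(3+\sqrt{3/2})\ep$ from Proposition \ref{prop2} (in the normalization of Remark \ref{rmk07}); each entry then obeys
\[
\lt|\lt(U^{\rm i}+\om^2\bG\hat X\rt)_j\rt|\geq b_0-\om^2\|\bG\|\,\|\hat X-X\|_\infty\geq b_0-\om^2(3+\sqrt{3/2})\ep\|\bG\|,
\]
which is strictly positive precisely under assumption (\ref{71}). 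Hence $U^{\rm i}+\om^2\bG\hat X$ has no vanishing entry, and since $\hat\cV$ is defined entrywise in (\ref{70}) as $\hat X$ divided by this non-vanishing vector, $\hat\cV_j=0$ if and only if $\hat X_j=0$, giving $\mbox{\rm supp}(\hat\cV)=\mbox{\rm supp}(\hat X)$.

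The step I expect to carry the weight is the reduction $U^{\rm i}+\om^2\bG X=U$ combined with the lower bound $|U_j|\geq b_0$; once these are in hand the conclusion is a one-line perturbation estimate. This route is cleaner than bounding $\om^2\|\bG\hat X\|_\infty$ itself, which would require an additional detour through $\om^2\|\bG X\|_\infty$ and the geometric-series estimate of Corollary \ref{cor:zero}.
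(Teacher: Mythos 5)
Your proof is correct, and it reaches the same threshold as the paper by a slightly different decomposition. The paper's own argument verifies the sufficient condition $\om^2\|\bG\hat X\|_\infty<1$ head-on: it splits $\bG\hat X=\bG X+\bG(\hat X-X)$, bounds $\om^2\|\bG X\|_\infty\leq \om^2\|\bG\cV\|/(1-\om^2\|\bG\cV\|)=1-b_0$ via the Neumann series (its eq.\ (\ref{71'})), adds the perturbation bound $\om^2(3+\sqrt{3/2})\ep\|\bG\|$ from Proposition \ref{prop2}, and checks that the sum is below $1$ exactly under (\ref{71}). You instead center the perturbation at the exciting field, using the identity $U^{\rm i}+\om^2\bG X=U$ (the paper's (\ref{u})) together with the entrywise lower bound $|U_j|\geq b_0$ from Proposition \ref{prop1} and (\ref{79}), and then subtract the same perturbation term. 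The two computations are arithmetically identical --- the paper's route gives entries bounded below by $1-\om^2\|\bG\hat X\|_\infty\geq b_0-\om^2(3+\sqrt{3/2})\ep\|\bG\|$, which is precisely your bound --- so nothing is gained or lost quantitatively; the only substantive difference is that your version proves the non-vanishing of $U^{\rm i}+\om^2\bG\hat X$ directly without passing through the sufficient condition $\om^2\|\bG\hat X\|_\infty<1$ stated in the first sentence of the proposition, whereas the paper's version establishes that condition as an intermediate fact (which is what the phrase ``this is true'' in the statement is pointing at). Your closing observation that $\hat\cV_j=0$ iff $\hat X_j=0$ once the denominator is non-vanishing is the right justification for $\mbox{\rm supp}(\hat\cV)=\mbox{\rm supp}(\hat X)$, and you correctly flag the standing hypothesis $\om^2\|\bG\cV\|<1/2$ needed for $b_0>0$ and for the Neumann series.
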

 \begin{proof}
The following calculation is straightforward 
 \beq
 \label{72'}
 \|\bG\hat X-\bG X\|_\infty
 &\leq \|\bG\|\|\hat X-X\|_\infty
 &\leq (3+\sqrt{3/2})\ep\|\bG\|.
 \eeq
 Moreover, since
 \[
 \bG X=\bG \cV (\bI-\om^2\bG\cV)^{-1}U^{\rm i}
 \]
 we have the estimate
 \beq
 \label{71'}
 \|\bG X\|_\infty \leq 
 {\|\bG\cV \| \over
 1-\om^2\|\bG\cV\|}.
 \eeq
 Hence 
 \beq
\nn
 \om^2\|\bG\hat X\|_\infty&\leq &\om^2 \|\bG X\|_\infty
 + \om^2 \|\bG\hat X-\bG X\|_\infty\\
 &\leq&  {\om^2\|\bG\cV\|\over 1-\om^2\|\bG\cV\|}
 +\om^2(3+\sqrt{3/2})\ep\|\bG \|\nn\\
 &=& {1-b_0} +\om^2 (3+\sqrt{3/2})\ep\|\bG \|<1
 \label{73'}
 \eeq
 under the additional condition 
  (\ref{71}). 
 \end{proof}
 
The proof of Theorem \ref{thm5} can now be completed
 as follows.

 \begin{proof}

 First of all, (\ref{71-2}) is equivalent to (\ref{71})
 and by Proposition \ref{prop5} the formula (\ref{70})
 is well-defined. 
 
Subtracting (\ref{exact}) from (\ref{70}) we can estimate
as follows:  
\beqn
\nn\|\cV-\hat\cV\|&=&
\lt\|{U^{\rm i}\odot (X-\hat X)+ \om^2 (X-\hat X)\odot \bG X
-\om^2 X\odot \bG(X-\hat X)\over ( U^{\rm i}+\om^2\bG\hat X)
( U^{\rm i}+\om^2\bG X)}\rt\|\\
&\leq& {\lt(1+ \om^2\|\bG X\|_\infty  \rt) \|X-\hat X\|_\infty +\om^2\|X\|_\infty \|\bG X-\bG\hat X\|_\infty\over 1-\om^2 \|\bG\hat X\|_\infty}\times 
\|U^{-1}\|_\infty  
\eeqn
where we have used the identity 
\beq
\label{u}
U= \om^2\bG X+U^{\rm i}. 
\eeq

By  (\ref{322}) and (\ref{71'}) we find that 
\beq
\om^2\|\bG X\|_\infty &<&1
\eeq
And, since $X=(\nu_j u(\br_j))$,
\[
\|X\|_\infty\leq \|\cV\| \|U\|_\infty. 
\]
 This, (\ref{79}),  (\ref{72'}) and Proposition \ref{prop2} 
lead to the bound 
\beq
\label{100}
\|\cV-\hat\cV\|\leq
 {(2 +\om^2\|\bG\|\|\cV\| \|U\|_\infty) (3+\sqrt{3/2})\ep \over b_0( b_0-\om^2 (3+\sqrt{3/2})\ep\|\bG\|)}.
\eeq
In view of  (\ref{u}) and (\ref{71'}) 
we have the following bound 
\beq
\label{u2} 
\|U\|_\infty &\leq 1+\om^2 \|\bG X\|_\infty
&\leq  {1\over 1-\om^2\|\bG \cV\|}<2.  
\eeq
The claimed result (\ref{90}) now follows from (\ref{100}) and (\ref{u2}). 

Since  (\ref{71-5}) is equivalent with 
 (\ref{64}) 
it follows from 
Corollary \ref{cor1}, Propositions \ref{prop5} and \ref{prop4}
that $\hbox{\rm supp}(\hat \cV)=\hbox{\rm supp} (\cV)$. 
\end{proof}

\commentout{
\section{Proof of Theorem \ref{thm6}}
The pair-wise scalar product of two columns
\[
{\sum_{k} \Phi^*_{ k,l'} \Phi_{k,l}\over \|\Phi_{l'}\|_2\|\Phi_{l}\|_2},
\]
after canceling some common factors, 
can be reduced to  the following  fraction with
the numerator 
\beq
\label{1101}&& -
\sum_{k}e^{i\om\lt((2^{p_1'}q_1' -2^{p_1}q_1 )\alpha_k+(2^{p_2'}q_2'-2^{p_2}q_2)\beta_k\rt)}
{\alpha^{-2}_k\beta^{-2}_k} \\
&&\cdot \lt(1-e^{-i \om 2^{p_1-1}\alpha_k}\rt)^2
\lt(1-e^{-i\om 2^{p_2-1}\beta_k}\rt)^2
 \lt(1-e^{i \om 2^{p'_1-1}\alpha_k}\rt)^2
\lt(1-e^{i\om 2^{p'_2-1}\beta_k}\rt)^2\nn
\eeq
and the  denominator 
\beq
\label{1102}
&&\sqrt{\sum_{k} \alpha^{-2}_k\beta^{-2}_k \lt|1-e^{i \om 2^{p_1-1}\alpha_k}\rt|^4
\lt|1-e^{i\om 2^{p_2-1}\beta_k}\rt|^4}\\
&& \cdot\sqrt{\sum_k\alpha^{-2}_k\beta^{-2}_k \lt|1-e^{i \om 2^{p'_1-1}\alpha_k}\rt|^4
\lt|1-e^{i\om 2^{p'_2-1}\beta_k}\rt|^4}.\nn
\eeq
We will follow the argument for the proof of Theorem \ref{thm4}
to prove concentration inequalities separately for
(\ref{1101}) and (\ref{1102}). 

The application of the Hoeffding
inequality,  Proposition \ref{hoeff},  requires us to estimate
the expectation of the sum and the range of
each summand. 

Let us start with (\ref{1102}). Denote the sum in (\ref{1102}) by $S_n$. We have for the expectation of
the sum
\beqn
\IE S_n&=&\IE\lt\{\sum_{k} \alpha^{-2}_k\beta^{-2}_k \lt|1-e^{i \om 2^{p_1-1}\alpha_k}\rt|^4
\lt|1-e^{i\om 2^{p_2-1}\beta_k}\rt|^4\rt\}\\
&=& n\int {d\tilde\theta \pdfs (\tilde\theta)
\over \cos^2\tilde\theta\sin^2\tilde\theta}
\lt|1-e^{i \om 2^{p_1-1}\cos\tilde\theta}\rt|^4
\lt|1-e^{i\om 2^{p_2-1}\sin\tilde\theta}\rt|^4 \nn
\eeqn
which is  bounded from below 
 by restricting the integration to
 the region 
 \[
 \lt|1-e^{i \om 2^{p_1-1}\cos\tilde\theta}\rt|\cdot
 \lt|1-e^{i\om 2^{p_2-1}\sin\tilde\theta}\rt|> c \]
 for some  $c>0$. Since $\pdfs$ is equivalent to
 the Lebesgue measure, 
 the resulting lower bound is $\cO(n)$ uniformly for all
 $\om 2^{p_1-1}, \,\,\om 2^{p_2-1}\geq 1$. 

Since $|\alpha_k|\in [\lamb, \sqrt{1-\lamb^2}]$, 
the $k$-th summand lies in 
\beqn
\lt[0, {256\over \lambda^2 (1-\lamb^2)}\rt].
\eeqn
Hence
\beqn
\IP\lt[\max_{|\bp|_\infty\leq p_*} n^{-1}\lt|S_n -\IE S_n\rt|<  \sqrt{2} K/\sqrt{n}\rt]
&>& \lt(1-2(2p_*+1)^2 e^{-2^{-15}{K^2\lamb^4(1-\lamb^2)^2}}\rt)^2.
\eeqn
In other words,  with high probability $\min_{|\bp|\leq p_*} S_n=\cO(n)$ for $K$ sufficiently large but much less than $\sqrt{n}$. 

Next consider (\ref{1101}) and denote the sum by $T_n$. 
Both the real and imaginary parts of each summand lie in the range 
\[
\lt[- {256\over \lambda^2 (1-\lamb^2)}, {256\over \lambda^2 (1-\lamb^2)}\rt]. 
\]
Now we need to estimate 
\beq
\label{1104}{1\over n} \IE T_n &=& \int {d\tilde\theta \pdfs (\tilde\theta)
\over \cos^2\tilde\theta\sin^2\tilde\theta}
e^{i\om\lt((2^{p_1'}q_1' -2^{p_1}q_1 )\cos\tilde\theta+(2^{p_2'}q_2'-2^{p_2}q_2)\sin\tilde\theta\rt)}
 \lt(1-e^{-i \om 2^{p_1-1}\cos\tilde\theta}\rt)^2\\
&&\cdot\lt(1-e^{-i\om 2^{p_2-1}\sin\tilde\theta}\rt)^2
 \lt(1-e^{i \om 2^{p'_1-1}\cos\tilde\theta}\rt)^2
\lt(1-e^{i\om 2^{p'_2-1}\sin\tilde\theta}\rt)^2\nn
\eeq
 using the method
of stationary phase.  
Rewriting  the fast phase in (\ref{1104}) as
\[
\om\lt((2^{p_1'}q_1' -2^{p_1}q_1 )\cos\tilde\theta+(2^{p_2'}q_2'-2^{p_2}q_2)\sin\tilde\theta\rt)
=\om \Delta_{\bp,\bq, \bp', \bq} g(\tilde\theta)
\]
with $\Delta_{\bp', \bq', \bp, \bq}=|2^{\bp'}\bq'-2^{\bp}\bq|_\infty$.
}

\section{Proof of Theorem \ref{thm6}}\label{sec:three}

Let $\bd=(\alpha,\beta,\gamma)$ be
parameterized by the angles $\theta,\phi$ as 
\beq
\label{17'}
\alpha=\cos{\theta}\cos{\phi},\quad \beta=\cos{\theta}\sin{\phi},
\quad \gamma=\sin{\theta}. 
\eeq
 The pairwise coherence has the form
\beq
\label{30}
{1\over p n}\sum_{k=1}^pe^{i\om(\alpha_k,\beta_k,\gamma_k)\cdot (\br-\br')}\sum_{j=1}^n e^{i\om(\tilde\alpha_j,\tilde\beta_j,\tilde\gamma_j)
\cdot(\br-\br')}
\eeq
where $(\alpha_k,\beta_k,\gamma_k), k=1,..., p$ and
$(\tilde\alpha_j,\tilde\beta_j,\tilde\gamma_j), j=1,..., n$ 
are independently and identically  distributed in the {\em unit} sphere according to $\pdfi(\theta,\phi)$ and $\pdfs(\theta,\phi)$, respectively. 

The main difference between  two and three dimensions
is in  evaluating the expectation of
$p^{-1}\sum_{k=1}^pe^{i\om(\alpha_k,\beta_k,\gamma_k)\cdot (\br-\br')}$ and $n^{-1}\sum_{j=1}^n e^{i\om(\tilde\alpha_j,\tilde\beta_j,\tilde\gamma_j)
\cdot(\br-\br')}$ which amounts to calculating
the integrals
\beq
\label{55-2}
\int^{\pi/2}_{-\pi/2}d\theta
\pdfii(\theta+\theta_0)\cos{\theta} \exp{\lt[i\om  |\br-\br'|\sin\theta\rt]}\\
\int^{\pi/2}_{-\pi/2}d\theta
\pdfsi(\theta+\theta_0)\cos{\theta} \exp{\lt[i\om  |\br-\br'|\sin\theta\rt]}\label{55-3}
\eeq
for some $\theta_0$ depending on $\br-\br'$
where $\pdfii$ and $\pdfsi$ are the marginal density functions
\beqn
\pdfii(\theta)&=&\int^\pi_{-\pi}d\phi \pdfi(\theta,\phi)\\
\pdfsi(\theta)&=&\int^\pi_{-\pi}d\phi \pdfs(\theta,\phi)
\eeqn

If  $\pdfii=\pdfsi=1/\pi$,
the integrals (\ref{55-2}) and (\ref{55-3})  become
\beqn
 {2\sin{\lt(\om |\br-\br'|\rt)}\over \om |\br-\br'|}
 =\cO({1\over \om\ell}),\quad \om \ell\gg 1. 
\eeqn
For the general case, integrating by parts with (\ref{55-2}) 
and (\ref{55-3}) 
produces 
\beq
{i\over \om |\br-\br'|}\lt[\pdfii(\theta+\theta_0)e^{i\om|\br-\br'|\sin\theta}\Big|^{\pi/2}_{-\pi/2}-\int^{\pi/2}_{-\pi/2}
e^{i\om|\br-\br'|\sin\theta}{d\over d\theta} \pdfii(\theta+\theta_0)d\theta\rt] \label{56-3}\\
{i\over \om |\br-\br'|}\lt[\pdfsi(\theta+\theta_0)e^{i\om|\br-\br'|\sin\theta}\Big|^{\pi/2}_{-\pi/2}-\int^{\pi/2}_{-\pi/2}
e^{i\om|\br-\br'|\sin\theta}{d\over d\theta} \pdfsi(\theta+\theta_0)d\theta\rt] \label{56-4}
\eeq
from which we obtain the bound
\beq
\label{57-1}
\lt|\int^{\pi/2}_{-\pi/2}d\theta
\pdfii(\theta+\theta_0)\cos{\theta} \exp{\lt[i\om  |\br-\br'|\sin\theta\rt]}\rt|\leq {c\over 1+\om \ell}\|\pdfi\|_{1,\infty}\\\
\lt|\int^{\pi/2}_{-\pi/2}d\theta
\pdfsi(\theta+\theta_0)\cos{\theta} \exp{\lt[i\om  |\br-\br'|\sin\theta\rt]}\rt| \leq {c\over 1+\om \ell}\|\pdfs\|_{1,\infty}. \label{57-2}
\eeq
\commentout{
If a Blind Spot is present in the sensor ensemble 
then there exists $\theta_0$ corresponding
to some $\br, \br'\in \cL$ such that 
\beqn
\label{57-3}
{d\over d\theta} \pdfii(\pm {\pi\over 2}+\theta_0)\neq 0.
\eeqn
As a consequence, (\ref{56-3}) is $\cO((\om\ell)^{-1})$. 

On the other hand, if no Blind Spot is present
in the sensor ensemble and
\[
{d\over d\theta} \pdfii(\theta+\theta_0)
=o\lt((\theta\pm \pi/2)^k\rt), \quad\forall k\in \IN
\]
for all $\br,\br'\in\cL$, then (\ref{56-3}) vanishes
faster than any negative power of $\om\ell\gg 1$
by repeated integration by parts or application
of Theorem 13.2, Chapter 3, \cite{Olv}. 
}

\section{Proof of Theorem \ref{thm7}}
\subsection{Two dimensional case}
The pairwise coherence is a ratio of the numerator
\beq
\label{1.20}
{1\over n}\lt|\sum_{j=1}^n G^*(\ba_j-\br)
G(\ba_j-\br')\rt| 
\eeq
and the denominator
\beq
\label{1.202}
{1\over n}\lt(\sum_{j=1}^n |G(\ba_j-\br)|^2\rt)^{1/2}
\cdot \lt(\sum_{j=1}^n |G(\ba_j-\br')|^2\rt)^{1/2}
\eeq
with $\br=(x,z),\br'=(x',z')$ being two distinct elements of $\{\br_l,l=1,...,m\}$. 
Note that $G^*(\ba_j-\br)
G(\ba_j-\br')$ are i.i.d. random variables since
$\ba_j$ are. 

To analyze the summation in (\ref{1.20})  we follow
the argument of the proof of Theorem \ref{thm4}. 
Define the random variables $P_j, Q_j, j=1,...,n$
to be the real and imaginary parts, respectively,
of $G^*(\ba_j-\br)
G(\ba_j-\br')$. We apply the Hoeffding inequality to both $S_n=\sum^n_{j=1} P_j$ and $T_n=\sum^n_{j=1} Q_j$. 

Note that the ranges $r_0=b_j-a_j$  of $P_j$ and $Q_j$  
depends on the minimum distance $\Delta_{\rm min}$ between
$\{z=0\}$ and the square lattice (Figure \ref{fig-near}).  Indeed for small distance $\Delta_{\rm min}\ll 1$,  $r_0=O(-\log{\Delta_{\rm min} })$. We have 
\beq
\label{1045} &&{\IP\lt[\max_{i\neq j}{1\over n}\lt|\sum_{j=1}^n
\lt[G^*(\ba_j-\br)
G(\ba_j-\br') -
\IE \lt(G^*(\ba_j-\br)
G(\ba_j-\br')\rt)\rt]\rt|<\sqrt{2}K/\sqrt{n}\rt]}\\
& &>\lt(1-\delta\rt)^2\nn
\eeq
if (\ref{m}) holds.

Using (\ref{somm}) we have the following calculation
for the expectation: 
\beq
\nn\lefteqn{\IE\lt\{ G^*(\ba_j-\br)
G(\ba_j-\br')\rt\}}\\
\nn &=& {1\over 16 \pi^2}
\int {d\alpha\over \gamma^*}{d\alpha'\over \gamma'}
\IE\lt\{e^{-i\om (z\gamma^*+(\xi_j-x)\alpha)}
e^{i\om(z' \gamma'+(\xi_j-x')\alpha')}\rt\}\\
&=& {1\over 16 \pi^2 }
\int {d\alpha\over \gamma^*}{d\alpha'\over \gamma'}
e^{i\om(-z\gamma^*+z'\gamma')}
e^{i\om(\alpha x -\alpha' x')}
{2\sin{\lt((\alpha'-\alpha)\om L/2\rt)}\over
\om L(\alpha'-\alpha)}.\label{1900}
\eeq
We divide the integration (\ref{1900}) into three parts:
$I_i, i=1,2,3$, the integrals over,
respectively,
\beqn
R_1&=&\{ |\alpha| >\sqrt{2}\}\cup\{ |\alpha'| >\sqrt{2}\}, \\
R_2&=& R_1^c
\cap\{ |\alpha-\alpha'|<H/(\om L)\},\\
R_3&=&R_1^c\cap \{ |\alpha-\alpha'|>H/(\om L)\}
\eeqn
  where $H\ll \om L$ will  be determined later.  

To estimate $I_1$, note that
\[
\lim_{\ep \to 0} {2\sin{\lt((\alpha'-\alpha)/\ep\rt)}\over
\alpha'-\alpha} = \delta(\alpha'-\alpha)\quad\mbox{in the sense
of distribution}. 
\]
Consequently $\om L \cdot |I_1|$ tends to
\beqn
 {1\over 16 \pi^2 }\lt|\int_{|\alpha|>\sqrt{2}}  {d\alpha\over |\gamma|^2}
e^{-\om |z+z'||\gamma|}
e^{i\om\alpha (x- x')}\rt|&\leq &
 {e^{-2\om \Delta_{\rm min}}\over 16 \pi^2 }\lt|\int_{|\alpha|>\sqrt{2}}  {d\alpha\over |\gamma|^2}\rt|,\quad \om L\to \infty, 
\eeqn
and hence is  bounded  uniformly in  $\om \Delta_{\rm min}>0$.  

For $I_2$ and $I_3$  we have the rough estimates
\beqn
|I_2|&\leq &{1\over 16 \pi^2}\int_{R_2} {d\alpha d\alpha'\over |\gamma\gamma'|}=\cO\lt({H\over \om L}\rt)\\
|I_3|&\leq &{1\over 16 \pi^2}\int_{R_3} {d\alpha d\alpha'\over |\gamma\gamma'|}{1\over H} =\cO\lt({1\over H}\rt)
\eeqn
where the singularities are inverse-square-root and hence  integrable. 
By choosing $H=\sqrt{\om L}$ we obtain 
\beq
\label{1042}
\lt|{\IE\lt\{ G^*(\ba_j-\br)
G(\ba_j-\br')\rt\}}\rt|
\leq {c_1\over \sqrt{\om L}},\quad \om L\gg 1
\eeq
for some constant $c_1>0$. 

A simple lower bound for the expression (\ref{1.202}) 
is given by
\beq
\label{1041}
\min_{\ba, \br} \lt|G(\ba-\br)\rt|^2
\eeq
where $\ba$ runs through the entire square aperture
of length $L$ and $\br$ runs through the entire square
domain of length $\ell\sqrt{m}$. Since $\lt|G\rt|$ is
a decreasing function of the absolute value of its argument,
(\ref{1041}) is achieved by the pair of points furthest away
from each other.    In the assumed 
imaging geometry, Figure \ref{fig-near},  the largest distance $\Delta_{\rm max}$  between any
pair of points is given by (\ref{1046}). 
Combining (\ref{1045}, (\ref{1042}) and (\ref{1041}) we obtain that (\ref{1043}) as claimed. 

\subsection{Three dimensional case}
We now need to estimate (\ref{1.20}) and (\ref{1.202})
with $\br=(x,y, z),\br'=(x',y', z')$ being two distinct elements of $\{\br_l,l=1,...,m\}$. We proceed as before. 

Define the random variables $P_j, Q_j, j=1,...,n$
to be the real and imaginary parts, respectively,
of $G^*(\ba_j-\br)
G(\ba_j-\br')$. 
In three dimensions,  the ranges $r_0$ of $P_j$ and $Q_j$  are 
$\cO(\Delta_{\rm min}^{-1})$. 

Using (\ref{weyl}) we have the following calculation:
\beqn
\lefteqn{{1\over \om^2} \IE\lt\{ G^*(\ba_j-\br)
G(\ba_j-\br')\rt\}}\\
\nn &=& {1\over 64 \pi^4}
\int {d\alpha d\beta \over \gamma^*}{d\alpha' d\beta' \over \gamma'}
\IE\lt\{e^{-i\om (z\gamma^*+(\xi_j-x)\alpha+(\eta_j-y)\beta)}
e^{i\om(z'\gamma'+(\xi_j-x')\alpha'+(\eta_j-y')\beta')}\rt\}\\
&=& {1\over 64 \pi^4 }
\int {d\alpha d\beta\over \gamma^*}{d\alpha' d\beta'\over \gamma'}
e^{i\om(-z\gamma^*+z'\gamma')}
e^{i\om(\alpha x-\alpha' x')}e^{i\om(\beta y-\beta' y')}\nn\\
&&\cdot {2\sin{\lt((\alpha'-\alpha)\om L/2\rt)}\over
\om L(\alpha'-\alpha)}\cdot 
{2\sin{\lt((\beta'-\beta)\om L/2\rt)}\over
\om L(\beta'-\beta)}. \nn
\eeqn
We divide the integration into several parts as follows.
Denote
\beqn
A_1&=&\{ |\alpha| >1\}\cup\{ |\alpha'|>1\},\\
A_2&=& \{|\alpha-\alpha'| >H/(\om L)\},\\
B_1&=& \{ |\beta|>1\}\cup
\{ |\beta'|>1\},\\
B_2&=&\{ |\beta-\beta'| >H/(\om L)\}. 
\eeqn
Let 
$I_i, i=1,2,3,5,6,7$, be the integrals over, respectively,  $
R_1=A_1\cap B_1, R_2=
A_1^c\cap B_1, R_3=A_1\cap B_1^c, R_4=A_1^c\cap B_1^c.$

$I_1$ can be estimated straightforwardly by 
\beqn
I_1&\leq &{c\over 64 \pi^4 \om^2 L^2 }
\int_{R_1}e^{-\om\gamma (z+z')} {d\alpha d\beta\over |\gamma|^2}
\eeqn
for a constant $c$ and is bounded uniformly in $\om >1$. 

$I_2$ can be further decomposed into two parts
parts $I_{2i}, i=1,2$ corresponding, respectively,  to integration over
$
R_{21}=A_1^c\cap B_1\cap A_2,
 R_{22}=
A_1^c\cap B_1\cap A^c_2 $
  and estimated by 
\beqn
I_2&=&I_{21}+I_{22}\leq {c\over 64 \pi^4 }\int {d\alpha d\beta d\alpha' d\beta'\over |\gamma \gamma'|}
\lt({1\over H}+{H\over \om L}\rt) \cdot{1\over \om L}
\eeqn
for a constant $c$. 

$I_3$ can be further decomposed into two 
parts $I_{3i}, i=1,2,$ corresponding, respectively,  to integration over
$
 R_{31}=A_1\cap B^c_1\cap B_2^c,
  R_{32}=A_1\cap B^c_1\cap B_2
 $
  and estimated by 
  \beqn
I_3&=&I_{31}+I_{32}\leq {c\over 64 \pi^4 }\int {d\alpha d\beta d\alpha' d\beta'\over |\gamma \gamma'|}
\lt({1\over H}+{H\over \om L}\rt) \cdot{1\over \om L}
\eeqn
for a constant $c$. 

$I_4$ can be decomposed into four parts $I_{4i}, i=1,2,3,4$ corresponding, respectively, to integration over
\beqn
R_{41}&=&A_1^c\cap A_2\cap  B^c_1\cap
B_2,\\
R_{42}&=& A^c_1\cap A_2\cap  B_1^c
\cap
B_2^c,\\
R_{43}&=& A_1^c\cap A^c_2\cap  B^c_1\cap
B_2,\\
R_{44}&=& A^c_1\cap A^c_2\cap  B_1^c
\cap
B_2^c,
\eeqn
and estimated by
\beqn
I_4&=&\sum_{i=1}^4I_{4i}\leq
{c\over 64 \pi^4 }\int {d\alpha d\beta d\alpha' d\beta'\over |\gamma \gamma'|}\lt({1\over H}\cdot{1\over H}
+{1\over H}\cdot {H\over \om L}+{H\over \om L}\cdot
{1\over H}+{H\over \om L}\cdot {H\over \om L}\rt)
\eeqn
for some constant $c$. 

\commentout{
\beqn
R_1&=&A_1\cap B_1,\\
R_2&=&\lt((A_1^c\cap B_1)\cup (A_1\cap B_1^c)\rt)
\cap\lt( \lt(A_2\cap
B_2^c \rt)\cup
\lt(A_2^c\cap
B_2\rt)\rt),\\
R_3&=& \lt((A_1^c\cap B_1)\cup (A_1\cap B_1^c)\rt)
\cap A^c_2\cap
B_2^c,\\
R_4&=&\lt((A_1^c\cap B_1)\cup (A_1\cap B_1^c)\rt)
\cap A_2\cap B_2,\\
 R_5&=&A_1^c\cap B_1^c \cap\lt( \lt(A_2\cap
B_2^c \rt)\cup
\lt(A_2^c\cap
B_2\rt)\rt),\\
R_6&=& A_1^c\cap B_1^c \cap A^c_2\cap
B_2^c,\\
R_7&=& A_1^c\cap B_1^c\cap A_2\cap B_2.
  \eeqn
  }

Setting $H=(\om L)^{1/2}$ we obtain
\beqn
{1\over \om^2} \IE\lt\{ G^*(\ba_j-\br)
G(\ba_j-\br')\rt\}
\leq { c_2\over \om L},\quad \om L \gg 1
\eeqn
for some constant $c_2>0$. 

Using (\ref{1041}) for the lower bound of (\ref{1.202})
we conclude (\ref{1044})  in Theorem \ref{thm7}.

\section{Conclusion}
We have analyzed the SIMO/MISO and MIMO inverse scattering  problems by 
compressed sensing theory to  shed new light on this  problem with distinguished history \cite{CCM, LP, Mel, RS}.  
We have obtained  several main results: Theorem \ref{thm1}
concerns the recoverability by $L^1$ minimization  with {the MIMO 
measurement  under the Born approximation, 
Theorem \ref{thm:bi} addresses the recoverability by 
$L^1$ minimization 
with the SIMO/MISO measurement including multiple scattering  and Theorem \ref{thm5}
asserts stability to measurement or model errors 
for weak or widely separated scatterers under 
a stronger sparsity constraint. We have also
analyzed the diffraction tomography with 
few views and limited angles and proved
the coherence bound (Theorem \ref{thm7}) analogous 
to Theorem \ref{thm1}. 

\commentout{
Three  issues   in the analysis merit 
further study:
 the notion of Blind Spots that may slow down the approach to
the high frequency asymptotic (Theorem \ref{thm1} and \ref{thm:bi}), the notion  of resonance  frequency 
and the notion of self-induced shadow which
can prevent the exact recovery  of the target support
 (Theorem \ref{thm:bi}). 
 }

A main limitation to our approach  is the assumption of  finite
(albeit high) 
dimensional targets. 
 Also, the reconstruction succeeds
only probabilistically. These limitations  are intrinsic to the current formulation
of the compressed sensing theory which is still evolving and in this regard  the present paper is only a first  step in the new direction. 

On the other hand, the compressed sensing  approach  is constructive and treats
the uniqueness and the reconstruction in a unified way.  
Indeed,  the main advantage of  this approach is an explicit and efficient
method (i.e. the Basis Pursuit with the SIMO/MIMO or MIMO  sensing matrix)  for reconstructing the scatterers 
from the scattering amplitude. Moverover,  the aperture can be
rather arbitrary   and   the dimension 
of  measurement  can be as low  as comparable to 
the target sparsity (up to a $\log{(m)}$-factor). 

It may be worthwhile to compare our imaging method
with  
 the MUSIC algorithm 
  which employs multiple sensors to collect 
 the $n\times n$ multistatic response data matrix
 where $n$ is the number of transmitters/receivers  \cite{Che, The}. 
 When the measurement is carried out in the far field, the $(l,j)$-entry of the response  matrix 
 is the measured scattering amplitude 
 for the sampling direction $l$  and the incident direction $j$. 
 It is not known if MUSIC can recover the target support exactly 
 for nonlinear inverse scattering.  Only the case for the Born approximation  has been shown capable of exact recovery of
 the target {\em support} in the absence of noise \cite{Kir}
(see
the corrected argument in Theorem 4.1, \cite{KG}). 
And the estimate for the required dimension of the measurement
for the  exact  recovery is hardly optimal. 
 This result should be compared to Theorem \ref{thm1} with
 $p=n$ and $\theta_j=-\tilde\theta_j, j=1,...,n$, in particular
 the sparsity constraint (\ref{spark}) for compressed sensing versus the necessary condition $n>s$  for MUSIC.   This 
 represents a significant reduction in the number of sensors
 when the  sparsity of the target vector is large. 

In a separate paper \cite{siso-cs} we propose novel  multi-shot  single-input-single-output (SISO) 
compressive  imaging methods and demonstrate
their superior performances including 
the capability of imaging {\em extended}  targets. 
We also present in \cite{siso-cs} numerical comparative  study  of the respective performances
of the  SIMO/MIMO  and multi-shot SISO schemes. 


\begin{appendix}
\section{Input-output reciprocity}\label{recipro}

More generally, consider the Helmholtz equation with a source
\[
\Delta u(\br)+\om^2 (1+\nu(\br)) u(\br)= - f(\br)
\]
which can be solved by using the Green function $\cG$ as
\[
u(\br)=\int \cG(\br,\br') f(\br') d\br'.
\]
By slight abuse of
notation,  we shall write the solution as 
\[
u=\cG f
\]
where $\cG$ stands also for   the corresponding propagator. 

Because the incident wave is governed by
\[
\Delta u^{\rm i}(\br)+\om^2 u^{\rm i}(\br)= - f(\br)
\]
we can write
\beqn
u(\br) &=&-\cG \lt[\Delta u^{\rm i}+\om^2 u^{\rm i}\rt](\br)\\
&=& \cG \lt[-(\Delta+\om^2(1+\nu)) u^{\rm i}+\om^2 \nu u^{\rm i}\rt](\br)\\
&=& u^{\rm i}(\br)+ \om^2 \cG\lt[ \nu u^{\rm i}\rt](\br).
\eeqn
Hence  the scattered wave $u^{\rm s}=u-u^{\rm i}$ can be
expressed as
\beq
\label{exact''}
u^{\rm s}(\br)=\om^2\int \cG(\br,\br') \nu(\br')  u^{\rm i}(\br')d\br'
\eeq
which is the  reciprocal representation  to (\ref{exact'}). 

For point scatterers, (\ref{exact''}) becomes
\beq
\label{dual}
u^{\rm s}(\br)=\om^2\sum_{j=1}^m \nu_{j}\cG(\br,\br_{j}) 
u^{\rm i}(\br_{j}),\quad \br\neq \br_k,\,\,k=1,...,s.
\eeq
Substituting the Foldy-Lax equation 
\beq
\cG(\br,\br_j)=G(\br,\br_{j})+\om^2\sum_{k\neq j} \nu_{k} \cG(\br_{j},\br_{k}) G(\br,\br_{k}),\quad \br\neq \br_j,\,\,j=1,...,s
\label{fl3}
\eeq
in (\ref{dual}) we obtain
\beq
u^{\rm s}(\br)&=&\om^2\sum_{j=1}^m \nu_{j} u^{\rm i}(\br_{j}) G(\br,\br_{j}) +\om^4\sum_{j=1}^m\sum_{k\neq j}
\nu_{j} \nu_{k} \cG(\br_{j},\br_{k}) u^{\rm i}(\br_{j}) G(\br,\br_{j})\nn
\eeq
which can be rewritten as
\beq
u^{\rm s}(\br)&=&\sum_{j=1}^m \sum_{k=1}^m  \delta_{j,k} \om^2\nu_{k} u^{\rm i}(\br_{k}) G(\br,\br_{j}) +\om^4\sum_{j=1}^m\sum_{k\neq j}
\nu_{j} \nu_{k} \cG(\br_{j},\br_{k}) u^{\rm i}(\br_{k})G(\br,\br_{j})\nn\\
&=&\sum_{j=1}^m \sum_{k=1}^m \lt[ \delta_{j,k} \om^2\nu_{k} +(1-\delta_{j,k})\om^4\nu_{j} \nu_{k} \cG(\br_{j},\br_{k})\rt] u^{\rm i}(\br_{k})G(\br,\br_{j}) \label{dual2}
\eeq
(see \cite{MG, Mar3} for a similar, but slightly  erroneous, expression). 

To obtain the alternative expression for the scattering amplitude,
let $\br \to \infty$ in (\ref{dual2}) and extract the plane wave spectrum by (\ref{somm}) for $d=2$ or (\ref{weyl}) for $d=3$.
The scattering amplitude in the direction $\hat \br$ is given by
\beq
A(\hat \br, u^{\rm i})={1\over 4\pi}\sum_{j=1}^m \sum_{k=1}^m \lt[ \delta_{j,k} \om^2\nu_{k} +(1-\delta_{j,k})\om^4\nu_{j} \nu_{k} \cG(\br_{j},\br_{k})\rt] u^{\rm i}(\br_{k})e^{-i\om \hat\br\cdot \br_{j}} \label{sa-dual}
\eeq
where $u^{\rm i}$  is not necessarily  a plane wave. 

In the case of a plane wave incidence (\ref{inc}) we observe
the symmetry between the incident and scattered plane waves
in (\ref{sa-dual}). Therefore,  reversing and interchanging roles
of  the incident and scattered waves do not affect the scattering amplitude, i.e. $A(\hat\br,\bd)=A(-\bd,-\hat\br)$. 
This is the reciprocity referred to in Section \ref{simo-miso}.

In the case of a point sensor located at $\br_0$ and an incident plane wave,  the measurement data is given by 
\beq
u^{\rm s}(\br_0)
&=&\sum_{j=1}^m \sum_{k=1}^m \lt[ \delta_{j,k} \om^2\nu_{k} +(1-\delta_{j,k})\om^4\nu_{i_j} \nu_{k} \cG(\br_{j},\br_{k})\rt] e^{i\om \bd\cdot \br_{k}}G(\br_0,\br_{j}) \label{dual3}
\eeq
whose right hand side  can also be interpreted as
the scattering amplitude in the direction $-\bd$ when
 a point source is placed at $\br_0$, i.e.
 $A(-\bd, u^{\rm i})$ with $u^{\rm i}(\br)=G(\br,\br_0)$. 
This is the reciprocity referred to in Section \ref{sec:near2}. 

\commentout{

\section{Input-output reciprocity}\label{recipro}

More generally, consider the Helmholtz equation with a source
\[
\Delta u(\br)+\om^2 (1+\nu(\br)) u(\br)= - f(\br)
\]
which can be solved formally as
\[
u=\cG f
\]
by using  the propagator $\cG$. By slight abuse of
notation, we denote the Green function for the propagator still by $\cG$, i.e. 
\[
u(\br)=\int \cG(\br,\br') f(\br') d\br'.
\]
Because the incident wave is governed by
\[
\Delta u^{\rm i}(\br)+\om^2 u^{\rm i}(\br)= - f(\br)
\]
we can write
\beqn
u(\br) &=&-\cG \lt[\Delta u^{\rm i}+\om^2 u^{\rm i}\rt](\br)\\
&=& \cG \lt[-(\Delta+\om^2(1+\nu)) u^{\rm i}+\om^2 \nu u^{\rm i}\rt](\br)\\
&=& u^{\rm i}(\br)+ \om^2 \cG\lt[ \nu u^{\rm i}\rt](\br).
\eeqn
Hence  the scattered wave $u^{\rm s}=u-u^{\rm i}$ can be
expressed as
\beq
\label{exact''}
u^{\rm s}(\br)=\om^2\int \cG(\br,\br') \nu(\br')  u^{\rm i}(\br')d\br'
\eeq
which is the  reciprocal representation  to (\ref{exact'}). 

For point scatterers, (\ref{exact''}) becomes
\beq
\label{dual}
u^{\rm s}(\br)=\om^2\sum_{j=1}^s \nu_{i_j}\cG(\br,\br_{i_j}) 
u^{\rm i}(\br_{i_j}).
\eeq
Substituting the Foldy-Lax equation 
\beq
\cG(\br,\br_j)=G(\br,\br_{i_j})+\om^2\sum_{k\neq j} \nu_{i_k} \cG(\br_{i_j},\br_{i_k}) G(\br,\br_{i_k})
\label{fl3}
\eeq
in (\ref{dual}) we obtain
\beq
u^{\rm s}(\br)&=&\om^2\sum_{j=1}^s \nu_{i_j} u^{\rm i}(\br_{i_j}) G(\br,\br_{i_j}) +\om^4\sum_{j=1}^s\sum_{k\neq j}
\nu_{i_j} \nu_{i_k} \cG(\br_{i_j},\br_{i_k}) u^{\rm i}(\br_{i_j}) G(\br,\br_{i_j})\nn
\eeq
which can be rewritten as
\beq
u^{\rm s}(\br)&=&\sum_{j=1}^s \sum_{k=1}^s  \delta_{j,k} \om^2\nu_{i_k} u^{\rm i}(\br_{i_k}) G(\br,\br_{i_j}) +\om^4\sum_{j=1}^s\sum_{k\neq j}
\nu_{i_j} \nu_{i_k} \cG(\br_{i_j},\br_{i_k}) u^{\rm i}(\br_{i_k})G(\br,\br_{i_j})\nn\\
&=&\sum_{j=1}^s \sum_{k=1}^s \lt[ \delta_{j,k} \om^2\nu_{i_k} +(1-\delta_{j,k})\om^4\nu_{i_j} \nu_{i_k} \cG(\br_{i_j},\br_{i_k})\rt] u^{\rm i}(\br_{i_k})G(\br,\br_{i_j}) \label{dual2}
\eeq
(see \cite{MG, Mar3} for a similar, but slightly  erroneous, expression). 

To obtain the alternative expression for the scattering amplitude,
let $\br \to \infty$ in (\ref{dual2}) and extract the plane wave spectrum by (\ref{somm}) for $d=2$ or (\ref{weyl}) for $d=3$.
We have for the scattering amplitude in the direction $\hat \br$
\beq
A(\hat \br, u^{\rm i})={1\over 4\pi}\sum_{j=1}^s \sum_{k=1}^s \lt[ \delta_{j,k} \om^2\nu_{i_k} +(1-\delta_{j,k})\om^4\nu_{i_j} \nu_{i_k} \cG(\br_{i_j},\br_{i_k})\rt] u^{\rm i}(\br_{i_k})e^{-i\om \hat\br\cdot \br_{i_j}} \label{sa-dual}
\eeq
where $u^{\rm i}$  is not necessarily  a plane wave. 

In the case of a plane wave incidence (\ref{inc}) we observe
the symmetry between the incident and scattered plane waves
in (\ref{sa-dual}). Therefore,  reversing and interchanging roles
of  the incident and scattered waves do not affect the scattering amplitude, i.e. $A(\hat\br,\bd)=A(-\bd,-\hat\br)$. 
This is the reciprocity referred to in Section \ref{simo-miso}.

In the case of a point sensor located at $\br_0$ and an incident plane wave, we have for the measurement 
\beq
u^{\rm s}(\br_0)
&=&\sum_{j=1}^s \sum_{k=1}^s \lt[ \delta_{j,k} \om^2\nu_{i_k} +(1-\delta_{j,k})\om^4\nu_{i_j} \nu_{i_k} \cG(\br_{i_j},\br_{i_k})\rt] e^{i\om \bd\cdot \br_{i_k}}G(\br_0,\br_{i_j}) \label{dual3}
\eeq
which is unchanged by reversing the roles of input and output, i.e. having a point source at $\br_0$
and measuring the scattering amplitude in the direction $-\bd$.
This is the reciprocity referred to in Section \ref{sec:near2}. 
}
\end{appendix}

\end{document}